\newtheorem{theorem}{Theorem}
\newtheorem{lemma}[theorem]{Lemma}
\newcommand{\C}{\sf \textbf{C}}
\newcommand{\D}{\sf DIRTY}
\newcommand{\R}{\sf row}
\newcommand{\out}{\sf out}
\newcommand{\KTSLS}{\sf KT\mbox{-}SLS}
\newcommand{\Insertion}{\sf handle \mbox{-} Insert}
\newcommand{\Deletion}{\sf handle \mbox{-} Delete}
\newcommand{\OFFSET}{\sf compute\mbox{-}Offset}
\newcommand{\computemaxheightSLS}{\sf compute\mbox{-} max\mbox{-}SLS}
\newcommand{\getSLS}{\sf compute\mbox{-}SLS}
\newcommand{\updateEndPoints}{\sf update\mbox{-}Endpoints}
\journal{Theoretical Computer Science}
\begin{document}
\begin{frontmatter}

\title{Dynamic Data Structures for Interval Coloring \tnoteref{conf}}
\tnotetext[conf]{Preliminary version of this work appeared in $25^{th}$ International Computing and Combinatorics Conference(COCOON),pages 478-489, 2019}

\author[first]{Girish Raguvir J}
\ead{girishraguvir@gmail.com}
\author[first]{Manas Jyoti Kashyop}
\ead{manasjk@cse.iitm.ac.in}
\author[first]{N. S. Narayanaswamy}
\ead{swamy@cse.iitm.ac.in}
\address[first]{Department of Computer Science and Engineering, Indian Institute of Technology Madras, Chennai 600036, India}
\renewcommand{\labelitemi}{$\blacksquare$}

\begin{abstract}
% !TEX root = ../IntervalColoringFullVersion.tex
We consider the dynamic graph coloring problem restricted to the class of interval graphs in the incremental and fully dynamic setting. 
The input consists of a sequence of intervals that are to be either colored, or deleted, if previously colored.   For the incremental setting, we consider the well studied optimal online algorithm (KT-algorithm) for interval coloring due to Kierstead and Trotter  \cite{kierstead1981extremal}.  
%
%Let $n$ denotes the total number of intervals in the online update sequence, $\Delta$ denotes the maximum degree of a vertex and $\omega$ denotes the size of the maximum clique in the interval graph associated with these intervals.
%%Input to the problem is an online update sequence. The goal is to maintain a proper coloring of the intervals after every update with as few colors as possible while maintaining a small update time.\par
%In the online setting, KT-algorithm achieves the best competitive ratio. 
We  present the following results on the dynamic interval coloring problem.
%However, we show that a sub-quadratic time implementation of the KT-algorithm is unlikely to exist and present our first result which is a lower bound in the implementation of the KT-algorithm.
\begin{itemize}
\item Any {\em direct} implementation of the KT-algorithm requires $\Omega(\Delta^2)$ time per interval in the worst case.
% where $n$ is the total number of intervals in the online sequence presented to the algorithm and $\Delta$ is the maximum degree of a vertex in the interval graph formed by those intervals. 
\end{itemize}
%In the incremental setting, each update step presents the algorithm with an interval to be colored. We achieve the following result in the incremental setting.    
%We design an incremental algorithm that is subtly different from the KT-algorithm and uses at most $3 \omega - 2$ colors, where $\omega$ is the size of the maximum clique in the interval graph associated with the set of intervals.
\begin{itemize}
\item There exists an incremental algorithm which supports insertion of an interval in amortized $O(\log n + \Delta)$ time per update and maintains a proper coloring using at most $3 \omega - 2$ colors.
% where $n$ is the total number of intervals in the update sequence and $\Delta$ is the maximum degree of a vertex in the interval graph formed by those intervals.
\end{itemize}
%Next we turn our attention to the fully dynamic setting. 
%In the fully dynamic setting, at each update step the algorithm is presented with an interval to be colored, or a previously colored interval to delete. We achieve the following result in the fully dynamic setting. 
%On each update, our aim is to maintain a $3 \omega - 2$ coloring of the remaining set of intervals while maintaining the update time as small as possible, where $\omega$ is the size of the maximum clique in the interval graph associated with the remaining set of intervals. 
\begin{itemize}
\item There exists a fully dynamic algorithm which supports insertion of an interval in $O(\log n + \Delta \log \omega)$ update time and deletion of an interval in $O(\Delta^2 \log n)$ update time in the worst case and maintains a proper coloring using at most $3 \omega - 2$ colors.
% where $n$ is the total number of intervals inserted and $\Delta$ is the maximum degree of a vertex in the interval graph formed by those intervals.   
\end{itemize}
%Finally, we consider Online Boolean Matrix-Vector multiplication conjecture (OMv) which is extensively used to prove lower bound results in the dynamic setting\cite{DBLP:conf/stoc/HenzingerKNS15}. 
The KT-algorithm crucially uses  the maximum clique size in an induced subgraph in the neighborhood of a given vertex.  
We show that the problem of computing the induced subgraph among the neighbors of a given vertex has the same hardness as the online boolean matrix vector multiplication problem  \cite{DBLP:conf/stoc/HenzingerKNS15}. We show that 
\begin{itemize}
\item Any algorithm that computes the induced subgraph among the neighbors of a given vertex requires at least quadratic time unless the OMv conjecture \cite{DBLP:conf/stoc/HenzingerKNS15} is false.
\end{itemize}
Finally, we obtain the following result on the OMv conjecture.
\begin{itemize}
\item If the matrix and the vectors in the online sequence have the consecutive ones property, then the OMv conjecture \cite{DBLP:conf/stoc/HenzingerKNS15} is false.
\end{itemize}  
\end{abstract}
\begin{keyword}
Dynamic graph algorithms; Interval coloring; Lower bound.
\end{keyword}
\end{frontmatter}
% --------------------------------------------------------------------

% 
% !TEX root = ../IntervalColoringFullVersion.tex
\section{Introduction}
Maintenance of data structures for graphs in the dynamic setting has been extensively studied. In the dynamic setting, a graph has a fixed set of vertices whereas the edge set keeps evolving by means of edge updates. An edge update consists of either insertion of a new edge or deletion of an existing edge. A dynamic graph is thus a sequence of graphs, $\mathcal{G} = \{G_0, G_1,......, G_t\}$, where $t$ is the total number of edge updates, initial graph $G_0 = (V, \phi)$ is an empty graph and graph $G_i$ is obtained from $G_{i-1}$ by a single edge update. In our work, $G_i$ is an interval graph and an update consists of an interval to be inserted or deleted. Therefore, in our dynamic setting, a single update may insert or delete many edges in the underlying interval graph. This is different from the commonly studied case in the area of dynamic graph algorithms where on each edge update a single edge is inserted or deleted. 
\par 
The graph coloring problem is one of the most extensively studied problems. In the dynamic setting, graph coloring problem is  as follows:  there is an online sequence of edge updates and the goal is to maintain proper coloring after every update. Several works (\cite{DBLP:conf/COSSOM/DutotGOP},\cite{DBLP:conf/CCCA/OuerfelliB},\cite{DBLP:SallinenIPGRP} and \cite{DBLP:conf/HardyLT}) propose heuristic and experimental results on the dynamic graph coloring problem. To the best of our knowledge, the formal analysis of data structures for dynamic graph coloring have been done in \cite{DBLP:journals/corr/HenzingerConstanttime}, \cite{DBLP:journals/corr/SayanConstanttime},  \cite{DBLP:conf/esa/SolomonW18_coloring}, \cite{DBLP:conf/soda/BhattacharyaCHN18}, \cite{DBLP:conf/iccS/BarenboimM17}, and \cite{DBLP:conf/wads/BarbaCKLRRV17}.  %Bhattacharya et al. give the current best fully dynamic randomized algorithm which maintains $\Delta + 1$ vertex coloring in $O(\log \Delta)$ expected amortized update time \cite{DBLP:conf/soda/BhattacharyaCHN18}. They also give the current best deterministic algorithm which maintains $\Delta + o(\Delta)$ vertex coloring in $O(polylog \Delta)$ amortized update time \cite{DBLP:conf/soda/BhattacharyaCHN18}. \\
We continue the study of dynamic data structures for graph coloring. We focus on interval graphs in the incremental as well as in the fully dynamic setting. The online update sequence consists of intervals and our goal is to maintain a proper coloring of the intervals with as few colors as possible while maintaining a small update time. 
In the incremental setting, each update in the online update sequence consists of an interval to be colored. In the fully dynamic setting, each update in the online update sequence consists of either an interval to be colored or a previously colored interval to be deleted.\par 
%Thus, in our dynamic setting, a single update may insert or delete many edges in the underlying interval graph. This is different from the commonly studied case in the area of dynamic graph algorithms where on each edge update a single edge is inserted or deleted.\par   
In the incremental setting,  intervals in the update sequence are inserted one after the other and we aim to efficiently maintain a proper coloring of the intervals using as few colors as possible after every update. Our approach is to consider efficient implementations of well-studied online algorithms for interval coloring.  Online algorithms for interval coloring and variants is a rich area with many results \cite{DBLP:conf/icalp/EpsteinL05}.  Note that an online algorithm is not allowed to re-color a vertex during the execution of the algorithm. On the other hand, an incremental algorithm is not restricted in anyway during an update step except that we desire that the updates be done as efficiently as possible.  Naturally, an online interval coloring algorithm which is efficiently implementable is a good candidate for an incremental interval coloring algorithm as it only assigns a color to the current interval, and does not change the color of any of the other intervals.  
For the online interval coloring problem, Kierstead and Trotter presented a $3$ competitive algorithm (KT-algorithm) and they also proved that their result is tight \cite{kierstead1981extremal}. The tightness is proved by showing the existence of an adaptive adversary that forces an online algorithm to use $3 \omega - 2$ colors where $\omega$ is the maximum clique size in the interval graph formed by the given set of intervals. On the other hand, the KT-algorithm uses at most $3 \omega - 2$ colors. 
% Towards this aim, we explore the existing works on online algorithms for interval coloring.\par 
% Epstein et al. studied online graph coloring for interval graphs \cite{DBLP:conf/icalp/EpsteinL05}. They studied four variants of the problem: online interval coloring with bandwidth, online interval coloring without bandwidth, lazy online interval coloring with bandwidth, and lazy online interval coloring without bandwidth. For online interval coloring with bandwidth, Narayanaswamy presented an algorithm with competitive ratio 10 \cite{DBLP:conf/cocoon/Narayanaswamy04} and  Epstein et  al. showed a lower bound of $3.2609$ \cite{DBLP:conf/icalp/EpsteinL05}. For lazy online interval coloring with bandwidth and lazy online interval coloring without bandwidth, Epstein et al. proved that competitive ratio can be arbitrarily bad for any online algorithm \cite{DBLP:conf/icalp/EpsteinL05}.
%Therefore, KT-algorithm has the optimum competitive ratio.  
%In other words, The online algorithm (KT-algorithm) due to Kierstead and Trotter \cite{kierstead1981extremal} is known to have the optimum competitive ratio.   
\subsection{Our Results} 
Our goal is to design incremental and fully-dynamic algorithms for interval coloring.  Towards this, we study efficient implementations of the KT-algorithm. The KT-algorithm computes a coloring in which each color is a 2-tuple $(p(v),o(v))$, where $p(v)$ is the level value of $v$ and $o(v)$ is the offset of $v$.
In the incremental and fully-dynamic setting, we design efficient 3-approximation algorithms for interval coloring.  In the incremental case our results leave open the possibility of improving the number of colors used by sacrificing the constraint in online algorithms that an interval cannot be re-colored.  
%which maintain for maintaining a coloring of intervals which 
%We start with an aim to achieve an efficient implementation of KT-algorithm. To the best of our knowledge, no study has been reported about efficient implementation of the KT-algorithm.  
%The KT-algorithm computes a proper coloring by assigning to each vertex a color which is a 2-tuple denoted by $($level,offset$)$. The value of level is in the range $[0,\omega-1]$ and the value of offset is from the set $\{1,2,3\}$. Further, all the vertices whose level value is 0 form an independent set. Therefore, 
%An implementation of the KT-algorithm gives us an incremental interval coloring algorithm that maintains a proper coloring with at most $3 \omega - 2$ colors and without any recoloring. However, our result in Section~\ref{sec:KTalgorithm-hardness} deters us from obtaining such an algorithm.\par
%%%%%%%%%%%%%%%%%%%%%%%%%%%%%%%%%%%%%%%%%%%
We start by considering the efficiency of a {\em direct} implementation of the KT-algorithm.  A direct implementation uses a data structure that only maintains the intervals and responds to intersection queries by reporting the intervals which intersect a queried interval. We show the following result in Section~\ref{sec:KTalgorithm-hardness}. 
\begin{itemize}
\item Any direct implementation of the KT-algorithm requires $\Omega(\Delta^2)$ time per interval in the worst case, where $\Delta$ is the maximum degree of a vertex in the associated interval graph. (Theorem~\ref{thm:LowerBound-KT-algorithm}) 
\end{itemize}
We then show that a  comparison based data structure which supports the insertion of a new interval or  computes the number of intervals intersecting  a given interval requires $\Omega(\log n)$ comparisons for at least one of the operations (Lemma~\ref{lem:Lowerbound-DS-Interval}).  
%The reason for this is that the level value of an interval computed by the KT-algorithm depends on the size of the maximum clique in the graph induced by the intervals intersecting with it. In Section~\ref{sec:KTalgorithm-hardness}, we show that any algorithm computing the maximum clique size formed by a given set of intervals takes linear time (Lemma~\ref{lem:LowerBound-maximumClique}).  
%In the worst case, to compute the level value for a single interval, KT-algorithm repeats the computation of the maximum clique size linear number of times. 
%%%%%%%%%%%%%%%%%%%%%%%%%%%%%%%%%%%%%%%%%%%%%
In Section~\ref{subsec:SLS}, our next result is a different approach to compute the level value for an interval.   This approach avoids the lower bound for a direct implementation by maintaining additional information associated with the intervals that have been colored.  
% beeOur approach avoids the computation of the maximum clique size in the interval graph. Thus, we avoid a direct implementation of KT-algorithm in computing the level value for an interval. 
 While our approach, called Algorithm $\KTSLS$, uses the same number of colors as the KT-algorithm, we show that the level value for each interval computed by our approach is at most the level value computed by the KT-algorithm
( Lemma~\ref{lem1: max-height-equal-p(v)}).  We show an example where for an interval the level value computed by Algorithm $\KTSLS$  is smaller than the level value computed by the KT-algorithm.  We design an incremental interval coloring algorithm which implements Algorithm $\KTSLS$ in Section~\ref{sec: Incremental-Algorithm} and show that it uses at most $3 \omega -2$ colors.
\begin{itemize}
	\item There exists an incremental algorithm which supports insertion of an interval in amortized $O(\log n + \Delta)$ time per update, where $n$ is the total number of intervals in the update sequence and $\Delta$ is the maximum degree of a vertex in the interval graph formed by those intervals. (Theorem~\ref{thm:IncrementalAlgo})
\end{itemize}
%%%%%%%%%%%%%%%%%%%%%%%%%%%%%%
In Section~\ref{sec:fully-dynamic-algorithm}, in the fully dynamic setting, an interval that has already been colored can be deleted, apart from the insertions. At the end of each update, our aim is to maintain a $3 \omega - 2$ coloring of the remaining set of intervals, where $\omega$ is the maximum clique in the interval graph associated with the remaining set of intervals. In order to bound the number of colors to $3 \omega - 2$, deletion of an interval may trigger a change in the colors of some of the remaining intervals creating a set of \textit{dirty} intervals. Cleaning up of those dirty intervals may in turn create more dirty intervals resulting in a cascading effect. We design an approach to efficiently compute the set of such dirty intervals after a deletion. Thus, we present a fully dynamic algorithm for $3 \omega - 2$ interval coloring in the fully dynamic setting.
\begin{itemize}
	\item There exists a fully dynamic algorithm which supports insertion of an interval in $O(\log n + \Delta \log \omega)$ update time and deletion of an interval in $O(\Delta^2 \log n)$ update time in the worst case, where $n$ is the total number of intervals inserted and $\Delta$ is the maximum degree of a vertex in the interval graph formed by those intervals. (Theorem~\ref{thm : update-time-fully-dynamic})   
\end{itemize}
%%%%%%%%%%%%%%%%%%%%%%%%%%%%%%%%%%%%%%%%%%%
Our final contribution is motivated by the fact that the KT-algorithm computes the maximum clique size in an induced subgraph of the neighbors of the current interval.  In our attempt to design efficient data structures to report the neighborhood of a vertex we encountered a connection to the online boolean matrix vector multiplication problem and the related OMv conjecture, which is due to Henzinger et. al \cite{DBLP:conf/stoc/HenzingerKNS15}. 
We present a reduction in Section~\ref{sec:OMv} where we show the following result.
\begin{itemize}
\item Any algorithm that needs to compute induced subgraph among the neighbors of a given vertex requires at least quadratic time unless online boolean matrix vector multiplication conjecture is false.(Theorem~\ref{thm:InducedSubGraphviaOMv})
\end{itemize}	  
Finally, we use the well-known interval tree data structure to obtain the following result on online boolean matrix vector multiplication conjecture. 
\begin{itemize}
\item In the online boolean matrix vector multiplication problem, if the boolean matrix and the vectors in the online sequence have consecutive ones property then the OMv conjecture is false. (Theorem~\ref{thm:OMvConjConsOnes})
\end{itemize}

 %Finally, the question of significant interest to us is whether the dependence on $\Delta$ can be sub-linear in the incremental case and whether it can be sub-quadratic in the fully dynamic case.   Another interesting direction is the nature of the trade-off between the number of colors used and the update time if we allow  a change of color assigned to an interval.

\renewcommand{\labelitemi}{$\bullet$}
% !TEX root = ../IntervalColoringFullVersion.tex
\section{Kierstead-Trotter algorithm and Supporting Line Segment} 
\label{sec:prelims}
Let the set $\mathcal{I} = \{I_1, I_2, \dots, I_n\}$ denote a sequence of $n$ intervals, and let $G(\mathcal{I})$ denote the associated interval graph.   For $1 \leq j \leq n$, let $I_j = [l_j, r_j]$ where  $l_j$ and  $r_j$ represent the left and right endpoint of $I_j$, respectively. Let $\sigma = v_1,v_2,v_3,\dots,v_n$ be the ordering of vertices of interval graph $G = G(\mathcal{I})$ where vertex $v_j$ is the $j$-th vertex in $\sigma$ and it corresponds to the interval $I_j$ in $\mathcal{I}$. Let $\omega(G)$, $\Delta(G)$, and $\chi(G)$ denote the size of the maximum cardinality clique in G, the maximum degree of a vertex in $G$, and  the chromatic number of $G$, respectively. It is well-known that for interval graphs $\omega(G)=\chi(G)$. 
When the graph $G$ is clear, refer to these numbers as $\omega$,  $\Delta$, and $\chi$. 
\subsection{Kierstead-Trotter algorithm - overview}
\label{KTO}
The intervals in the sequence $\mathcal{I}$ are presented to the online KT-algorithm.  For $i \geq 0$, let $I_i$ be the interval presented and let $v_i$ be the corresponding vertex in  $\sigma$.  The KT-algorithm computes a color  based on the color given to the vertices $v_1, \ldots, v_{i-1}$.  The color assigned to a vertex $v$ is a tuple of two values and is denoted as $(p(v),o(v))$. $p(v)$ is called the level value, $o(v) \in \{1,2,3\}$ is called the offset, and $v$ is said to be in level $p(v)$.  $p(v)$ is computed in Step I and in Step II $o(v)$ is computed.  The key property is that for each edge $\{u,v\}$, the tuple $(p(u), o(u))$ is different from $(p(v), o(v))$. \\
\noindent
\textbf{Step I: } 
For $r \geq 0$, let $G_r(v_i)$ denote the induced subgraph of $G$ on the vertex set  $\{v_j | v_j \in V(G), j < i, p(v_j) \leq r,(v_i,v_j) \in E(G)\}$.  Define 
$p(v_i)$ = $\min\{r | \omega(G_r(v_i)) \leq r\}$.\\\\\\
 
\noindent
\textbf{Key Properties maintained by Step I \cite{kierstead1981extremal}: }
\begin{itemize}
	\item For each vertex $v_i$, $p(v_i)\leq \omega-1$.
	\item {\bf Property P :} The set $\{v | p(v) = 0\}$ is an independent set.  For each $i$, $1\leq l \leq \omega-1$,  the subgraph of $G$ induced on $\{v \mid p(v) = l\}$ has maximum degree at most 2.  
\end{itemize}
\textbf{Step II: }
%Since there are at most two neighbours of $v_i$ such that their level is $p(v_i)$,   
$o(v_i)$ is chosen to be the smallest value from the set $\{1,2,3\}$  which is different from the offset  of each of the at most two neighbors whose level is $p(v_i)$.  \\  
\noindent
{\bf Analysis:}
Since the vertices with level value $0$ form an independent set, the offset for all these vertices is $1$. Therefore, the color for all the vertices in level $0$ is $(0,1)$. By Property P, for each level $1 \leq l \leq \omega-1$,  the maximum degree in the graph induced by vertices in the level $l$ is $2$. Therefore, the algorithm uses at most $3$ colors, $(l,1)$, $(l,2)$, and $(l,3)$, to color the vertices in level $l$. Hence, total number of colors used by the algorithm is at most $1 + 3(\omega - 1)$ = $3\omega - 2$.
%%%%%%%%%%%%%%%%%%%%%%%%%%%%%%%%%%%%%%%%%%%%%%%%%%%%%%%%%%%%%%%%%%%%%%%%%
\subsection{Quadratic lower bound for computing the $\omega$ in the graph induced by a subset of neighbors of a vertex}
\label{sec:KTalgorithm-hardness}
We start by considering  implementations of the KT-algorithm in which the data structures are designed only to store the input intervals and support only intersection queries among intervals.  
We refer to such an implementation as a {\em direct implementation} and prove a lower bound on the running time of a direct implementation.  This lower bound motivates the additional data structures that are necessary to obtain an implementation  of the KT-algorithm with a better running time.
We start by observing a lower bound on the time to identify the size of a maximum clique in a given set of intervals.  
%%%%%%%%%%%%%%%%%%%%%%%%%%%%%%%%%%%%%%%%%%
\begin{lemma}
\label{lem:LowerBound-maximumClique}
A deterministic algorithm which computes the maximum clique size in the interval graph formed by a given set of $n$ intervals has running time $\Omega(n)$.
\end{lemma}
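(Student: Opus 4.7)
The plan is to use a simple adversary / indistinguishability argument. The idea is that any deterministic algorithm that outputs $\omega$ without reading every interval can be fooled by perturbing an unread interval so as to switch $\omega$ between $1$ and $2$.

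Concretely, I would first fix a convenient family of ``hard'' instances. Let $\mathcal{I}_0 = \{I_1,\ldots,I_n\}$ with $I_i = [2i,\,2i+1]$, which are pairwise disjoint and hence $\omega(G(\mathcal{I}_0)) = 1$. Suppose for contradiction that there is a deterministic algorithm $A$ whose running time on every instance of size $n$ is at most $T(n) = o(n)$. Run $A$ on the instance $\mathcal{I}_0$; since each elementary step of $A$ can inspect at most one interval of the input, the set $S \subseteq \{1,\ldots,n\}$ of indices whose interval $A$ actually probes satisfies $|S| \le T(n) < n$ for large enough $n$. Pick any $j \notin S$.

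Now build a second instance $\mathcal{I}_1$ by replacing $I_j$ with $I_j' = [2,\,2j+1]$ (any interval that overlaps at least one other member of $\mathcal{I}_0$ works, for example with $I_1 = [2,3]$), and keeping the remaining intervals unchanged. Then $\omega(G(\mathcal{I}_1)) \ge 2 \ne 1 = \omega(G(\mathcal{I}_0))$. However, because $A$ is deterministic and never inspects position $j$ during its execution on $\mathcal{I}_0$, its execution trace on $\mathcal{I}_1$ is identical to that on $\mathcal{I}_0$, and in particular it produces the same output on both inputs. This contradicts the correctness of $A$, so $T(n) = \Omega(n)$.

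The only step requiring a bit of care is making the model of computation explicit: we need that a ``step'' of the algorithm reveals information about at most one of the $n$ input intervals, so that after $T(n) < n$ steps some interval is genuinely unexamined. This is the standard convention (each interval is an input record accessed via a pointer or index, as in a comparison / pointer machine model), and the rest of the argument is a one-line indistinguishability claim. I do not anticipate any real obstacle; the proof is a short contradiction paragraph once the adversary pair $(\mathcal{I}_0,\mathcal{I}_1)$ is exhibited.
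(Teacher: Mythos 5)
Your proof is correct and uses essentially the same adversary/indistinguishability argument as the paper: both fix a hard instance, observe that a sub-linear deterministic algorithm must leave some interval unread, and then perturb that unread interval to change $\omega$ while keeping the algorithm's execution (and hence output) unchanged. The only cosmetic difference is the choice of hard pair -- the paper starts from $n$ intervals all containing the point $0$ (so $\omega$ drops from $n$ to $n-1$ when one interval is made disjoint), whereas you start from pairwise disjoint intervals (so $\omega$ rises from $1$ to $2$ when one is made to overlap).
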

\begin{proof}
The proof is by contradiction.
Let $\mathcal{B}$ be a deterministic algorithm such that on each  input consisting of a set of $n$ intervals,  it reports the maximum clique size in the corresponding interval graph  in $o(n)$ time. Due to this assumption, it follows that algorithm $\mathcal{B}$ does read the entire input on $n$ intervals.  Let us consider the execution of $\mathcal{B}$ on interval sequence $\mathcal{I}_1=\{[0,x_1], [0, x_2], \dots, [0, x_j], \dots,[0,x_n]\}$.  Let $1 \leq j \leq n$ be the index such that during the execution, $\mathcal{B}$ does not read the $j$-th interval.  Consider $\mathcal{I}_2$  obtained from $\mathcal{I}_1$  by replacing the $j$-th interval by an interval disjoint from all the other $n-1$ intervals.  Since the execution of $\mathcal{B}$ does not read the $j$-th interval, the output on both $\mathcal{I}_1$ and $\mathcal{I}_2$ 
will both be the same.  However, both the outputs cannot be correct since the maximum clique size for $\mathcal{I}_1$ is $n$ and for $\mathcal{I}_2$ is $n-1$.  Therefore, for a deterministic algorithm $\mathcal{B}$ to compute the maximum clique size on all inputs, it must read all the intervals in the input, and thus its running time is 
$\Omega(n)$. Hence the Lemma.
%
%Let us consider the two inputs  {\sf input1} = $\{[0,x_1], [0, x_2], \dots, [0, x_j], \dots,[0,x_n]\}$ and {\sf input2} = $\{[0,x_1], [0, x_2], \dots, [x_j, x_j], \dots,[0,x_n]\}$  where  for $1 \leq i \leq n$, $x_i$ is positive  for $1 \leq i \leq n$, $x_i$ is positive  and $x_j$ is not the minimum in $\{x_1, x_2, \dots, x_j,\dots, x_n\}$.   Now, if the same execution of $\mathcal{B}$ is performed on {\sf input1} and {\sf input2}, the output in both the cases would have been the same, because for an algorithm that does not read the $j$-th interval in the input, the remaining $n-1$ intervals in the input are the same.    
%Observe the execution of $\mathcal{B}$ on {\sf input1} = $\{[0,x_1], [0, x_2], \dots, [0, x_j], \dots,[0,x_n]\}$ and  where, For {\sf input1} the correct output is $n$ and for {\sf input2} the correct output is $n-1$. But $\mathcal{B}$ does not read the $j$-th interval in the input, that is interval $[0, x_j]$ in {\sf input1} and interval $[x_j, x_j]$ in {\sf input2}. As a consequence, the output of $\mathcal{B}$ for both {\sf input1} and {\sf input2} is same. %$n-1$. 
%This implies that if $\mathcal{B}$ is allowed $o(n)$ time for execution then $\mathcal{B}$ cannot distinguish between {\sf input1} and {\sf input2}. Therefore, our assumption that $\mathcal{B}$ outputs the correct result for any input of length $n$ in $o(n)$ time is incorrect. Further, $\mathcal{B}$ must read the entire input to distinguish between {\sf input1} and {\sf input2}. Hence, $\mathcal{B}$ takes $\Omega(n)$ time.           
\end{proof}

\noindent
Using Lemma~\ref{lem:LowerBound-maximumClique} we prove that a direct implementation of the KT-algorithm will have a $\Omega(\Delta^2)$ running time.
%
%In this section we show that we are unlikely to obtain a sub-quadratic time implementation of the KT-algorithm.\\ 
%{\bf Crucial step in the implementation of KT-algorithm:} 
%For a given vertex $v_i$, when the KT-algorithm is presented with the corresponding interval $I_i$, it computes the set of intervals intersecting with $I_i$ which have level value at most $r$. Then, it computes the size of the maximum clique in the interval graph formed by those intervals. KT-algorithm needs to repeat this computation for different values of $r$ starting from $0$ until obtaining the minimum value of $r$ for which the condition $\omega(G_r(v_i)) \leq r$ is true.\par   
%We first show that it is unlikely to obtain a sub-linear time algorithm which computes the cardinality of the maximum clique in the interval graph formed by a given set of intervals. 
%%%%%%%%%%%%%%%%%%%%%%%%%%%%%%%%%%%%%%%%%%%%%%%%%%%%%%%%%%%%%%%%%%%%%%%%%%%%%%%%%%%%%%%%%%%%%%%%%%%%%
\begin{theorem}
	\label{thm:LowerBound-KT-algorithm}
	A direct implementation of  the KT-algorithm has $\Omega(\Delta^2)$ running time where $\Delta$ is the maximum degree of a vertex in the associated interval graph.
\end{theorem}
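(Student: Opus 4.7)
The plan is to leverage Lemma~\ref{lem:LowerBound-maximumClique} together with an adversary construction forcing a direct implementation to perform quadratically many query-output operations when processing a single interval.

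First, I would construct an input so that after the KT-algorithm has processed the first $n-1$ intervals, the final interval $v_n$ has exactly $\Delta$ neighbors in the current interval graph and, moreover, computing $p(v_n)$ inherently requires determining $\omega$ of an induced subgraph of size $\Theta(\Delta)$ among those neighbors. A natural way is to spread the $\Delta$ neighbors of $v_n$ across $\Theta(\Delta)$ distinct levels assigned during the prior $n-1$ steps, so that $p(v_n)$ is large and the relevant $G_r(v_n)$ subgraphs are substantial.

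Second, I would invoke Lemma~\ref{lem:LowerBound-maximumClique} on the induced subgraph of interest: any correct computation of its maximum clique size must \emph{read} each of the $\Theta(\Delta)$ intervals involved. In a direct implementation, reading an interval can only occur as part of the output of an intersection query. A single intersection query on $v_n$ reports its $\Delta$ neighbors but does not reveal the pairwise intersection structure among those neighbors, which is precisely the information needed to evaluate $\omega(G_r(v_n))$.

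Third, I would argue that the algorithm must therefore issue additional intersection queries on the individual neighbors of $v_n$. Each such query can have output size up to $\Delta$ and thus cost $\Omega(\Delta)$ merely to enumerate its reported intervals; moreover, in the worst case $\Omega(\Delta)$ such queries are needed to uncover enough pairwise-intersection information to compute $\omega(G_r(v_n))$ for the relevant $r$. Combining these bounds yields a total cost of $\Omega(\Delta) \cdot \Omega(\Delta) = \Omega(\Delta^2)$ per insertion.

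The principal obstacle is establishing a clean adversary argument for the necessity of $\Omega(\Delta)$ neighbor queries. One would like to show that for any direct implementation making $o(\Delta)$ such queries, the responses seen so far remain consistent with two completed inputs whose values of $p(v_n)$ differ, so that the algorithm cannot distinguish them and must therefore err on at least one, contradicting correctness.
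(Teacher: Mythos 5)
Your proposal diverges from the paper's argument in a way that introduces a genuine gap, and the gap you acknowledge at the end is not the only problem.

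The paper's lower bound does not come from the cost of computing $\omega(G_r(v_i))$ for a \emph{single} value of $r$; it comes from the \emph{iteration} over $r = 0, 1, \ldots, p(v_i)$ built into Step~I of the KT-algorithm. Concretely, the paper reuses the instance $\mathcal{I}_1 = \{[0,x_1],\ldots,[0,x_n]\}$ from Lemma~\ref{lem:LowerBound-maximumClique}: all intervals pass through $0$, so they receive distinct level values $0,1,\ldots,n-2$, and when $I_n$ arrives the algorithm must test $r = 0, 1, \ldots, n-1$ in turn. For each such $r$, Lemma~\ref{lem:LowerBound-maximumClique} (a read-all-the-input adversary bound) forces $\Omega(|V(G_r(v_n))|) = \Omega(\min(r+1,\Delta))$ work just to compute the clique size in $G_r(v_n)$. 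Summing over $r$ gives $\Omega(\Delta^2)$ for the single interval $I_n$. This is the entire argument; no reasoning about the intersection-query interface beyond Lemma~\ref{lem:LowerBound-maximumClique} is needed.

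Your proposal instead tries to charge $\Omega(\Delta^2)$ to a single clique computation by arguing that one intersection query on $v_n$ ``does not reveal the pairwise intersection structure among those neighbors,'' forcing $\Omega(\Delta)$ follow-up queries of cost $\Omega(\Delta)$ each. This premise is false in the paper's model: the data structure responds to an intersection query ``by reporting the intervals which intersect a queried interval,'' i.e.\ it returns the intervals themselves together with their endpoints. Once the algorithm has the $\Delta$ neighboring intervals in hand, it can sort their endpoints and determine the full pairwise intersection structure (and any $\omega(G_r(v_n))$) locally in $O(\Delta \log \Delta)$ time with no further queries. So no adversary argument of the kind you sketch can succeed, and the ``principal obstacle'' you flag is actually unclosable with this strategy. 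The fix is to abandon the per-$r$ adversary over queries, take the step of iterating $r$ (which you mention in your first paragraph but do not exploit), and simply sum the Lemma~\ref{lem:LowerBound-maximumClique} bound over the $\Omega(\Delta)$ values of $r$ that the KT-algorithm necessarily visits on the input $\mathcal{I}_1$.
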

\begin{proof}
	%For a given interval $I_i$, KT-algorithm computes the color $(p(v_i), o(v_i))$ based on the adjacency information and the color given to the intervals $I_1, I_2, \dots, I_{i-1}$.  
	For each $i \geq 1$ and $r \geq 0$, to check if $p(v_i) = r$, the KT-algorithm computes the maximum clique size in the interval graph formed by the intervals with level value at most $r$ and intersecting with the input interval $I_i$. From Lemma~\ref{lem:LowerBound-maximumClique}, computing the size of the maximum clique takes $\Omega(\Delta)$ time.  The KT-algorithm repeats this computation for the values of $r$ starting from $0$ until the value for which $\omega(G_r(v_i)) \leq r$ is true. The worst case is reached for the input sequence $\mathcal{I}_1$  in Lemma~\ref{lem:LowerBound-maximumClique} for which the clique will be computed in the graphs $G_0(v_i), G_1(v_i), \ldots, G_{i}(v_i)$.  The running time of a direct implementation on $\mathcal{I}_1$ is $\Omega(\omega^2)$ which is $\Omega(\Delta^2)$.  Therefore, a direct implementation of the KT-algorithm takes $\Omega(\Delta^{2})$ time in the worst case. Hence the Theorem.
\end{proof}
%%%%%%%%%%%%%%%%%%%%%%%%%%%%%%%%%%%%%%%%%%%%%%
%We strengthen our result by showing that any data structure which maintains a set of intervals, supports insertion of a new interval, computes the number of intervals intersecting with a given interval (intersection query), and uses comparison operation among the previously inserted intervals before inserting a new interval and responding to an intersection query requires $\Omega(\log n)$ comparisons either during insertion or during query.
\noindent
While our subsequent results show that we can maintain additional information to circumvent the lower bound faced by a direct implementation, we observe that any comparison based data structure to maintain the given set of intervals and respond to intersection queries uses $\Omega(\log n)$ comparisons.
%%%%%%%%%%%%%%%%%%%%%%%%%%%%%%%%%%%%
%\begin{lemma}
%\label{lem:Lowerbound-DS-Interval}
%Any data structure that uses comparison operation among the previously inserted intervals before inserting a new interval and responding to an intersection query takes $\Omega(\log n)$ comparisons either during insertion or during query, where $n$ is the number of intervals already inserted in the data structure.
%\end{lemma}
\begin{lemma}
\label{lem:Lowerbound-DS-Interval}
Let {\sf D} be a comparison based data structure which supports the following operations on an interval $[l,r]$:
	\begin{itemize}
		\item {\sf D.Insert([l, r])}: Inserts interval {\sf [l, r]} into {\sf D}.  
		\item {\sf D.Query([l, r])}: Returns the total number of intervals in {\sf D} which are intersecting with {\sf [l, r]}. 
	\end{itemize}
	Let $c_u$ be the number of comparisons performed during {\sf D.Insert([l, r])} before inserting [l, r]. Let $c_q$ be the number of comparisons performed during {\sf D.Query([l, r])} before responding to the query.  Then either $c_u$ or $c_q$ is $\Omega(\log n)$.
\end{lemma}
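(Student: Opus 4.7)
My plan is to reduce the comparison-based sorting of $n$ distinct real numbers to a sequence of $n$ Insert followed by $n$ Query operations on $\mathsf{D}$, and then invoke the classical $\Omega(n \log n)$ comparison-based sorting lower bound to conclude the claim.

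The reduction proceeds as follows. Given distinct input numbers $x_1, \ldots, x_n$, I first compute their maximum $M$ using $n-1$ comparisons and set $M' := M + 1$. I then insert the point intervals $\mathsf{D.Insert}([x_i, x_i])$ for $i = 1, \ldots, n$, after which $\mathsf{D}$ stores $n$ degenerate intervals. For each $i$, I call $\mathsf{D.Query}([x_i, M'])$; the returned value equals the number of stored points that are $\geq x_i$, which is $n + 1 - \mathrm{rank}(x_i)$. Collecting these $n$ ranks yields the sorted order with no further comparisons on input values.

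For the counting argument, suppose for contradiction that every Insert in this sequence uses $o(\log n)$ comparisons and every Query in this sequence uses $o(\log n)$ comparisons. Then the total number of comparisons performed during the sorting simulation is $(n-1) + n \cdot o(\log n) + n \cdot o(\log n) = o(n \log n)$. This would yield a comparison-based sorting algorithm for $n$ distinct elements using $o(n \log n)$ comparisons, contradicting the well-known $\Omega(n \log n)$ lower bound. Hence in the worst case at least one Insert or Query in the sequence must perform $\Omega(\log n)$ comparisons, which is precisely the claim that either $c_u$ or $c_q$ is $\Omega(\log n)$.

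The main thing I would want to verify carefully is that the bookkeeping is clean: every comparison performed during the sorting simulation must be attributable either to the $n-1$ comparisons used to compute $M$ or to a comparison performed inside a call to $\mathsf{D}$. Since $\mathsf{D}$ is comparison-based and the only ``unknown'' values it ever observes are the $x_i$'s and the derived sentinel $M'$ (which is obtained by arithmetic, not by further comparisons on inputs), this attribution is straightforward, and the rest of the argument is a direct appeal to the sorting lower bound.
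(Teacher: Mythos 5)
Your proof is correct and takes essentially the same approach as the paper: reduce comparison-based sorting to a sequence of $n$ inserts followed by $n$ rank-revealing queries, and then invoke the $\Omega(n\log n)$ sorting lower bound. The paper finds the minimum by linear search and queries $[\min, x_i]$, whereas you find the maximum and query $[x_i, M{+}1]$ — a cosmetic difference that does not change the argument.
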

\begin{proof}
	It is possible to use data structure {\sf D} to design a comparison based sorting algorithm which is defined as follows: Let $x_1, x_2, \dots, x_n$ be $n$ distinct numbers given as input for comparison sorting. For every $1 \leq i \leq n$, perform ${\sf D.Insert([x_i, x_i])}$.
	%data structure {\sf D} is used to perform an insert operation, ${\sf D.Insert([x_i, x_i])}$.   
	A linear search is performed on $x_1, x_2, \dots, x_n$ to find the minimum denoted by {\sf min}. Finally, to compute the sorted order, for each $1 \leq i \leq n$, perform ${\sf D.Query([min, x_i])}$.
	% For every $i$ in $[1, n]$, data structure {\sf D} is used to perform a query, ${\sf D.Query([min, x_i])}$.
%	If the query ${\sf D.Query([min, x_i])}$ returns $1$ then $x_i$ is the first element in the sorted order. In general, 
If  query ${\sf D.Query([min, x_i])}$ returns $j$, then $x_i$ is the $j$-{th} element in the sorted order.  Thus, the total number of comparisons required to find the sorted order is $n \cdot c_u$ + $n$ + $n \cdot c_q$.  It is well-known that any comparison sorting of $n$ numbers performs $\Omega(n \log n)$ comparisons \cite{Cormen:LB-Sorting}. Therefore, $n \cdot c_u$ + $n$ + $n \cdot c_q$  is $\Omega (n \log n)$. This implies that either $c_u$ = $\Omega(\log n)$ or $c_q = \Omega(\log n)$. Hence the Lemma.	
\end{proof}
%%%%%%%%%%%%%%%%%%%%%%%%%%%%%%%%%%%%%%%
%%%%%%%%%%%%%%%%%%%%%%%%%%%%%%%%%%%%%%%%%%%%%%%%%%%%%%%%%%%%%
\noindent
The result in Lemma~\ref{lem:Lowerbound-DS-Interval} shows that any interval tree based approach which maintains the input intervals and computes intersecting intervals will use $\Omega(\log n)$ comparisons.  
% if interval tree like data structure is used to maintain the set of intervals.
On the other hand, in Section ~\ref{subsec:SLS} we overcome the  lower bound presented in Theorem~\ref{thm:LowerBound-KT-algorithm} by maintaining additional information about the coloring computed by the KT-algorithm. This additional information plays a crucial role in an efficient data structure for the KT-algorithm.
%deters us from coming up with a direct implementation of the KT algorithm to obtain an incremental interval coloring algorithm. 
%In Section~\ref{subsec:SLS}, we present an alternate approach. 
%%%%%%%%%%%%%%%%%%%%%%%%%%%%%%%%%%%%%%%%%%%%%%%%%%%%%%%%%%%%%%%%%%%%%%%%%%   
%%%%%%%%%%%%%%%%%%%%%%%%%%%%
\subsection{Supporting Line Segment (SLS)-a geometric handle}
\label{subsec:SLS}
To overcome the limitation of computing a maximum clique in an induced subgraph, we maintain the size of some cliques, and use the structure of interval graphs to conclude that these cliques indeed represent a  maximum clique in the neighborhood of each interval.  The algorithm can be seen as an efficient version of the KT-algorithm and we refer to it as $\KTSLS$ (KT-algorithm using supporting line segment). 
%In this section we present an alternate approach to compute the level value for an interval which is at most the level value computed by the KT-algorithm. 
%Let $\mathcal{W}$ = $\{0,1,\dots,\omega-1\}$.   

For each $i \geq 1$,  $L(I_i) \in \{0,1,\dots,\omega-1\}$  and $p(v_i)$ denote the level value computed by $\KTSLS$ and the KT-algorithm, respectively.
Further, for each $i \geq 1$, $o(I_i)$ and $o(v_i)$ denote the offset computed by $\KTSLS$ and the KT-algorithm, respectively.
Let $t$ be a non-negative real number and $\mathcal{I}_t$ be the set of all intervals in $\mathcal{I}$ which contain the point $t$. For the set $\mathcal{I}_t$, define the set $levels(\mathcal{I}_t) = \displaystyle \bigcup_{I \in \mathcal{I}_t} \{L(I)\}$
%= $\{L(I_j) \in \{0,1,\dots,\omega-1\} | I_j \in \mathcal{I}_t \}$ 
to be the set of levels assigned to intervals in $\mathcal{I}_t$. Define $h_t$ = $\min(\{y \in \{0,1,\dots,\omega-1\} | y \notin levels(\mathcal{I}_t) \})$. In other words, $h_t$ is the smallest non-negative integer which is not the level value for an interval containing $t$.   For $h_t \geq 1$, the \textit{Supporting Line Segment}(SLS) at $t$ is defined to be the  
set $e_t$ = $\{(t,0) \dots (t,h_t-1)\}$, and  $h_t$ is called the \textit{height} of the SLS $e_t$.  Note that the set $\mathcal{I}_t$ is of size at least $h_t$, and there are $h_t$ intervals in $\mathcal{I}_t$ for which the level values are $0$ to $h_t-1$.  

\noindent
%\begin{definition}
%\label{def:SLS}	
%For $h_t \geq 1$, \textit{Supporting Line Segment}(SLS) at $t$ is defined to be the  
%set $e_t$ = $\{(t,0) \dots (t,h_t-1)\}$, and  $h_t$ is called the \textit{height} of the SLS $e_t$.
%\end{definition}
%\noindent
%%%%%%%%%%%%%%%%%%%%%%%%%%%%%%%%%%%%%%%%%%%%%%%%%%%%%%%%%%%%%%%%%%%%%%%%%%%%%%%%%%%%%%%%%%%%%%%%%%%%%%%%%%%%%%%%%%%%
\textbf{Algorithm $\KTSLS$:} For $i \geq 1$, the color $(L(I_i), o(I_i))$ for $I_i$ is computed as follows:
%For the $i$-th interval $I_i$ presented to $\KTSLS$, it computes the color $(L(I_i), o(I_i))$ as follows:
%%%%%%%%%%%%%%%%%%%%%%%%%%%%%%%%%%%%%%%%%
\begin{enumerate}
	\item {\bf Step I:} For each $t \in I_i$,  compute the height, $h_t$,  of the SLS $e_t$. Define $L(I_i) = \displaystyle \max_{t \in I_i} h_t$.\\
	%%%%%%%%%%%%%%%%%%%%%%%%%%%%%%%%%%
	\textbf{Key Properties maintained by Step I} 
	\begin{itemize}
		\item For each interval $I_i$, $L(I_i) \leq p(v_i)\leq \omega-1$(Lemma~\ref{lem1: max-height-equal-p(v)}).
		\item {\bf Property P :} The set $\{I | L(I) = 0\}$ is an independent set.  For each $i$, $1\leq i \leq \omega-1$,  the subgraph of $G$ induced on $\{I \mid L(I) = i\}$ has maximum degree at most 2 (Lemma~\ref{lem: Property-P}).  
	\end{itemize}
	 
	%%%%%%%%%%%%%%%%%%%%%%%%%%%%%%%%%%%%%%%%%
	\item {\bf Step II:} :
	Compute $o(I_i)$ to be the smallest value from the set $\{1,2,3\}$ that is different from the offset of the neighbours of $I_i$ which have the level $L(I_i)$. 
\end{enumerate}
%%%%%%%%%%%%%%%%%%%%%%%%%%%%%%%%%%%%%%%%%%%
\begin{figure}[htbp]	
	\flushleft	
	\includegraphics[scale=1]{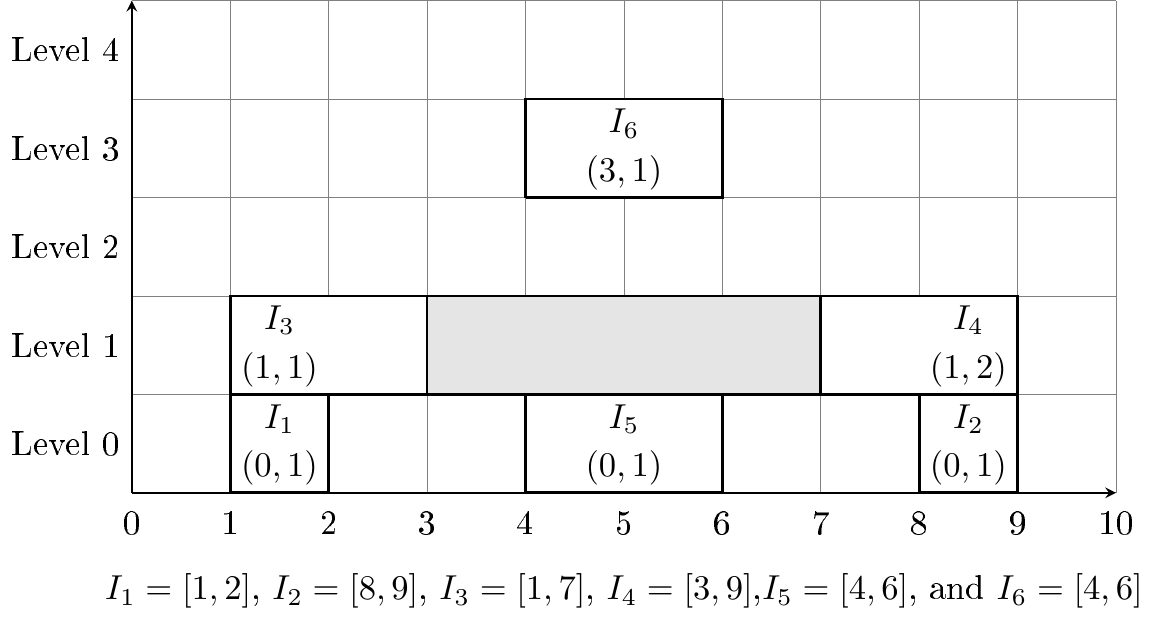}
	\caption{For $1 \leq i \leq 6$, color $(p(v_i), o(v_i))$ is shown along with the interval $I_i$. For $1 \leq i \leq 5$, $(p(v_i), o(v_i))$ = $(L(I_i), o(I_i))$.\\$~$\hspace{1.2cm} For $I_6$, $(L(I_6), o(I_6))$ = $(2,1)$, whereas $((p(v_6), o(v_6))$ = $(3,1)$.}
	\label{fig:example}	
\end{figure}
{\bf Remark:}
We show an example in Figure~\ref{fig:example}, where for interval $I_6$, the level value computed by $\KTSLS$  is strictly less than the level value computed by the KT-algorithm.
The intervals arrive in the following order : $I_1 = [1,2], I_2 = [8,9], I_3 = [1,7], I_4 = [3,9], I_5 = [4,6]$, and $I_6 = [4,6]$. 
Intervals $I_1$, $I_2$, and $I_5$ get level value $0$. KT-algorithm computes $p(v_6) = 3$ and $\KTSLS$ computes $L(I_6) = 2$. The portion colored as gray is the overlapping portion between interval $I_3$ and interval $I_4$.\\
%%%%%%%%%%%%%%%%%%%%%%%%%%%%%%%%%%%%%%%%%%%%%%%%%%%%%%%%%%%%%%%%%%%%%%%%%%%%%%%%%%%%%%%%
%%%%%%%%%%%%%%%%%%%%%%%%%%%%%%%%%%%%%%%%%%%%%%%%%%%%%%%%%%%%%%%%%%%%%%%%%%%%%%%%%%%
{\bf Correctness of Algorithm $\KTSLS$.}  We start by proving that the level value computed by Algorithm $\KTSLS$ depends only on the set of endpoints of the intervals.  This  finite set is denoted by $\mathcal{E}$ and we refer to the points in the set as endpoints.  
%The correctness follows from Lemma~\ref{lem:edge-point-in-all-intervals}, Lemma~\ref{lem1: max-height-equal-p(v)}, 
%Lemma~\ref{lem2: max-height-equal-p(v)}, 
%and Lemma~\ref{lem: Property-P}.\\ 
%%%%%%%%%%%%%%%%%%%%%%%%%%%%%%%%%%%%%%%%%%%
%We prove using Lemma~\ref{lem:edge-point-in-all-intervals}  
%that the $L(I_i)$ can be computed based on the height of the SLS at finite set of points in interval $I_i$. 
%In particular, we show that the finite set of points that we need to consider in interval $I_i$ is the set of endpoints of the intervals intersecting with $I_i$.
%%%%%%%%%%%%%%%%%%%%%%%%%%%%%%%%%%%%%%%%%%%%%%%%
\begin{lemma}
	\label{lem:edge-point-in-all-intervals}
	%Let $t$ be a non-negative real number and let $\mathcal{I}_t$ be the set of  intervals that contain $t$. Then there exists at least one endpoint in the set $\mathcal{E}$ which is  contained in  each interval in $\mathcal{I}_t$. Further, the height of the SLS at this endpoint is at least the height of the SLS at $t$.
	For each real number $t$, $\mathcal{E} \cap \displaystyle \bigcap_{I \in \mathcal{I}_t} I \neq \emptyset$.  Further, there is an endpoint $p \in \mathcal{E}$ such that $h_p \geq h_t$. 
\end{lemma}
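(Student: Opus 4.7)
The plan is to use the elementary Helly-style fact that a finite family of closed intervals on the line with a common point has a common intersection that is itself a closed interval whose two endpoints are among the original interval endpoints. Both parts of the lemma will drop out from this observation once we track which endpoint we pick.

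First I would show the set-theoretic statement. Since every $I \in \mathcal{I}_t$ contains $t$, the intersection $\bigcap_{I \in \mathcal{I}_t} I$ is non-empty; writing each $I = [l_I, r_I]$, this intersection equals $[L, R]$ where $L = \max_{I \in \mathcal{I}_t} l_I$ and $R = \min_{I \in \mathcal{I}_t} r_I$, and $L \leq t \leq R$. The maximum is attained by some interval in $\mathcal{I}_t$, so $L \in \mathcal{E}$; likewise $R \in \mathcal{E}$. Hence $L$ (equivalently $R$) witnesses $\mathcal{E} \cap \bigcap_{I \in \mathcal{I}_t} I \neq \emptyset$.

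For the second part I would take $p = L$ (the choice $p = R$ works symmetrically). Because $p$ lies in every $I \in \mathcal{I}_t$, we get the containment $\mathcal{I}_t \subseteq \mathcal{I}_p$: every interval containing $t$ also contains $p$. Monotonicity of the $levels(\cdot)$ operator then gives $levels(\mathcal{I}_t) \subseteq levels(\mathcal{I}_p)$. By the very definition of $h_t$, the integers $0, 1, \dots, h_t - 1$ all lie in $levels(\mathcal{I}_t)$, hence in $levels(\mathcal{I}_p)$, so the smallest integer missing from $levels(\mathcal{I}_p)$ is at least $h_t$. That is, $h_p \geq h_t$, and $p \in \mathcal{E}$ as required.

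I do not anticipate any real obstacle here; the statement is essentially a bookkeeping observation that the quantities $\mathcal{I}_t$ and $h_t$ change only at interval endpoints, together with the trivial fact that the intersection of a family of intervals sharing a common point is determined by the rightmost left endpoint and the leftmost right endpoint. The only mild subtlety is making sure we are precise that $L$ and $R$ belong to $\mathcal{E}$ (because they are attained by specific intervals in the family), which is exactly what pins down the endpoint promised by the lemma.
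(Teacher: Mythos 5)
Your proof is correct and follows essentially the same route as the paper's: identify $L = \max_{I \in \mathcal{I}_t} l_I$ and $R = \min_{I \in \mathcal{I}_t} r_I$ as endpoints lying in every interval of $\mathcal{I}_t$, deduce $\mathcal{I}_t \subseteq \mathcal{I}_p$, and conclude $h_p \geq h_t$. The only cosmetic difference is that you make the monotonicity of $levels(\cdot)$ explicit and dispense with the paper's trivial case split on whether $t$ is itself an endpoint.
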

\begin{proof}
	If $t$ is an endpoint of an interval, then $t \in \mathcal{E}$ and hence the Lemma is proved. Suppose $t$ is not an endpoint. Let $l_{t}$ denote the largest left endpoint among all the intervals in $\mathcal{I}_t$ and $r_{t}$ denote the smallest right endpoint among all the intervals in $\mathcal{I}_t$. By definition, $l_{t} \in \mathcal{E}$ and $r_{t} \in \mathcal{E}$. Since $\mathcal{I}_t$ is a set of intervals, it follows that $l_{t}$ and $r_{t}$ are present in all the intervals in $\mathcal{I}_t$. Further, since both $l_t$ and $r_t$ are present in each interval in $\mathcal{I}_t$, it follows that the set of intervals that contain them is a superset of $\mathcal{I}_t$.  Therefore, the height of the SLS at $l_t$ and $r_t$ is at least the height of the SLS at $t$.  Hence the Lemma.   
\end{proof}
\noindent
%%%%%%%%%%%%%%%%%%%%%%%%%%%%%
From the description in Section \ref{KTO}, we know that the level value of $I_i$ computed by the KT-algorithm is given by $p(v_i)$ = $\min\{r|\omega(G_{r}(v_{i})) \leq r\}$. Further in Algorithm $\KTSLS$, $L(I_i)$ is defined to be  the maximum height of the SLS at the endpoints contained in $I_i$.  We next prove that $L(I_i) \leq p(v_i) \leq \omega -1$.
%%%%%%%%%%%%%%%%%%%%%%%%%%%%%%%%%%%%%%%%%
\begin{lemma}
	\label{lem1: max-height-equal-p(v)}
	For each $i \geq 1$, $p(v_i)$  is at least the maximum  height of the SLS at any endpoint contained in the interval $I_i$.  
\end{lemma}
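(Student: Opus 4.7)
I prove $L(I_i) \leq p(v_i)$ by induction on $i$. The base case $i=1$ is immediate since $L(I_1)=p(v_1)=0$. For the inductive step, assume the inequality for all $j<i$ and let $L=L(I_i)$. Since $p(v_i)=\min\{r:\omega(G_r(v_i))\leq r\}$, it suffices to exhibit an $L$-clique in $G_{L-1}(v_i)$. Using Lemma~\ref{lem:edge-point-in-all-intervals}, I fix an endpoint $t^*\in I_i$ with $h_{t^*}=L$; by the definition of $h_{t^*}$ there exist previously processed intervals $J_0,J_1,\ldots,J_{L-1}$, all containing $t^*$, with $L(J_l)=l$ for each $l$. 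These $L$ intervals pairwise intersect at $t^*$ and are all neighbors of $v_i$ in $G$, so they form an $L$-clique of neighbors of $v_i$.

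The natural candidate clique in $G_{L-1}(v_i)$ is $\{J_0,\ldots,J_{L-1}\}$, which works whenever $p(J_l)\leq L-1$ for each $l$. The inductive hypothesis supplies only the lower bound $p(J_l)\geq l$. At the two lowest levels a short parallel induction shows that the KT-algorithm and $\KTSLS$ actually agree (level $0$ is an independent set and level $1$ has maximum degree at most $2$ in both algorithms), forcing $p(J_0)=0$ and $p(J_1)=1$; thus whenever $L\geq 2$, at least $J_0$ and $J_1$ lie in $G_{L-1}(v_i)$.

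The main obstacle is that for $l\geq 2$ some $J_l$ may be ``mismatched'' with $p(J_l)\geq L$, exactly as in the paper's $I_6$ example. When this happens, the KT-rule $p(J_l)\geq L$ forces, at the time $J_l$ was inserted, an $L$-clique of prior intervals with $p\leq L-1$ sitting at some point $s_l\in J_l$. Since both $t^*$ and $s_l$ lie in $J_l$, the interval $J_l$ bridges the two points. I plan to use this bridge together with Property P of the KT-algorithm to locate an $L$-clique in $G_{L-1}(v_i)$ either at $t^*$ itself or at another point of $I_i$; the key quantitative fact will be that every mismatched interval at $t^*$ is accompanied by a compensating multiplicity at some lower $L$-level at $t^*$, which is forced by the very structural cause of the mismatch (this is visible in the $I_6$ example, where the mismatched $I_6$ at $L=2$ is compensated by two intervals $I_3,I_4$ both at $L=1$). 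The hardest step will be confirming that the compensating intervals actually intersect $I_i$, since they are anchored at $s_l$ which may lie outside $I_i$ and requires careful use of $J_l$ to transfer intervals along $[s_l,t^*]\subseteq J_l$ into $I_i$'s neighborhood.
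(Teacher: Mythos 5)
Your proposal correctly isolates the difficulty that makes this lemma non-trivial, but it does not close it, and by your own admission (``I plan to use this bridge\ldots The hardest step will be confirming\ldots'') it remains a plan rather than a proof. The obstacle you name is real: the natural clique $\{J_0,\ldots,J_{L-1}\}$ at the endpoint $t^*$ (one interval per $L$-level $0,\ldots,L-1$) only lands inside $G_{L-1}(v_i)$ if each $p(J_l)\leq L-1$, and the inductive hypothesis $L(J_l)\leq p(J_l)$ gives a lower bound on $p(J_l)$, not an upper bound, so a mismatched $J_l$ with $p(J_l)\geq L$ genuinely escapes the witness set. Your ``compensating multiplicity'' intuition is the right phenomenon (it is exactly what happens in the paper's $I_6$ example and its extensions), and the route through the bridge interval $J_l$ back to the clique at $s_l$ is plausible, but the transfer step you flag (moving the compensating intervals from $s_l$ to a point inside $I_i$, and accounting for degeneracies when several mismatches stack) is not worked out, so the argument as submitted does not establish $\omega(G_r(v_i))>r$ for every $r<L(I_i)$.

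For comparison, the paper's proof is far terser: it exhibits the same clique $\{J_0,\ldots,J_{L-1}\}$ at $t^*$ (the $h_t$ intervals at distinct $L$-levels) and then simply asserts that this yields a large clique inside $G_{L(I_i)}(v_i)$ and concludes $p(v_i)\geq L(I_i)$. In other words, the paper does not even acknowledge the step you flag as the main obstacle — that moving from ``clique with $L$-levels $0,\ldots,L-1$'' to ``clique with $p$-levels $\leq L-1$'' is exactly where the work lies — and it likewise does not spell out why $\omega(G_{L-1}(v_i))\geq L$ forces the $\min$ in the definition of $p(v_i)$ to be $\geq L$ (which requires $\omega(G_r(v_i))>r$ for every $r<L$, not just $r=L-1$, since $\{r:\omega(G_r(v_i))\leq r\}$ is not obviously upward-closed). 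So you have found the genuine crux that the paper's proof elides; what you still owe is a completed argument — for instance a strengthened per-point invariant of the form ``for every point $t$ and every $r<h_t$ there are at least $r+1$ intervals $J\ni t$ with $p(J)\leq r$,'' proved by simultaneous induction with the lemma itself — and verifying that this invariant actually survives an insertion with $p(v_i)>L(I_i)$ is nontrivial.
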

\begin{proof}
	%%%%%%%%%%%%%%%%%%%%%%%%%%%%%%%%%%%%%%%%%%%%%
	By definition, $L(I_i)$ is the maximum height of SLS at any endpoint contained in $I_i$.  By the definition of the height of an SLS at an endpoint $t$, we know that for each $0 \leq r \leq h_t-1$ there is an interval $I \in \mathcal{I}_t$ such that $L(I) = r$, and all these intervals form a clique of size $h_t$.  Therefore, it follows that $G_{L(I_i)}(v_i)$ has a clique of size at least $L(I_i)$.  Therefore, it follows that $p(v_i) \geq L(I_i)$. Hence the Lemma.
	%%%%%%%%%%%%%%%%%%%%%%%%%%%%%%%%%%%%%%%%%%%%%%% 
\end{proof}
\noindent
%%%%%%%%%%%%%%%%%%%%%%%%%%%%%%%%%%%%%%%%%%%%%%%%%%%%%%%%%%%%%%%%%%%%%%%%%%%%%%%%%%%
We prove in Lemma~\ref{lem: Property-P} that the level values computed by  Algorithm $\KTSLS$ satisfy Property \textbf{P}.\\\\\\
\begin{lemma}
	\label{lem: Property-P}
	 Algorithm $\KTSLS$ satisfies Property \textbf{P} and thus uses at most $3 \omega - 2$ colors.
\end{lemma}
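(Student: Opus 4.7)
The plan is to establish Property \textbf{P} and deduce the color bound $3\omega - 2$ as a consequence: level $0$ forms an independent set (one color), and each of the $\omega - 1$ higher levels has max degree $\leq 2$ (three colors suffice via offsets $\{1, 2, 3\}$), summing to $1 + 3(\omega - 1) = 3\omega - 2$.

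The level-$0$ half of Property \textbf{P} is immediate: if $I, J$ both have $L = 0$ and share a point $t$, then whichever is processed first (WLOG $I$) puts $0 \in levels(\mathcal{I}_t)$ at the processing time of $J$, forcing $h_t \geq 1$ and thus $L(J) \geq 1$, a contradiction.

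For each level $l \geq 1$, I would prove max degree $\leq 2$ by contradiction: assume $I$ at level $l$ has three level-$l$ neighbors $J_1, J_2, J_3$. The central tool is a monotonicity lemma: once $Y$ is assigned $L(Y) = l$, every point $p \in Y$ with $h_p = l$ at $Y$'s processing time permanently satisfies $h_p \geq l + 1$ afterwards, since levels $\{0, 1, \ldots, l-1\}$ are already present at $p$ and $Y$ itself contributes level $l$. From this I would derive two consequences. First, \emph{no nesting within a level}: having $J \subseteq I$ (or $I \subseteq J$) with both at level $l$ is impossible, because if the outer is processed first every point of the later inner has $h \neq l$, while if the inner is processed first its witness lies inside the outer with $h \geq l + 1$, forcing the outer's level to be at least $l + 1$. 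Second, by pigeonhole two of $J_1, J_2, J_3$ extend on the same side of $I$; WLOG $J_1, J_2$ extend right, and no-nesting forces the staircase $a_{J_1} < a_{J_2} \leq b_I < b_{J_1} < b_{J_2}$, which yields $J_1 \subseteq I \cup J_2$.

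A case analysis on the six possible processing orders of $\{I, J_1, J_2\}$ then derives a contradiction via the monotonicity lemma. For instance, in the order $I, J_1, J_2$ the witness of $J_1$ is forced into $J_1 \setminus I = (b_I, b_{J_1}]$ (because $I$ already at level $l$ excludes $h = l$ throughout $I$), but $(b_I, b_{J_1}] \subseteq J_2$ from the staircase, so at $J_2$'s processing time this point has $h \geq l + 1$, contradicting $L(J_2) = l$. In the other orderings the last-processed interval's admissible witness region (inside itself but outside the union of the earlier level-$l$ intervals) is forced to be empty via the same inclusion $J_1 \subseteq I \cup J_2$, or to coincide with an already-established point of $h \geq l+1$. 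Combined with no-nesting and left-right symmetry, $I$ has at most one level-$l$ neighbor on each side, giving max degree $\leq 2$, completing Property \textbf{P}. The main obstacle I anticipate is carrying out this six-case analysis uniformly; all cases hinge on the staircase inclusion and the monotonicity lemma.
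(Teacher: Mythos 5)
Your proof is correct and follows essentially the same strategy as the paper: level-$0$ independence by the same point-sharing argument, no-nesting within a level, pigeonhole to get two same-level neighbours $J_1, J_2$ of $I$ extending the same side, and the inclusion $J_1 \subseteq I \cup J_2$ (which the paper invokes as the Helly property for intervals). Where you propose a six-case analysis on processing order, the contradiction is in fact uniform and order-free: $J_1$'s witness point $t_1$ lies in $J_1 \subseteq I \cup J_2$, yet it can lie in neither $I$ nor $J_2$ --- if that interval was processed before $J_1$ then level $l$ is already present at $t_1$ so $h_{t_1} \neq l$, and if it is processed after $J_1$ then $h_{t_1} \geq l+1$ would force its level above $l$ --- and this single observation (which is what the paper's terse ``Consequently\ldots'' alludes to) collapses your case analysis.
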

\begin{proof}
	We first prove that the set $\{I | L(I) = 0\}$ is an independent set. To show this, we prove that for a pair of intersecting intervals $I$ and $J$, at least one of $L(I)$ or $L(J)$ is more than 0.  
	Without loss of generality, let us assume that the interval $I$ appeared before $J$. If $L(I) > 0$, then our claim is correct.  We now consider the case when $L(I) = 0$.  Since $I$ and $J$ intersect, it follows that an endpoint of one of them is contained in the other.  Therefore, after $J$ is presented to Algorithm $\KTSLS$, the SLS of one of the endpoints in $I \cup J$ is more than 0.  
	%Therefore, at the time when $I_j$ appears,  there is an endpoint of $I_i$ contained in $I_j$ where the height of the SLS is non-zero.  
	By the definition of $L(J)$ in Algorithm $\KTSLS$,  it follows that $L(J) > 0$.  Therefore, $\{I | L(I) = 0\}$ is an independent set.  The same argument also shows that if $L(I) = L(J)$, then $I \not\subseteq J$ and $J \not\subseteq I$.
	
%	Further, for each $1 \leq l \leq \omega-1$, the same argument is used to show that  for each pair of intervals $I_i$ and $I_k$ satisfying $L(I_i) = L(I_k) = l$, $I_i \not\subseteq I_k$. Recall that vertex $v_j$ corresponds to interval $I_j$ in the associated interval graph.	
We now prove that the level value computed by Algorithm $\KTSLS$ satisfies Property \textbf{P}.  Let $I$ be the first interval during the execution of the algorithm which has at least 3 intersecting intervals in level $l=L(I)$.  Let these intervals be $J_1, J_2, J_3$.  Since there cannot be a containment relationship between two intersecting intervals with the same level value, it follows that two intervals contain a common endpoint with $L(I)$.  Without loss of generality, let $J_1$ and $J_2$ contain a common endpoint of $I$.  Further, we know by the Helly property for intervals that one of the intervals is contained in the union of the other two \cite{golumbic1980algorithmic}.  
%	We prove that for each $l$, $1 \leq l \leq \omega-1$, the subgraph of $G$ induced on $\{v_j| L(I_j) = l\}$ has maximum degree at most $2$. Suppose not, and if some vertex is of degree 3 in level $l$.   We know that for two vertices with the same level value, the corresponding intervals cannot have a containment relationship between them.  Therefore, it follows that the vertex of  degree 3 is in a clique of 3 vertices in level $l$.   Let $v_i$, $v_j$, and $v_k$ constitute the clique of 3 vertices in level $l$.  The  intervals corresponding to the three vertices are such that one of the intervals is contained in the union of the other two.
  Consequently, one of the 3 intervals contains a point $t$ for which the SLS has height $l+1$.  This contradicts the hypothesis that $L(I) = l$.  Therefore, our assumption that an interval $I$ has at least 3 neighbours in $L(I)$ is wrong.  Consequently, all the intervals with the same level value are assigned an offset from $\{1,2,3\}$. Further, if two intervals with the same level value intersect, then they get different offsets as described in {\bf Step II}.  
%  3 intervals cannot be assigned the same level value.  Therefore, for each level $l$, the maximum vertex degree in the graph induced on $\{v_j| L(I_j) = l\}$ is at most 2. \par
%	Since we have proved that the level value computed by $\KTSLS$ satisfies Property \textbf{P}, we  use the same procedure as KT-algorithm (described in Section~\ref{KTO}) to compute the offset. 
	From Lemma \ref{lem1: max-height-equal-p(v)}, we know that
	for any interval $I_i$, we have $L(I_i) \leq p(v_i)$. Therefore, maximum level value of any interval is $\omega-1$. For level $0$ we use one color and for every other level we use at most $3$ colors. Therefore, the number of colors used by ALgorithm $\KTSLS$ is $3(\omega - 1) + 1$ = $3\omega - 2$. Hence the Lemma.
\end{proof}
\noindent
In the rest of the paper we design data structures that are useful in an efficient implementation of Algorithm $\KTSLS$ in both the incremental and fully dynamic settings.  Apart from data structures to maintain intervals, we also use data structures to maintain supporting line segments.  The data structures to maintain the supporting line segments are crucial in overcoming the limitations of a direct implementation of the KT-algorithm.  The necessary data structures are described in Section \ref{subsec:DSpreliminaries}.  
%If we obtain an efficient implementation of the $\KTSLS$ algorithm then it gives us an algorithm for interval coloring in the dynamic setting. In Section~\ref{subsec:DSpreliminaries}, we describe the data structures we use in the incremental and fully dynamic setting to maintain the set of intervals, set of endpoints and supporting line segments.
\noindent
\subsection{Dynamic Data Structures  for Algorithm $\KTSLS$}
\label{subsec:DSpreliminaries}
%%%%%%%%%%%%%%%%%%%%%%%%%%%%%%%%%%%%%%%%%%%%%%%%%%%%%
\begin{table}[h]
	\centering
	\begin{tabular}{|l|l|l|} 
		%		\hline
		%		\multicolumn{3}{|l|}{}\\
		\hline
		Procedure & Incremental  & Fully Dynamic\\
		%& Running Time &  \\
		\hline
		$\getSLS$$(\mathcal{I},t)$:Maintains the SLS  for endpoint $t$  & Worst case & Worst case \\
		(i) in the Incremental case:  as a dynamic array $A_t$  & $O(\log(n)+\omega)$& $O(\log(n)+\omega \log \omega)$  \\
		and doubly linked list $Q_t$. & (Lemma~\ref{lem:runningtime-computeSLS})& (Lemma~\ref{lem:runtimeComputeSLSfullyDynamic})\\
		(ii) in the Fully Dynamic case: Red Black Tree $Z_t$ and ${NZ}_t$.& Return value $A_t, Q_t$  &  Return value $Z_t, {NZ}_t$\\
		\hline
		%		\multicolumn{3}{|l|}{}\\
		%		\hline
		$\computemaxheightSLS$$(\mathcal{E},I_i)$: From the interval tree $\mathcal{E}$, &Worst case  &Worst case \\
		computes the set of endpoints $S$  contained in the interval $I_i$  & $O(\log(n)+\Delta)$ & $O(\log(n)+\Delta \log \omega)$  \\
		and returns $h$ = $\max\{h_t | t \in S\}$  &  (Lemma~\ref{lem:runningTimeMaxHeightofSLS}) & (Lemma~\ref{lem:runtimeComputemaxheightfullydynamic}) \\
		&Return value $S,h$  & Return value $S,h$\\
		\hline
		$\updateEndPoints$$(S,L(I_i))$: Updates the endpoints in $S$& Amortized  & Worst case\\
		on addition of  interval $I_i$ at level $L(I_i)$. For each $t \in S$ & $O(\Delta)$ & $O(\Delta \log \omega)$ \\
		(i) Incremental case: updates $A_t$  and $Q_t$ &(Lemma~\ref{lem:runningTimeUpdateEndpoints})   & (Lemma~\ref{lem:runtimeUpdaeSLSfullydynamic})   \\
		(ii) Fully Dynamic case: updates $Z_t$ and ${NZ}_t$ & & \\
		\hline
		$\OFFSET$$(I_i)$: Assigns an offset value to $I_i$ from $\{1,2,3\}$& Worst case & Worst case \\
		by considering the offset of the intervals  &  $O(\log(n))$ & $O(\log(n))$\\
		intersecting it in $T[L(I_i)]$ &(Lemma~\ref{lem:time-greeyd-coloring}) & (Lemma~\ref{lem:time-greeyd-coloring})\\
		\hline
	\end{tabular}
	\caption{Comparison of Procedures in Incremental and Fully Dynamic cases}
	\label{table:2}
\end{table}
%%%%%%%%%%%%%%%%%%%%%%%%%%%%%%%%%%%%%%%%%
\noindent
In this section, we present the various data structures used to implement $\KTSLS$ in the incremental and fully dynamic setting. For this purpose, we come up with the procedures listed in Table~\ref{table:2}. The procedures in the incremental setting differ from their counterparts in the fully dynamic setting on the data structures used to store SLS. The running time of the incremental and the fully dynamic algorithms are governed by the running times of these procedures.
Detailed descriptions of the procedures are given in Section~\ref{subssec:procedures-handle-insert} for incremental setting and in Section~\ref{subsec:procedureHandleDeleteInsert} for fully dynamic setting. Next, we describe the different data structures which are used to implement the procedures in Table~\ref{table:2}. Running time of different operations on these data structures are listed in Table~\ref{table:1}. 
\noindent   
%%%%%%%%%%%%%%%%%%%%%%%%%%%%%%%%%%%%%%%%%%%%%%%%%%%%%
%%%%%%%%%%%%%%%%%%%%%%%%%%%%%%%%%%%%%%%%%%%%%%%%%%%%%
\begin{table}[htb]
	\centering
	\begin{tabular}{|l|l|c|c|} 
		\hline
		\multicolumn{4}{|l|}{Interval Tree $\mathcal{I}$ \cite{de2008computational}}\\
		\hline
		Method & Description & Running Time & Return \\
		& & &  Value\\
		\hline
		$\mathcal{I}$.insert($I$) & Inserts interval $I$ into $\mathcal{I}$ & $O(\log(|\mathcal{I}|))$ worst case & -\\
		\hline
		$\mathcal{I}$.delete($I$) & Deletes interval $I$ from $\mathcal{I}$ & $O(\log(|\mathcal{I}|))$ worst case & -\\
		\hline
		$\mathcal{I}$.intersection($I$) & Returns a set of intervals $S_I$ in $\mathcal{I}$  & $O(\log(|\mathcal{I}|)+|S_I|)$ & $S_I$ \\
		&  that intersect with $I$ & worst case & \\
		\hline
		\hline
		\multicolumn{4}{|l|}{Doubly linked list $Q$ \cite{Cormen:linked-list}}\\
		\hline
		${Q}$.insert($x$) & Inserts element $x$ into list $Q$ & $O(1)$ worst case & -\\
		\hline
		${Q}$.delete($x$) & Deletes element $x$ from list $Q$ & $O(1)$ worst case & -\\
		\hline
		$Q$.begin() & Returns the first element $x$ of list $Q$ & $O(1)$ worst case & $x$ \\
		\hline
		\hline
		\multicolumn{4}{|l|}{Set ${U}$ \cite{book:cpp}}\\
		\hline
		${U}$.insert($x$) & Inserts a new element $x$ into $U$ & $O(1)$ amortized & -\\
		\hline
		$U$.begin() & Iterator to the first element of the set & $O(1)$ worst case & -\\
		\hline
		$U$.end() & Iterator to the last element of the set  & $O(1)$ worst case & - \\
		\hline
		\hline
		\multicolumn{4}{|l|}{Dynamic Array $A$ \cite{Cormen:2009:IAT:1614191}}\\
		\hline
		$A$.at($i$) & Inserts at $i$-th position of array $A$. & $O(1)$ amortized & -\\
		& Doubles the array size after initialization & & \\
		& if array is full && \\
		\hline
		\hline
		$A$.size() & Returns the size of array $A$ & $O(1)$ worst case & size\\
		\hline
		\hline
		\multicolumn{4}{|l|}{Red-Black Tree $R$ \cite{Cormen:RB-tree}}\\
		\hline
		$R$.insert($x$) & Inserts element $x$ into $R$ & $O(\log(|R|))$ worst case & -\\
		\hline
		$R$.delete($x$) & Deletes element $x$ from $R$& $O(\log(|R|))$ worst case & -\\
		\hline
		$R$.max() & Returns the maximum element $x$ in $R$ & $O(\log(|R|))$ worst case & $x$\\
		\hline
		$R$.min() & Returns the minimum element $x$ in $R$ & $O(\log(|R|))$ worst case & $x$\\
		\hline
		$R$.empty() & Checks if the tree $R$ is empty & $O(1)$ worst case & 0/1\\
		\hline
	\end{tabular}
	\caption{Data Structures used in dynamic setting}
	\label{table:1}
\end{table}

%%%%%%%%%%%%%%%%%%%%%%%%%%%%%%%%%%%%%%%%%%%%%%%%%%%%%
{\bf Interval Trees to store intervals and endpoints:}
\begin{enumerate}
	\item \textbf{Set of intervals $\mathcal{I}$.}
The set of intervals $\mathcal{I}$ is  maintained as an interval tree.  Therefore,  $\mathcal{I}$ is an interval tree such that for each $i \geq 1$, interval $I_i$ is maintained as its left and right endpoints $l_i$ and $r_i$.  Further, the level value and offset $L(I_i)$ and $o(I_i)$ are computed at the time of insertion, and updated as necessary in the fully dynamic case. The index of the update when $I_i$ is inserted is also stored and referred as time of insertion whenever necessary. 
 %{\bf Is the index of the update when $I_i$ is inserted also stored here?}
	% $\{l_i,r_i\}$, we also maintain the level  $L(I_i)$, offset $o(I_i)$, and time of insertion.   
	%%%%%%%%%%%%%%%%%%%%%%%%%%%%%%%%%%%%
%%%%%%%%%%%%%%%%%%%%%%%%%%%%%%%%%%%%%%%
	\item \textbf{Set of endpoints $\mathcal{E}$.} The set of endpoints of the intervals in $\mathcal{I}$ is stored as an interval tree denoted by $\mathcal{E}$. For every interval $I_i$=$[l_i,r_i]$, we maintain the left endpoint and the right endpoint as intervals $[l_i,l_i]$ and $[r_i,r_i]$ respectively in $\mathcal{E}$.
	\item \textbf{Hash table $T$ points to set of intervals with same level value.} For a non-negative integer $h$, $T[h]$ points to the interval tree which maintains the set of intervals with level value $h$.\\\\ 
	\end{enumerate}
	{\bf Data Structures to store supporting line segments:} At every endpoint $t \in \mathcal{E}$,  the SLS $e_t$ and the height, $h_t$, of $e_t$ is maintained.   In the incremental setting, the height of an SLS is non-decreasing with the updates and this need not be true in the fully dynamic case.  Thus we have different data structures to represent SLS in the incremental setting and the fully dynamic setting.  
%%%%%%%%%%%%%%%%%%%%%%%%%%%%%%%%%%%%%%%%%%%%%%%%%
\begin{enumerate}	
\item \textbf{Incremental setting: } In the incremental setting, the SLS $e_t$ is maintained using a  dynamic array $A_t$.  For a level value  $l$,  $A_t[l]$ is defined to be $1$ if there is an interval containing $t$ whose level value is $t$.  Otherwise, $A_t[l]$ is defined to be $0$.
Clearly, the height $h_t$ of the supporting line segment $e_t$ is the smallest index $l$ such that $A_t[l]=0$. 
To respond to queries for $h_t$ efficiently,  a doubly linked list $Q_t$ and a dynamic array $A^{\prime}_t$ are used as follows.
The head of the doubly linked list $Q_t$ contains the value of $h_t$, and following it, the set $\{l \mid A_t[l]=0\}$ as a doubly linked list in increasing order of the value of $l$.  
To maintain $h_t$, we define a doubly linked list $Q_t$ which stores every index $i$ in $A_t$ where $A_t[i]$ is $0$ in the increasing order of the value of $i$. Note that the value stored at the head node of $Q_t$ is $h_t$.  The dynamic array $A^{\prime}_t$ is defined as follows: for each $l \geq 0$,  if $A_t[l] = 1$, then $A^{\prime}_t[l]$ stores a pointer to the node in $Q_t$ which stores the index $l$; otherwise, $A^{\prime}_t[l] = NULL$.
% stores NULL$AWe augment $Q_t$ with another dynamic array $A^{\prime}_t$. For an index $i$, if $A_t[i]$ is $0$ then $A^{\prime}_t[i]$ stores a pointer to the node in $Q_t$ which stores the index $i$. If $A_t[i]$ is $1$ then $A^{\prime}_t[i]$ stores NULL. 
Using the dynamic array $A^{\prime}_t$, insert, delete, and search operations in $Q_t$ can be performed in constant time. 
%Since our algorithm is incremental, the dynamic arrays only expand. 
Insertion into a dynamic array takes amortized constant time \cite{Cormen:2009:IAT:1614191}. 
%The size of the array $A_t$ is the length of $A_t$. 
%During insertion, whenever size of $A_t$ is increased, size of  $A^{\prime}_t$ is also increased and appropriate nodes are inserted in $Q_t$. 
A query for the value of $h_t$ can be answered in constant time by returning the value stored in the head node of $Q_t$. 
%If $Q_t$  changes due to an insert, then we need to change the head node of $Q_t$ to the next node in the list and delete the previous head node of $Q_t$. This operation also takes constant time.
\item \textbf{Fully dynamic setting: } SLS $e_t$ is maintained using two  Red-Black trees, $Z_t$ and ${NZ}_t$. 
 A level value $l$ is stored in ${NZ}_t$ if there is an interval in $\mathcal{I}$ which contains $t$ and whose level value is $l$.
Otherwise, a level value $l$ is stored in $Z_t$.  
To compute the height $h_t$, we do the following: if $Z_t$ is non-empty, then the minimum value in $Z_t$ is the required height $h_t$. If $Z_t$ is empty, then $h_t$ is one more than the maximum value in ${NZ}_t$.  The number of nodes in the trees $Z_t$ and ${NZ}_t$ is at most  $\omega$. Therefore, the time required to compute $h_t$ is $O(\log \omega)$.
\end{enumerate}	
\noindent
\vspace{-0.9cm}
%%%%%%%%%%%%%%%%%%%%%%%%%%%%%%%%%%%%%%%%%%
% !TEX root = ../IntervalColoringFullVersion.tex
\section{Incremental Interval Coloring using Algorithm $\KTSLS$}
\label{sec: Incremental-Algorithm}
\noindent
%In the incremental setting, for each $1 \leq i \leq n$, $I_i = [l_i,r_i]$ represents the interval inserted in the $i$-th update step. Here $n$ is the total number of insertions. In this section, we design an incremental algorithm ($\Insertion$) for proper interval coloring which is an implementation of $\KTSLS$. 
%%%%%%%%%%%%%%%%%%%%%%%%%%%%%%%%%%%%%%%%%%%%%%%%%%%%%%%%%
We present the incremental algorithm $\Insertion$ which is an implementation of Algorithm $\KTSLS$. 
%\subsection{Algorithm $\Insertion$}
%\label{subsec:incrementalAlgo}
The pseudo code of $\Insertion$ is presented in Algorithm~\ref{alg:insert} along with the corresponding steps.   The procedures used in $\Insertion$ 
%are in Table~\ref{table:2} and they are 
described in Section~\ref{subssec:procedures-handle-insert}.   The amortized update time of $\Insertion$ is given by the Theorem~\ref{thm:IncrementalAlgo}.\\\\ 
\noindent
%Algorithm $\Insertion$ computes the color $(L(I_i),o(I_i))$ by implementing the steps as described below:\\
%%%%%%%%%%%%%%%%%%%%%%%%%%%%%%%%%%%%%%%%%
\begin{minipage}{.45\textwidth}
\begin{enumerate}
	\item {\bf Computing $L(I_i)$:}\\
	\textbf{Step 1}: Insert $I_i$ into the set of intervals $\mathcal{I}$ (Line 1 in Algorithm~\ref{alg:insert}). Check if the endpoint $l_i$ is already present in the set of endpoints $\mathcal{E}$ (Line 4 in Algorithm~\ref{alg:insert}). If not, then compute the SLS at endpoint $l_i$ (Line 5 in Algorithm~\ref{alg:insert}) and insert $l_i$ into the set $\mathcal{E}$ (Line 6 in Algorithm~\ref{alg:insert}). Repeat  for endpoint $r_i$ (Line 8-11 in Algorithm~\ref{alg:insert}).\\
	\textbf{Step 2}: Compute the set $S = \mathcal{E} \cap I_i$.  For each $t \in S$, compute $h_t$, the height of the SLS $e_t$ at $t$.  
	Assign $L(v_i) = \displaystyle \max_{t \in S} h_t$ (Line 12-13 in Algorithm~\ref{alg:insert}).\\
	\textbf{Step 3}: Update the SLS $e_t$ for each point $t \in S$ (Line 14 in Algorithm~\ref{alg:insert}). 
	%%%%%%%%%%%%%%%%%%%%%%%%%%%%%%%%%%%%%%%%%
	\item {\bf Computing $o(I_i)$}:
	Compute $o(I_i)$ to be the smallest value from the set $\{1,2,3\}$ which is different from the offset of the neighbours of $I_i$ which have the level $L(I_i)$ (Line 15 in Algorithm~\ref{alg:insert}). 
\end{enumerate}
\end{minipage}
\hspace{.6cm}
%%%%%%%%%%%%%%%%%%%%%%%%%%%%%%%%%%%%%%%%%%%
% We analyze update time of $\Insertion$ in Section~\ref{subsection : Incremental-Algorithm-Analysis}.  
%%%%%%%%%%%%%%%%%%%%%%%%%%%%%%%%%%%%%%%%%%%%%
\begin{minipage}{.45\textwidth}
\begin{algorithm}[H]	
	\caption{$\Insertion$($I_i = [l_i,r_i]$) is used to handle insertion of interval $I_i$}
	\begin{algorithmic}[1]
		\State $\mathcal{I}$.insert($I_i$) 
		\State $I_i^{l} \gets$ $[l_i,l_i]$   
		\State $I_i^{r} \gets$ $[r_i,r_i]$   
		\If{($\mid \mathcal{E}$.intersection($I_i^{l}) \mid$ = $0$)} 
		\State $A_{l_i}, Q_{l_i} \leftarrow$  \Call{$\getSLS$}{$\mathcal{I}$,$l_i$} 
		\State $\mathcal{E}$.insert($I_i^{l}$) 
		\EndIf
		\If{($\mid \mathcal{E}$.intersection($I_i^{r}) \mid$ = $0$)}  
		\State $A_{r_i}, Q_{r_i} \leftarrow$  \Call{$\getSLS$}{$\mathcal{I}$,$r_i$} 
		\State $\mathcal{E}$.insert($I_i^{r}$) 
		\EndIf
		\State $S, h \leftarrow$  \Call{$\computemaxheightSLS$}{$\mathcal{E}$,$I_i$}
		\State $L(I_i) \leftarrow h$
		\State \Call{$\updateEndPoints$}{$S,L(I_i)$}
		\State \Call{$\OFFSET$}{$I_i$}
	\end{algorithmic}
	\label{alg:insert}
\end{algorithm}
\end{minipage}
%%%%%%%%%%%%%%%%%%%%%%%%%%%%%%%%%%%%%%%%%%%
\noindent
%%%%%%%%%%%%%%%%%%%%%%%%%%%%%%%%%%%%%%%%%%%%
%%%%%%%%%%%%%%%%%%%%%%%%%%%%%%%%%%%%%%%%%%%%%%
\begin{theorem}
	\label{thm:IncrementalAlgo}
	$\Insertion$ is an incremental algorithm which supports insertion of 	
	a sequence of $n$ intervals in amortized $O(\log n + \Delta)$ time per update.
\end{theorem}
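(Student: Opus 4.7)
The plan is to analyze Algorithm~\ref{alg:insert} line by line and sum the running times of the constituent procedures, invoking the bounds recorded in Table~\ref{table:2} and Table~\ref{table:1}, and then conclude by observing that $\omega \leq \Delta + 1$ so that the $\omega$-dependent terms are absorbed into the $\Delta$ term.

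First I would account for the interval-tree operations on $\mathcal{I}$ and $\mathcal{E}$ in Lines~1--11. The insertion of $I_i$ into $\mathcal{I}$, the two intersection queries on $\mathcal{E}$ (each of which only needs to decide whether the count is zero), and the at most two insertions into $\mathcal{E}$ together take worst-case $O(\log n)$ time by the interval-tree bounds in Table~\ref{table:1}. Next I would handle the two potential calls to $\getSLS$ for the new endpoints $l_i$ and $r_i$. By Lemma~\ref{lem:runningtime-computeSLS} each such call costs worst-case $O(\log n + \omega)$, and since $\omega \leq \Delta + 1$ this is $O(\log n + \Delta)$.

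Then I would account for Lines~12--14. The call to $\computemaxheightSLS$ on Line~12 costs worst-case $O(\log n + \Delta)$ by Lemma~\ref{lem:runningTimeMaxHeightofSLS}, producing the set $S$ of endpoints contained in $I_i$ along with the maximum SLS height, which is then assigned to $L(I_i)$ in constant time. The call to $\updateEndPoints$ on Line~14 costs amortized $O(\Delta)$ by Lemma~\ref{lem:runningTimeUpdateEndpoints}; this is the single step whose bound is only amortized rather than worst-case, and it is the source of the amortized qualifier in the theorem. Finally, the offset computation on Line~15 costs worst-case $O(\log n)$ by Lemma~\ref{lem:time-greeyd-coloring}.

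Summing the contributions yields an amortized cost per insertion of
\[
O(\log n) + O(\log n + \omega) + O(\log n + \Delta) + O(\Delta) + O(\log n) \;=\; O(\log n + \Delta),
\]
using $\omega \leq \Delta + 1$ once more. Since every intermediate bound is either worst-case or amortized with the amortization taken over the sequence of $n$ insertions, summing over the whole update sequence gives total time $O(n(\log n + \Delta))$, which is amortized $O(\log n + \Delta)$ per update, establishing the theorem. I do not anticipate a serious obstacle here: the proof is a bookkeeping assembly of the procedure-level bounds already proved in the referenced lemmas; the only subtle point worth stating explicitly is why the worst-case bound $O(\log n + \omega)$ for $\getSLS$ can be charged to $O(\log n + \Delta)$, which is immediate from $\omega \leq \Delta + 1$.
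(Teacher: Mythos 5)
Your proposal is correct and follows essentially the same route as the paper: both decompose the insertion into the same sub-procedures, invoke Lemmas~\ref{lem:runningtime-computeSLS}, \ref{lem:runningTimeMaxHeightofSLS}, \ref{lem:runningTimeUpdateEndpoints}, and \ref{lem:time-greeyd-coloring} for the per-procedure bounds, and then absorb $\omega$ into $\Delta$ via $\omega \leq \Delta + 1$. The only cosmetic difference is that you itemize the interval-tree operations in Lines~1--11 separately from the $\getSLS$ calls, whereas the paper bundles them together as its ``Step~1''.
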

\begin{proof}
We analyze the running time for computing $L(I_i)$ and $o(I_i)$.  \\
	%We analyze $\Insertion$ by analyzing the time required in every step of the algorithm. 
 \textbf{Analysis for computing $L(I_i)$:} Computing $L(I_i)$ involves $3$ steps.
		\begin{enumerate}
			\item Step 1 in computing $L(I_i)$ takes $O(\log n + \omega)$ time: Insertion of $I_i$ = $[l_i,r_i]$ into $\mathcal{I}$ takes $O(\log n)$ time. Let $I_i^{l}$ = $[l_i,l_i]$. Checking if $I_i^{l}$ is present in $\mathcal{E}$ by an intersection query takes $O(\log n)$ time in the worst case.     If $I_i^{l}$ is in $\mathcal{E}$, then no further processing is done. On the other hand,  if $I_i^{l}$ is not in $\mathcal{E}$ then procedure $\getSLS$($\mathcal{I},l_i$) is invoked. From Lemma~\ref{lem:runningtime-computeSLS}, procedure $\getSLS$ takes $O(\log n + \omega)$ time.  The same steps are repeated for $I_i^{r}$ = $[r_i,r_i]$. Hence Step 1 in computing $L(I_i)$ for interval $I_i$ takes $O(\log n + \omega)$ time.
					%%%%%%%%%%%%%%%%%%%%%%%%%%%%%%%%%%%%%%%%%%%%%
			\item Step 2 in computing $L(I_i)$ takes $O(\log n + \Delta)$ time: Procedure $\computemaxheightSLS$$(\mathcal{E},I_i)$ is invoked to perform this step. From Lemma~\ref{lem:runningTimeMaxHeightofSLS}, procedure $\computemaxheightSLS$(Algorithm~\ref{alg:MAX-HEIGHT-OF-SLS-IN-INTRVAL-usingDynamicArray}) takes $O(\log n + \Delta)$ time. Hence Step 2 in computing $L(I_i)$ for interval $I_i$ takes $O(\log n + \Delta)$ time. 
			%%%%%%%%%%%%%%%%%%%%%%%%%%%%%%%%%%%%%%%%%%%%%%
			\item Step $3$ in computing $L(I_i)$ takes amortized $O(\Delta)$ time: Procedure $\updateEndPoints$$(S,L(I_i))$ is invoked to perform this step. From Lemma~\ref{lem:runningTimeUpdateEndpoints}, procedure $\updateEndPoints$(Algorithm~\ref{algo:UPDATE-EDGE-POINTS-usingDynamicArray}) takes amortized $O(\Delta)$ time. Hence Step $3$ in computing $L(I_i)$ for interval $I_i$ takes amortized $O(\Delta)$ time. 
		\end{enumerate}
\textbf{Analysis for computing $o(I_i)$:}
		To compute the offset value of interval $I_i$ with level value $L(I_i)$, procedure $\OFFSET$$(I_i)$ is invoked. From Lemma~\ref{lem:time-greeyd-coloring}, procedure $\OFFSET$$(I_i)$ takes $O(\log n)$ time.\\
	%%%%%%%%%%%%%%%%%%%%%%%%%%%%%%%%%%%%%%%%%%%%%%%%%%%%	
	Therefore, total time taken by $\Insertion$ for insertion of $n$ intervals is the total time taken for Step 1, Step 2, Step 3, and the total time spent in computing the offset.   For interval graphs, it is well known that $\omega=\chi \leq\Delta+1$.  Thus the  running time is $O(n\log n + n \Delta)$.  Therefore, the amortized update time over a sequence of $n$ interval insertions is  $O(\log n + \Delta)$. Hence the Theorem.
%	$\mathcal{T}$ = Total time for Step1 + Total Time for Step 2 + Total Time for Step 3 + Total time for computing offset \\
%	$\mathcal{T}$ = $n \times O(\log n + \omega)$ + $n \times O(\log n + \Delta)$ + $n \times O(\Delta)$ + $n \times O(\log n)$\\
%	$\mathcal{T}$ = $O(n\log n + n \Delta)$\\
%	Therefore, the amortized update time over a sequence of $n$ interval insertions is  $O(\log n + \Delta)$. 
\end{proof}
%%%%%%%%%%%%%%%%%%%%%%%%%%%%%%%%%%%%%%%%%%%%%%
\subsection{Procedures used in $\Insertion$}
\label{subssec:procedures-handle-insert}
\noindent
\begin{minipage}{.45\textwidth}
\vspace{2mm}	
The data structures used in designing these procedures are listed in Table~\ref{table:1}.	
	\begin{lemma}
		\label{lem:time-greeyd-coloring}	
		Procedure $\OFFSET$ takes as input interval $I$, computes the offset value for interval $I$ and takes $O(\log n)$ time in the worst case.
	\end{lemma}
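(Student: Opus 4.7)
The plan is to bound the running time of $\OFFSET(I)$ by combining an interval-tree intersection query with the structural guarantee from Property \textbf{P}. First I would recall that the hash table $T$ is maintained so that $T[L(I)]$ points to an interval tree holding exactly the intervals whose level value equals $L(I)$. Thus, once $L(I)$ is known, identifying the candidate neighbors that could constrain the offset reduces to an intersection query in the interval tree $T[L(I)]$ with the interval $I$.

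Next, I would invoke the intersection operation listed in Table~\ref{table:1}, namely $T[L(I)].\text{intersection}(I)$, which returns the set $S_I$ of intervals at level $L(I)$ that intersect $I$, in time $O(\log |T[L(I)]| + |S_I|)$, which is $O(\log n + |S_I|)$. The key observation is that $|S_I|$ is bounded by a constant: by Property \textbf{P} (Lemma~\ref{lem: Property-P}), within any single level the induced subgraph has maximum degree at most $2$, so there can be at most two intervals at level $L(I)$ that intersect $I$. Consequently, $|S_I| \leq 2$ and the intersection query costs only $O(\log n)$ time in the worst case.

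Once $S_I$ is available, the offset of each interval in $S_I$ is stored together with that interval in $\mathcal{I}$ (as described in the storage specification for intervals). I would read off these at most two offset values, take the set $\{1,2,3\}$ minus these values, and return the smallest element of the resulting non-empty set. Because $|S_I| \leq 2$, this selection is done in $O(1)$ time and the returned value is guaranteed to be a valid offset consistent with Step II of Algorithm $\KTSLS$.

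The total work is therefore $O(\log n) + O(1) = O(\log n)$ in the worst case, completing the bound. The only non-routine step is invoking Property \textbf{P} to argue that the output size of the intersection query is constant; without that structural guarantee the query cost would be $O(\log n + \Delta)$, so the lemma really hinges on Lemma~\ref{lem: Property-P} having already established a degree bound of $2$ within each level.
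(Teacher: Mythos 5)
Your proof matches the paper's own argument: both identify the intersection query on $T[L(I)]$ as the dominant cost, both invoke Property \textbf{P} to bound the query's output size by $2$, and both conclude $O(\log n)$ worst-case time. The only thing you omit is the final $T[L(I)].\text{insert}(I)$ step (Line 7 of Algorithm~\ref{alg:offset}), but since that also costs $O(\log n)$ it does not change the bound.
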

	\begin{proof}
		Hash table $T$ is used to access the interval tree $T[L(I)]$. If $T[L(I)]$ is NULL, then an interval tree $T[L(I)]$ is created with $I$ as the first interval (Line 2-3 in Algorithm~\ref{alg:offset}). Otherwise, $T[L(I)]$ is the interval tree which stores all the intervals with level value same as $L(I)$. An intersection query is performed on $T[L(I)]$ with $I$ to obtain all the intervals that intersect with $I$ (Line 5 in Algorithm~\ref{alg:offset}). From Property {\bf P}, the maximum number of intervals returned by the above query is $2$. The offset value of interval $I$, $o(I)$, is set to be the smallest value from $\{1,2,3\}$ not assigned to any of the at most two neighbors of $I$ in level $L(I)$ (Line 6 in Algorithm~\ref{alg:offset}). Interval $I$ is inserted to $T[L(I)]$ (Line 7 in Algorithm~\ref{alg:offset}).\par 
		Running time of $\OFFSET$ is dominated by intersection query in Line 5 and insertion of interval $I$  in Line 7. Since $|\mathcal{I}| \leq n$, worst case running time of $\OFFSET$ is $O(\log n)$. Hence the Lemma.
	\end{proof}
\end{minipage}
\hspace{0.6cm}
\begin{minipage}{.45\textwidth}
	\begin{algorithm}[H]
		\caption{$\OFFSET$($I$) is used to compute the offset value of the interval $I$ with level value $L(I)$. }
		\begin{algorithmic}[1]
			\Procedure{$\OFFSET$}{$I$}
			\If{$T[L(I)]$ \textbf{is} \textit{NULL}}
			\State $T[L(I)] \leftarrow $ \Call{Interval\_Tree}{\ }
			\EndIf
			\State S $ \leftarrow T[L(I)].$intersection($I$)
			\State $o(I) \leftarrow$  The minimum value in the set $\{1, 2, 3\}$ which is not the offset value of any interval in S. 
			\State $T[L(I)].$insert($I$)
			\EndProcedure
		\end{algorithmic}
		\label{alg:offset}
	\end{algorithm}
\end{minipage}
%%%%%%%%%%%%%%%%%%%%%%%%%%%%%%%%%%%%%%%%%

\noindent
\begin{minipage}{.45\textwidth}	
\begin{comment}  	
\textbf{Procedure $\getSLS$($\mathcal{I}_t, t$):} This procedure is used to compute the SLS at endpoint $t$ and is described in Algorithm~\ref{alg:GET-SLS-dynamicArray}.  It performs an intersection query on $\mathcal{I}$ with interval $[t,t]$. Let $\mathcal{I}_t$ denote the set returned by the intersection query. Dynamic arrays $A_{t}$ and $A^{\prime}_{t}$, each of size $\max(levels(\mathcal{I}_{t}))$, are created. For every $i$ in the range $[0, \max(levels(\mathcal{I}_{t}))]$, $A_t[i]$ is set to $1$ if $i \in levels(\mathcal{I}_t)$ and $A_t[i]$ is set to $0$ otherwise. For every $A_t[i]$ = $0$, a node storing the index $i$ is inserted to the doubly linked list $Q_t$ and pointer to that node is stored in $A^{\prime}_t[i]$. For every $A_t[i]$ = $1$, we store a NULL in $A^{\prime}_t[i]$.
%\centering
\end{comment}
\begin{lemma}
\label{lem:runningtime-computeSLS}
Procedure $\getSLS$ takes set of intervals $\mathcal{I}$ and endpoint $t$ as input, maintains SLS at $t$ using dynamic array $A_t$ and doubly linked list $Q_t$, and takes $O(\log n + \omega)$ time in the worst case. 
\end{lemma}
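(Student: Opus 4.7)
The plan is to bound separately the cost of (i) gathering the intervals containing $t$, (ii) allocating the arrays $A_t$ and $A_t'$, and (iii) populating $A_t$, $A_t'$, and $Q_t$, and to argue that each of these is either $O(\log n)$ or $O(\omega)$.

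First I would observe that the set $\mathcal{I}_t$ of intervals containing the endpoint $t$ is obtained by a single intersection query $\mathcal{I}.\textsf{intersection}([t,t])$. By Table~\ref{table:1}, this query runs in $O(\log n + |\mathcal{I}_t|)$ worst-case time. Because every interval in $\mathcal{I}_t$ contains the common point $t$, the intervals of $\mathcal{I}_t$ form a clique in $G(\mathcal{I})$, so $|\mathcal{I}_t| \leq \omega$. Hence this step costs $O(\log n + \omega)$.

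Next I would handle the creation and initialization of the auxiliary structures. Since the algorithm has maintained, for every previously colored interval $I \in \mathcal{I}_t$, a level value $L(I) \in \{0,1,\dots,\omega-1\}$ (Lemma~\ref{lem1: max-height-equal-p(v)}), the quantity $\max(\text{levels}(\mathcal{I}_t))$ is at most $\omega - 1$. Thus the dynamic arrays $A_t$ and $A_t'$ each have size at most $\omega$, and allocating/initializing them to $0$ (respectively \textit{NULL}) takes $O(\omega)$ worst-case time. I would then walk through $\mathcal{I}_t$ once and set $A_t[L(I)] \gets 1$ for each $I \in \mathcal{I}_t$; this loop runs in $O(|\mathcal{I}_t|) = O(\omega)$ time. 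Finally, a single left-to-right scan through the indices $0,1,\dots,\max(\text{levels}(\mathcal{I}_t))$ builds $Q_t$: for every index $i$ with $A_t[i]=0$, append a node storing $i$ to the tail of $Q_t$ (constant time per insertion, by Table~\ref{table:1}) and store the returned pointer in $A_t'[i]$. This scan also costs $O(\omega)$.

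Adding the three contributions yields the stated worst-case bound of $O(\log n + \omega)$. The correctness of the returned structures is immediate from the construction: $A_t[l]=1$ iff some interval in $\mathcal{I}_t$ has level $l$, and $Q_t$ lists exactly the indices $l$ with $A_t[l]=0$ in increasing order, so its head stores $h_t = \min\{l \mid A_t[l]=0\}$. The only step that requires a nontrivial bound is the first one, where I rely on the clique bound $|\mathcal{I}_t| \leq \omega$ to convert the output-sensitive interval-tree query time into the claimed $O(\log n + \omega)$; the remaining work is linear in $\omega$ because all relevant data (array sizes, loop lengths, list insertions) are indexed by level values, each of which lies in $\{0,\dots,\omega-1\}$.
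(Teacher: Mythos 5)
Your proof is correct and follows essentially the same decomposition as the paper's (cost of the intersection query, then $O(\omega)$ work to populate $A_t$, $A_t'$, and $Q_t$, all justified by level values lying in $\{0,\dots,\omega-1\}$). The one small difference is how $|\mathcal{I}_t|$ is bounded: you argue directly that the intervals containing $t$ form a clique, so $|\mathcal{I}_t|\leq\omega$, whereas the paper appeals to Property~\textbf{P} (at most two intervals per level contain $t$, and there are at most $\omega$ levels, giving $|\mathcal{I}_t|\leq 2\omega$); your version is a bit cleaner and gives a tighter constant, but both yield the same $O(\omega)$ term.
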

\begin{proof}
$\getSLS$ performs an intersection query on $\mathcal{I}$ with interval $[t,t]$ (Line 2 in Algorithm~\ref{alg:GET-SLS-dynamicArray}).  	
Let $\mathcal{I}_t$ denote the set returned by the intersection query. Maximum height $h_t$ = $\max(levels(\mathcal{I}_{t}))$ and the set $U$ = $levels(\mathcal{I}_t)$  are computed (Line 7-9 in Algorithm~\ref{alg:GET-SLS-dynamicArray}). 
For every $i$ in the range $[0, h_t]$, $A_t[i]$ is set to $0$ and $i$ is inserted to $Q_t$ (Line 11-13 in Algorithm~\ref{alg:GET-SLS-dynamicArray}).  
For every $i$ in the range $[0, h_t]$, $A_t[i]$ is reset to $1$ if $i \in levels(\mathcal{I}_t)$ and
$i$ is deleted from $Q_t$ (Line 15-17 in Algorithm~\ref{alg:GET-SLS-dynamicArray}). It returns $A_t$ and $Q_t$.\par
Running time of $\getSLS$ is dominated by the intersection query in Line 2, loop in Line 7-9, loop in Line 11-13, and loop in Line 15-17. At any level, SLS ${e_t}$ intersects with at most $2$ intervals and we have $\omega$ many levels. Hence, $|\mathcal{I}_{t}|$ = $O(\omega)$, $h_t \leq \omega + 1$, and $|U| \leq \omega$. Again, $|\mathcal{I}| \leq n$. Therefore, time taken by procedure $\getSLS$ in the worst case is $O(\log n + \omega)$. Hence the Lemma.  
\end{proof}
%%%%%%%%%%%%%%%%%%%%%%%%%%%%%%%%%%%%%%%%
\begin{lemma}
\label{lem:runningTimeMaxHeightofSLS}
Procedure $\computemaxheightSLS$ takes set of endpoints $\mathcal{E}$ and interval $I$ as input, computes the set of endpoints $S$ contained in $I$ and maximum among the height of the SLS at the endpoints in $S$, and takes $O(\log n + \Delta)$ time in the worst case.
\end{lemma}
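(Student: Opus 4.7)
The plan is to decompose the running time of $\computemaxheightSLS$ into two pieces: (i) the single intersection query on the interval tree $\mathcal{E}$ that produces the set $S$ of endpoints contained in $I$, and (ii) a linear scan over $S$ that reads each $h_t$ and keeps a running maximum. I would first invoke $\mathcal{E}.\text{intersection}(I)$, which by Table~\ref{table:1} takes $O(\log|\mathcal{E}|+|S|)$ worst-case time; since at most two distinct endpoints are stored per interval in $\mathcal{I}$, we have $|\mathcal{E}| \leq 2n$, so this bound becomes $O(\log n + |S|)$. For each $t \in S$, the value $h_t$ is already stored at the head node of the doubly linked list $Q_t$ (as established in the description of the SLS representation in Section~\ref{subsec:DSpreliminaries}) and can be fetched in $O(1)$ time, so the scan contributes $O(|S|)$.

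The substantive step will be the bound $|S| = O(\Delta)$. I would argue as follows: every $t \in S$ lies in $I$ and belongs to $\mathcal{E}$, so $t$ is an endpoint of some already-inserted interval $I_j \in \mathcal{I}$; since $t \in I \cap I_j$, the interval $I_j$ either coincides with $I$ (when $I$ has already been inserted into $\mathcal{I}$ by $\Insertion$) or is a neighbor of $I$ in $G(\mathcal{I})$. There are therefore at most $\Delta + 1$ candidate intervals $I_j$, and each contributes at most two endpoints to $\mathcal{E}$, giving $|S| \leq 2(\Delta + 1)$. Combining this with the query and scan costs yields the claimed $O(\log n + \Delta)$ running time.

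I expect the only non-mechanical part of the argument to be the endpoint-counting step that converts a bound on the number of intersecting intervals into a bound on $|S|$; the remainder of the proof is a direct appeal to the interval-tree operation costs in Table~\ref{table:1} and to the constant-time access to $h_t$ through the head of $Q_t$.
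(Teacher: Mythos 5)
Your proof is correct and follows the same approach as the paper's: an $O(\log n + |S|)$ intersection query on $\mathcal{E}$, an $O(|S|)$ scan reading $h_t$ in $O(1)$ from the head of $Q_t$, and the bound $|S| = O(\Delta)$. You are slightly more explicit than the paper in the endpoint-counting step (the paper simply asserts that $I$ intersects at most $\Delta$ intervals and therefore $|S| = O(\Delta)$, leaving the factor of two endpoints per interval and the possible inclusion of $I$ itself implicit), but this is a matter of added rigor rather than a different argument.
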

\begin{proof}
Procedure $\computemaxheightSLS$ performs an intersection query of $I$ on $\mathcal{E}$ (Line 2 in Algorithm~\ref{alg:MAX-HEIGHT-OF-SLS-IN-INTRVAL-usingDynamicArray}). This query returns the set $S$ of all the endpoints which intersect with $I$. It computes the maximum among the  height of the SLS at  endpoints in $S$, denoted by $h$ (Line 4-11). The procedure returns $h$ and the set $S$.\par 	
Running time of $\computemaxheightSLS$ is dominated by intersection query in Line 2 and loop from Line 4-11. Since $\Delta$ is the maximum degree in the associated interval graph, interval $I$ can intersect with at most $\Delta$ intervals. Therefore, $|S|$ = $O(\Delta)$. Further, $|\mathcal{E}| \leq 2n$. Thus the worst case time taken by $\computemaxheightSLS$ is $O(\log n + \Delta)$. Hence the Lemma.	   	
\end{proof}
\end{minipage}
   \hspace{.5cm}
   \begin{minipage}{.45\textwidth}
  The different procedures used in $\Insertion$ are described and analyzed below. \\
   \begin{algorithm}[H]
  \caption{ $\getSLS$($\mathcal{I}$,$t$) computes the supporting line segment at endpoint $t$. }
  %Here the SLS is maintained using a dynamic array and doubly linked list.}
	\begin{algorithmic}[1]
		\Procedure{$\getSLS$}{$\mathcal{I}$,$t$}
		\State $\mathcal{I}_t \leftarrow \mathcal{I}$.intersection($[t,t]$)
		\State $h_t \leftarrow 0$
		\State $U \leftarrow$ Empty Set	
		\State $A_t \leftarrow [\ ]$
		\State $Q_t \leftarrow $ Empty doubly linked list
		\For{$I$ \textbf{in} $\mathcal{I}_t$}
		\State $h_t \leftarrow \max(h_t,L(I))$
		\State $U$.insert$(L(I))$
		\EndFor
		\For{$i$ \textbf{in} $\{0,1,2,...h_t\}$} 
		\State $A_t$.at$(i) \leftarrow $  0
		\State $Q_t$.insert$(i)$
		\EndFor
		\For{$i$ \textbf{in} $U$.begin() \textbf{to} $U$.end()} 
		\State $A_t.$at$(i) \leftarrow$  1
		\State $Q_t$.delete$(i)$
		\EndFor
		\State \Return $A_t, Q_t$
		\EndProcedure
	\end{algorithmic}
	\label{alg:GET-SLS-dynamicArray}
	\end{algorithm}
%%%%%%%%%%%%%%%%%%%%%%%%%%%%%%%%%%%%%%%%%%	
 \begin{algorithm}[H]
     \caption{$\computemaxheightSLS$($\mathcal{E}$,$I$) computes the set $S$ of endpoints contained in interval $I$ and the maximum value among the heights of the SLS at these points.}
     % compute the maximum height $h$ of all the supporting line segments at these endpoints. In this procedure SLS is maintained using dynamic array and doubly linked list.}
	\begin{algorithmic}[1]
		\Procedure{$\computemaxheightSLS$}{$\mathcal{E}$,$I$} 
		\State $S \leftarrow  \mathcal{E}$.intersection($I$)
		\State $h \leftarrow 0$
		\For{$t$ \textbf{in} $S$}
		\If{$Q_t$.begin() is NULL}
		\State $h_t \leftarrow $0
		\Else 
		\State $h_t \leftarrow {Q_t}$.begin() 
		\EndIf
		\State $h \leftarrow \max(h,h_t)$
		\EndFor
		\State return $S, h$
		\EndProcedure
	\end{algorithmic}
	\label{alg:MAX-HEIGHT-OF-SLS-IN-INTRVAL-usingDynamicArray}
	\end{algorithm}
	\end{minipage}
	%\vspace{-3cm}
	        
   \noindent
 \begin{minipage}{.45\textwidth}
  \textbf{Procedure $\updateEndPoints$$(S,L(I))$: } This procedure, described in Algorithm~\ref{algo:UPDATE-EDGE-POINTS-usingDynamicArray}, is used to update the SLS at the endpoints contained in set $S$: let $l = L(I)$. For every endpoint $t \in S$, size of $A_t$ is checked.  
\begin{enumerate}
	\item 
	%\label{case:1}
	{\bf Case A:} If $l < {A_t}$.size(). In this case, $A_t[l]$ is set to $1$. The pointer stored in $A^{\prime}_t[l]$ is used to delete the node in $Q_t$ which stores the value $l$ and $A^{\prime}_t[l]$ is set to NULL subsequently. If the deleted node in $Q_t$ was the head node, then the head node is updated to the next node in $Q_t$ and thus the value of $h_t$ also gets updated.  
	%%%%%%%%%%%%%%%%%%%%%%%%%%%%%%%%%%%%
	\item 
	%\label{case :2}
	{\bf Case B:} If $l \geq {A_t}$.size(). In this case, the standard doubling technique for expansion of dynamic arrays \cite{Cormen:2009:IAT:1614191} is used to increase the size of $A_t$ until ${A_t}$.size() becomes strictly greater than $l$.  $A^{\prime}_t$ is also expanded along with $A_t$ and appropriate nodes are inserted to $Q_t$. Once ${A_t}$.size() $> l$, the remaining operations are same as in the case \textbf{A}.  
	\end{enumerate}
  \begin{algorithm}[H]
	\caption{$\updateEndPoints$($S$,$L(I)$) is used to update the supporting line segments at the endpoints contained in set $S$ for level value $L(I)$. In this procedure, SLS is maintained using dynamic array and doubly linked list.}
	\begin{algorithmic}[1]
		\Procedure{$\updateEndPoints$}{$S$,$L(I)$}
		\State $l \leftarrow L(I)$ 
		\For{$t$ \textbf{in} $S$}
		\If{$l > A_t$.size()}
		\For{$i$ \textbf{in} $\{A_t$.size()$+1, l\}$ }
		\State {$A_t$.at$(i) \leftarrow$ 0}
		\State $Q_t$.insert$(i)$
		\EndFor
		\EndIf
		\State{$A_t$.at$(l) \leftarrow$  1}
		\State{$Q_t$.delete$(l)$}
		\State{$h_t \leftarrow Q_t$.begin()} 
		\EndFor
		\EndProcedure
	\end{algorithmic}
	\label{algo:UPDATE-EDGE-POINTS-usingDynamicArray}
\end{algorithm}
       \end{minipage}
  \hspace{.6cm}
  \begin{minipage}{.45\textwidth}
 
	\begin{lemma}
\label{lem:runningTimeUpdateEndpoints}
Procedure $\updateEndPoints$ takes $O(\Delta)$ amortized time.
\end{lemma}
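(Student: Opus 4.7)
The plan is to analyze the cost per endpoint in $S$ and then aggregate over the sequence of $n$ insertions, using the amortized guarantee of the dynamic-array primitive from Table~\ref{table:1}.

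First, I would bound $|S|$. By definition $S$ is the set of endpoints stored in $\mathcal{E}$ that are contained in the inserted interval $I$. Any such endpoint is an endpoint of $I$ itself or of an interval intersecting $I$, so $|S| \le 2\Delta + 2 = O(\Delta)$.

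Next, I would analyze the per-endpoint cost inside the loop (lines 3--13) of Algorithm~\ref{algo:UPDATE-EDGE-POINTS-usingDynamicArray}. If $l < A_t.\mathrm{size}()$ (Case A), only the operations on lines 10--12 run; the deletion $Q_t.\mathrm{delete}(l)$ is $O(1)$ worst case since the pointer to the appropriate node of $Q_t$ is stored in $A'_t[l]$, and $Q_t.\mathrm{begin}()$ is $O(1)$, so Case A contributes $O(1)$ per endpoint and $O(|S|)=O(\Delta)$ per call. If $l \ge A_t.\mathrm{size}()$ (Case B), the expansion loop creates $l - A_t.\mathrm{size}()$ new cells in $A_t$ (and $A'_t$), each at amortized $O(1)$ cost by the standard doubling bound for dynamic arrays, followed by one worst-case $O(1)$ insertion into $Q_t$ per new cell; the tail of the loop body then behaves as in Case A.

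Finally, I would aggregate Case B over all $n$ insertions. The key observation is that $A_t$ is a dynamic array attached persistently to the endpoint $t$: it is created once in $\getSLS$ and thereafter only grown, so each cell of $A_t$ is created at most once across the entire execution. Since the level value of any interval is at most $\omega - 1 \le \Delta$ (Lemma~\ref{lem1: max-height-equal-p(v)}), each $A_t$ ever grows to size at most $\omega$, and since $|\mathcal{E}| \le 2n$, the total cell-creation cost summed over all calls and all endpoints is $O(n\omega) = O(n\Delta)$. Adding the $O(n\Delta)$ total Case-A contribution gives a total of $O(n\Delta)$ work across the $n$ calls, hence amortized $O(\Delta)$ per invocation of $\updateEndPoints$. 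The main subtlety to be careful about is carrying the dynamic-array amortization across distinct invocations of the procedure rather than restarting it per call; once this persistence is noted, the bound follows directly.
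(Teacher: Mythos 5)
Your proof is correct and follows essentially the same approach as the paper: it bounds $|S| = O(\Delta)$, charges the mandatory per-endpoint work at $O(1)$ each, and amortizes the dynamic-array expansion cost across the entire run by observing that each per-endpoint array is persistent, grows only incrementally to size $O(\omega)$, and that there are $O(n)$ such arrays, giving $O(n\omega) = O(n\Delta)$ total expansion work. The paper's "task M / task A" decomposition corresponds exactly to your Case A / Case B split, and both arguments hinge on the same key observation you flag at the end -- that the amortization must be carried across invocations, not restarted per call.
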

\begin{proof}
For every endpoint $t \in S$, size of $A_t$ is checked in constant time. To analyze the time required in $\updateEndPoints$, we observe that every update must perform the operations as described in case \textbf{A}. We refer to these operations as \textit{task M}(M stands for mandatory). Some updates have to perform additional operations as described in case \textbf{B}. We refer to these operations as \textit{task A}(A stands for additional). 
The time taken by each update to perform \textit{task M} is $|S| \times O(1) = O(|S|)$. Since $\Delta$ is the maximum degree, it follows that $|S| \leq \Delta$. Therefore, every update takes $O(\Delta)$ time to perform \textit{task M} in the worst case. To analyze the time required to perform \textit{task A}, we crucially use the fact that our algorithm is incremental and hence only expansions of the dynamic arrays take place. Since $\omega$ is the size of the maximum clique, it follows that the maximum size of a dynamic array throughout the entire execution of the algorithm is upper bounded by $2\omega$. Over a sequence of $n$ insertions, the total number of endpoints is upper bounded by $2n$. Therefore, we maintain at most $4n$ dynamic arrays. For every such array,  total number of inserts in the array and the associated doubly linked list is at most $2 \omega$ in the entire run of the algorithm. An insertion into the dynamic array takes constant amortized time and insertion into doubly linked list takes constant worst case time. Therefore, during the entire run of the algorithm total time required to perform \textit{task A} on one dynamic array and its associated doubly linked list is $O(\omega)$. This implies that during the entire run of the algorithm total time spent on \textit{task A} over all the updates is $\leq 4n \times O(\omega)$. Let ${\sf T}$ be the total time spent on $\updateEndPoints$ at the end of $n$ insertions. This is the sum of the total time for \textit{task A} and the total time for \textit{task M}. Further, since $\omega \leq \Delta + 1$, it follows that  ${\sf T} = O(n \Delta)$. Hence the Lemma. 
%Therefore, \\
%${\sf T} \leq 4n \times O(\omega) + n \times O(\Delta)$\\
%${\sf T} \leq 4n \times O(\Delta) + n \times O(\Delta)$ [since $\omega \leq \Delta + 1$]\\
%Therefore, ${\sf T} = O(n \Delta)$. Hence the Lemma.
\end{proof}
\end{minipage}
\vspace{2cm}
\section{Fully dynamic interval coloring}
\label{sec:fully-dynamic-algorithm}
\noindent
An update in the update sequence in the fully dynamic setting consists of an interval to be colored  or a previously colored interval to be deleted.  The $i$-th update is {\sf Insert$(I_i)$} where $I_i$ is the interval presented to the algorithm.  The update {\sf Delete$(I_i)$} is to delete the  interval $I_i$ that was inserted during the $i$-th update.   At the end of each update, the invariants are maintained such that it follows that the intervals are colored with at most $3 \omega - 2$ colors, where $\omega$ is the size of the maximum clique in the interval graph just after the update.  
%At the end of each update, each interval $I$ is assigned a color $(L(I),o(I))$, such that $L(I) \leq \omega-1$ and property {\bf P} is satisfied.    {\bf After every update the algorithm ensures that for each interval $I$, $L(I)$ is the smallest value such that there exists an endpoint $t$ contained in $I$ such that  the height of the SLS at $t$ is at least $L(I)$.}
For an insert update, 
%
%update for insertion of a new interval and deletion of a previously colored interval. In the fully dynamic setting, $I_i$ represents the interval inserted in the $i$-th update step and value of $i$ ranges from $1$ to {\sf U}, where {\sf U} is the total number of updates. In the update sequence, an update of type {\sf Insert$(I_i)$} signifies that the current update is the $i$-th update which is presenting a new interval to the algorithm for insertion. An update of type {\sf Delete$(I_i)$} signifies that the interval $I_i$ with color $(L(I_i), o(I_i))$, which was inserted previously in the $i$-th update, is presented to the algorithm for deletion.\par
we use $\Insertion$ (Algorithm~\ref{alg:insert}) to ensure that the invariants are maintained at the end of the update.  However, to get a good bound on the update time, we use a different set of data structures to maintain SLS (see Section~\ref{subsec:DSpreliminaries}).    Therefore, the major result in this section is to handle the delete of a previously colored interval.  There are two aspects in the algorithm:  the first one is to ensure that the invariants are maintained after a delete, and the second one is to ensure that the update is efficient.  
%To handle the deletion of an interval $I_i$ (recall, that this is the interval inserted in the $i$-th update), we use Algorithm $\Deletion$.
%%%%%%%%%%%%%%%%%%%%%%%%%%%%%%%%%%%%%%%%%%%%%%%%%
\subsection{Algorithm $\Deletion$ for {\sf Delete$(I_i)$}}
\label{subsec:deletionAlgo} 
\noindent
%Let $I_i = [l_i,r_i]$ with color $(L(I_i),o(I_i))$ be the interval presented for deletion in the current update step.
Let $(L(I_i),o(I_i))$ be the color of $I_i$ at the beginning of the update.   The pseudo code and steps of Algorithm $\Deletion$ are presented in Algorithm~\ref{alg:delete}. \\
%and a summarized description of the procedures used in Algorithm~\ref{alg:delete} in Table~\ref{table:3}. 
%Algorithm $\Deletion$ supports deletion of an interval by implementing the steps as described below:\\
%%%%%%%%%%%%%%%%%%%%%%%%%%%%%%%%%%%%%%%%%%%%%%%%%
\begin{minipage}{.45\textwidth}
\vspace{2mm}	
{\bf Step 1:} Remove interval $I_i$ from the set of intervals $\mathcal{I}$ and hash table $T[L(I_i)]$.(Line 1 and Line 2 in Algorithm~\ref{alg:delete})\\
{\bf Step 2:} Compute the set of endpoints contained in $I_i$. Let $\mathcal{E}_i$ = $\mathcal{E} \cap I_i$. (Line 3 in Algorithm~\ref{alg:delete})\\
\textbf{Step 3:} For each endpoint $t \in \mathcal{E}_i$, update SLS $e_t$  to reflect the deletion of interval $I_i$. (Line 4 to 8 in Algorithm~\ref{alg:delete})\\
\textbf{Step 4:} Compute the set of intervals intersecting with $I_i$ and with level value strictly greater than $L(I_i)$. Let 
$\D$ = $\{I | L(I) > L(I_i), I_i \cap I \neq \phi\}$ (Line 10 in Algorithm~\ref{alg:delete}). Sort $\D$ in the increasing order of level value and break the ties in the increasing order of time of insertion.
For every interval $I$ in $\D$, repeat the following steps:
\begin{itemize}
	\vspace{-3mm}
	\item Compute the set of endpoints intersecting with $I$. Let $S$ = $\mathcal{E} \cap I$. For every endpoint $t \in S$ compute the height $h_t$ of the SLS $e_t$. Compute $h$ = $\max\{h_t | t \in S\}$ (Line 11 in Algorithm~\ref{alg:delete}).
	\vspace{-2.5mm}
	\item If $h \geq L(I)$ then no further processing is required for interval $I$. 
	%Then $L(I_j)$ is unchanged. Prior to this update, we maintained invariant $\C$ for interval $I_j$. Invariant $\C$ persists for $I_j$ after the update.
	\vspace{-2.5mm}
	\item If $h < L(I)$ then following steps are executed (Line 17 to 25 in Algorithm~\ref{alg:delete}):
	%, Invariant $\C$ is violated for $I_j$. Following steps are executed to restore Invariant $\C$ for $I_j$ (Line 17 to 25 in Algorithm~\ref{alg:delete}):
	\begin{itemize}
		\vspace{-2.5mm}
		\item Change level value of $I$ from $L(I)$ to $h$.   
		\item Recompute the offset value $o(I)$ for $I$ with the new level value $h$.
		\item Update SLS $e_t$ for every point $t \in S$ to reflect the change in level value of $I$. 
	\end{itemize}
\end{itemize}
\end{minipage}
\hspace{0.6cm}
%%%%%%%%%%%%%%%%%%%%%%%%%%%%%%%%%%%%%%
%%%%%%%%%%%%%%%%%%%%%%%%%%%%%%%%%%%%%%%%%%%%%%%%%%%%%%%%%%%%%%%
\begin{minipage}{.45\textwidth}
\begin{algorithm}[H]	
	\caption{$\Deletion$($I_i = [l_i,r_i]$) is used to handle deletion of interval $I_i$}
	\begin{algorithmic}[1]
		\State $T[L(I_i)].$delete$(I_{i})$ 
		\State $\mathcal{I}$.delete($I_{i}$)
		\State $\mathcal{E}_i \leftarrow \mathcal{E}$.intersection($I_{i}$)
		\For{$t$ \textbf{in} $\mathcal{E}_i$}
		\If{$\mid T[L(I_i)]$.intersection($t) \mid$ $ = 0$}
		\State ${NZ}_t$.delete($L(I_i)$ 
		\State  $Z_t$.insert($L(I_i)$)
		%\State $Z_t$.insert($L(I_i)$)
		\EndIf
		\EndFor
		\State{Compute $\D \leftarrow \{I | L(I) > L(I_i), I_i \cap I \neq \phi\}$. Sort $\D$ in the increasing order of level value and break the ties in the increasing order of time of insertion.} 
		\For{$I$ \textbf{in} $\D$}
		\State $S, h \leftarrow $ \Call{$\computemaxheightSLS$}{$\mathcal{E}$,$I$}
		\If{$h \geq L(I)$} 
		\State \textbf{continue}
		\EndIf
		\State $T[L(I)].$delete($I$)
		\State $T[h].$insert($I$)
		\For{$t$ \textbf{in} $S$}
		\If{$\mid T[L(I)]$.intersection($t) \mid$ $= 0$}
		\State ${NZ}_t$.delete($L(I)$ 
		\State  $Z_t$.insert($L(I)$)
		\EndIf
		\State {${NZ}_t$.insert($h$) 
		\State $Z_t$.delete($h$)}
		\EndFor
		\State $L(I) \leftarrow h$
		\State \Call{$\OFFSET$}{$I$}
		\EndFor
		
	\end{algorithmic}
	\label{alg:delete}
\end{algorithm}
\end{minipage}
%%%%%%%%%%%%%%%%%%%%%%%%%%%%%%%%%%%%%%%%%%%%%%%%%%%%%%%%%%%%%%%%%%%%%%
\subsection{Correctness of  $\Deletion$}
\label{subsec:correctnessDeletionAlgo}
\noindent
The first crucial property maintained by Algorithm $\Deletion$ is that at the end of the update, for each interval $I$, there exists a point $t$ in $I$ such that the height of the SLS at $t$ is at least $L(I)$.  We refer to this property as  {\bf Invariant $\C$}.  The second crucial property maintained is property {\bf P}.  The following lemma proves a bound on the number of colors used.
%These two properties are also maintained after Algorithm $\Insertion$ (Algorithm~\ref{alg:insert}).
%In the case of  deletion of an interval, it is crucial to ensure that level value assigned to any new interval after a deletion is at most $\omega - 1$ and such an interval satisfies property {\bf P}. In order to achieve the same, every interval $I_j$ which remains after a deletion must satisfy the following invariant:\\    
%%%%%%%%%%%%%%%%%%%%%%%%%%%%%%%%%%%%%%%%%%
%{\bf Invariant $\C$:} An interval $I_j$ is said to satisfy Invariant $\C$ if there exists an endpoint $t$ contained in $I_j$ such that the height of the SLS at $t$ is at least $L(I_j)$.\par 
%%%%%%%%%%%%%%%%%%%%%%%%%%%%%%%%%%%%%%%%%%
%An interval is called \textit{dirty} after a deletion if invariant $\C$ is violated for that interval. An interval which is not dirty after a deletion is called \textit{clean}.
%Let $\D$ denotes the set of all dirty intervals after the deletion of interval $I_i$.
%Lemma~\ref{lem:3-competitive-fully-dynamic} shows that maintaining invariant $\C$ ensures property {\bf P}. 
%%%%%%%%%%%%%%%%%%%%%%%%%%%%%%%%%%%%%%%%%%%%%
\begin{lemma}
	\label{lem:3-competitive-fully-dynamic}
	At the end of each update, the number of colors used is  at most $3 \omega - 2$, where $\omega$ is the size of the maximum clique in the interval graph just after the update.	
\end{lemma}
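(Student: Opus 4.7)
The plan is to reduce the claim to two structural invariants that the fully dynamic algorithm preserves at the end of every update: Invariant $\C$, which asserts that every currently colored interval $I$ contains a point $t \in I$ with SLS height $h_t \geq L(I)$; and Property $\textbf{P}$, which asserts that $\{I : L(I) = 0\}$ is independent and each set $\{I : L(I) = l\}$ for $l \geq 1$ induces a subgraph of maximum degree at most 2. For insert updates, $\Insertion$ is an implementation of $\KTSLS$ so both invariants hold by Lemma~\ref{lem: Property-P}. For delete updates, I will take as given that $\Deletion$ re-establishes both invariants (the content of subsequent lemmas in Section~\ref{subsec:correctnessDeletionAlgo}) and derive the color bound from them.

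Given Invariant $\C$, I first show that every surviving interval $I$ satisfies $L(I) \leq \omega - 1$, where $\omega$ is the clique number of the interval graph just after the update. Writing $l = L(I)$, Invariant $\C$ supplies a point $t \in I$ with $h_t \geq l$, and by the definition of $h_t$ there are intervals $J_0, \ldots, J_{l-1}$, all containing $t$, with $L(J_k) = k$ for each $k$. Since $t \in I$ as well, the set $\{I, J_0, \ldots, J_{l-1}\}$ pairwise-intersects at $t$ and is therefore a clique of size $l+1$ in the current interval graph, so $\omega \geq l + 1$ as required.

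Given Property $\textbf{P}$, I bound the number of offsets actually used per level. Level $0$ is independent, so every interval there can safely carry offset $1$, contributing the single color $(0,1)$. For each level $l \in \{1, \ldots, \omega - 1\}$, the induced subgraph has maximum degree at most $2$, so Step II picks an offset in $\{1,2,3\}$ distinct from the at most two same-level neighbours, contributing at most the three colors $(l,1), (l,2), (l,3)$. Summing over all levels gives at most $1 + 3(\omega - 1) = 3\omega - 2$ distinct colors, which is the claimed bound.

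The main obstacle is not this arithmetic but the hypothesis that $\Deletion$ actually restores Invariant $\C$ and Property $\textbf{P}$ after the cascade of level demotions triggered by removing $I_i$ terminates. In particular, processing $\D$ in increasing order of level (with ties broken by time of insertion) is essential: it ensures that when an interval $I \in \D$ is relocated to its new level $h < L(I)$, the intervals already present in $T[h]$ have stable, properly-assigned offsets, so the subsequent $\OFFSET(I)$ call cannot fail to find an offset in $\{1,2,3\}$, and no further dirty interval is created at levels already processed. Once those structural claims are discharged in the follow-up lemmas, the color count established above is immediate.
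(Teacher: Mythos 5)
Your proof is correct and takes essentially the same approach as the paper: reduce the color bound to Invariant $\C$ and Property $\textbf{P}$, dispatch insertions through the $\KTSLS$ analysis and deletions through the structural lemmas for $\Deletion$, and then count $1 + 3(\omega - 1) = 3\omega - 2$ colors. One minor slip is attributing Invariant $\C$ for insertions to Lemma~\ref{lem: Property-P} alone; the paper instead points to Lemma~\ref{lem1: max-height-equal-p(v)} (and really the definition of $L(I_i)$ in Step~I of $\KTSLS$) for that, while your explicit clique $\{I, J_0, \ldots, J_{l-1}\}$ argument for $L(I) \leq \omega - 1$ is a welcome precision the paper leaves implicit.
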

\begin{proof}
We start by assuming that prior to the update the invariant $\C$ and Property {\bf P} is satisfied by the coloring.  We show that after the update they continue to be satisfied.
For update {\sf Insert$(I_i)$} invariant $\C$ and Property {\bf P} are satisfied by the coloring at the end of the update.  This follows from Lemma \ref{lem1: max-height-equal-p(v)} which shows that invariant $\C$ is maintained by Algorithm $\Insertion$ and  Lemma~\ref{lem: Property-P} which proves that Algorithm $\Insertion$ maintains Property {\bf P}. 

For update {\sf Delete$(I_i)$} we know from Lemma~\ref{lem:set-of-dirty-intervals} that the set $\D$ consists of those intervals whose level values are changed by Algorithm $\Deletion$ to ensure that invariant $\C$ is maintained.  
Algorithm $\Deletion$ iterates over each interval $I$ in $\D$ and ensures, by modifying $L(I)$ if necessary, that there is a point $t \in I$ such that
$h_t \geq L(I)$.  Whenever $L(I)$ is modified, it is modified to be the maximum $h_t$, over all $t \in I$.   This choice of $L(I)$ also ensures that 
$I$  has at most two neighbors in the level $L(I)$ and none of the neighbors with level number $L(I)$ has a containment relationship with $I$.  The proof uses the same argument in Lemma~\ref{lem: Property-P}.  Thus Property {\bf P} is maintained by Algorithm $\Deletion$.

%by contradiction by assuming that $I_j$ is the first interval for which this claim is false.  The contradiction is obtained by using the argument in Lemma~\ref{lem: Property-P}.\par 
%Further,  for each $I_j \in \D$  the algorithm maintains the following invariant that  whenever $L(I_j)$ is modified: 
%Therefore, Algorithms $\Deletion$ maintains invariant $\C$.\par %that after every update, for each interval $I_i$, there is a point $t \in I_i$ such that $h_t \geq L(I_i)$.\par  
%For $\Deletion$, invariant $\C$ ensures the largest level value of any interval is at most $\omega - 1$.  Further, the intervals whose level values are $0$ form an independent set.     
	%Further,  for each $I_j \in \D$  the algorithm maintains the following invariant that  whenever $L(I_j)$ is modified: 
	%$L(I_j)$ is modified to be the maximum $h_t$, over all $t \in I_j$.  
Therefore, it follows that after the update, Property ${\bf P}$ is satisfied by the level values of the intervals and from invariant $\C$ it follows that the largest level value of any interval is at most $\omega - 1$.    Further, the intervals whose level values are $0$ form an independent set.     Therefore, the number of colors used by the algorithm at the end of an update is $3 \omega - 2$.  
 Consequently, the algorithm uses at most $3 \omega - 2 $ colors after each update step. Hence the Lemma.
\end{proof}
\noindent
%%%%%%%%%%%%%%%%%%%%%%%%%%%%%%%%%%%%%%%%%%%%%%%%%%%%%%%%%%%%%%%%%%%%%%%%%%%%%%%%%%%%%%
We now prove that on the update {\sf Delete$(I_i)$}, it is sufficient for Algorithm $\Deletion$ to consider the set $\D$ which is defined to be $\D$ = $\{I | L(I) > L(I_i), I_i \cap I \neq \phi\}$.  
On update {\sf Delete$(I_i)$}, an interval is called \textit{dirty} if during execution of Algorithm $\Deletion$  invariant $\C$ is violated for that interval. 
An interval which is not dirty during the execution of Algorithm $\Deletion$  is called \textit{clean}.
In Lemma~\ref{lem:set-of-dirty-intervals}, we show that $\D$ is a super set of all such intervals which become dirty. 
%
%The efficiency of the deletion update depends on the efficient computation of the set $\D$ and clean-up of $\D$ so that at the end of the update, there are no dirty intervals.  We also must ensure that during the clean-up of the set $\D$ no clean interval becomes dirty. On deletion of interval $I_i$, we initialize the set $\D$ = $\{I_j | L(I_j) > L(I_i), I_i \cap I_j \neq \phi\}$. In other words, the set $\D$ consists of every interval $I_j$ such that $L(I_j) > L(I_i)$ and $I_j$ intersects with $I_i$. Then, $\D$ is sorted in the increasing order of level value of the intervals and ties are broken in the increasing order of time of insertion.  
%Lemma \ref{lem:set-of-dirty-intervals} shows that set $\D$ is a superset of all intervals which became dirty after deletion of interval $I_i$.  Thus we bound the set $\D$ by showing that only those intervals which are adjacent to $I_i$ and whose level value is more than $L(I_i)$ can be dirty during the deletion of $I_i$.
%%%%%%%%%%%%%%%%%%%%%%%%%%%%%%%%%%%%%%%%%%%%%%%
\begin{lemma}
	\label{lem:set-of-dirty-intervals}
	During the execution of Algorithm $\Deletion$ on the update {\sf Delete$(I_i)$}, if an interval $I$ becomes dirty, then $I \in \D$.
%	Let $I_i$ be the deleted interval and let $(L(I_i),o(I_i))$ be the color of  $I_i$. Then the set $\D$ is a super set of all the intervals those may become dirty after the delete. 
\end{lemma}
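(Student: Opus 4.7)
The plan is to show that the only intervals that can become dirty over the course of $\Deletion$ are those in $\D$, by tracking the monotonicity of the SLS heights through each step of the algorithm.

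First, I would analyze the effect of removing $I_i$ on each SLS height $h_t$. At any endpoint $t$, $h_t$ is the smallest level value missing from $levels(\mathcal{I}_t)$. Removing $I_i$ changes this set only at endpoints $t \in I_i$ where $I_i$ is the unique interval at level $L(I_i)$ containing $t$; a case analysis based on whether $h_t^{\text{old}}$ is larger or smaller than $L(I_i)$ shows that in every case $h_t^{\text{new}} \geq \min\{h_t^{\text{old}}, L(I_i)\}$, and $h_t^{\text{new}} = h_t^{\text{old}}$ for $t \notin I_i$. Using this, I would verify that every interval $I \neq I_i$ outside $\D$ remains clean immediately after the deletion. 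If $I \cap I_i = \emptyset$, every endpoint of $I$ lies outside $I_i$, so its SLS height is unchanged and any previous supporting point continues to support $I$. If $L(I) \leq L(I_i)$, then any supporting endpoint $t$ of $I$ with $h_t^{\text{old}} \geq L(I)$ satisfies $h_t^{\text{new}} \geq \min\{h_t^{\text{old}}, L(I_i)\} \geq L(I)$, so $t$ still supports $I$. Hence invariant $\C$ is preserved for every $I \notin \D \cup \{I_i\}$.

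The key cascading step is to show that re-leveling an interval $I$ from its current level $L = L(I)$ to a smaller value $h = \max_{t \in S} h_t < L$ cannot lower any SLS height. For $t \notin I$, nothing changes. For $t \in I$ we have $h_t^{\text{old}} \leq h < L$; removing $L$ from $levels(\mathcal{I}_t)$ leaves $h_t$ unchanged because $L$ was strictly above the smallest missing level $h_t^{\text{old}}$, and inserting the level $h$ either leaves $h_t$ unchanged (when $h > h_t^{\text{old}}$) or strictly increases it (when $h = h_t^{\text{old}}$). Hence $h_t^{\text{new}} \geq h_t^{\text{old}}$ at every endpoint, so every interval that is clean before the re-leveling remains clean.

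Combining these observations inductively along the sequence of re-levelings performed by $\Deletion$, the set of dirty intervals at any point during the algorithm stays contained in $\D$. The main obstacle is the monotonicity argument for re-levelings, which crucially uses the inequality $h < L(I)$ together with the characterization of $h_t$ as the smallest missing level; without these the cascading effect could in principle propagate dirtiness to intervals outside $\D$.
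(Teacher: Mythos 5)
Your proposal is correct and takes a genuinely cleaner route than the paper's proof, although both ultimately rest on analyzing how the SLS heights evolve. The paper's argument fixes a point $t$ whose height does not drop when $I_i$ is removed, then argues --- crucially leaning on the fact that $\D$ is processed in increasing order of level (ties by insertion time) --- that no interval through $t$ at a smaller level can be re-leveled, so $h_t$ stays high and every interval supported by such a $t$ stays clean; dirty intervals must therefore meet $I_i$ and have level above $L(I_i)$. Your proof instead isolates a single monotonicity lemma: re-leveling any interval from $L$ down to $h=\max_{t\in S}h_t<L$ can never decrease any $h_t$, because the removed level $L$ lies strictly above the current smallest gap $h_t^{\mathrm{old}}\le h<L$, and inserting $h$ only closes gaps. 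Combined with the base-case check that intervals outside $\D$ retain a supporting point immediately after the deletion (via $h_t^{\mathrm{new}}\ge\min\{h_t^{\mathrm{old}},L(I_i)\}$), this gives the containment of the dirty set in $\D$ by a straightforward induction along the execution. The advantage of your version is that it does not depend on the sort order of $\D$ at all; the monotonicity lemma makes the invariant order-robust and the inductive structure explicit, whereas the paper's proof needs the sorted order to justify that lower-level intervals through $t$ are not demoted before $I$ is reached, and leaves the underlying induction implicit. Both arguments are sound; yours is the more self-contained of the two.
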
 
\begin{proof}
On deletion of $I_i$, the supporting line segments are naturally classified into two sets: those whose height  reduces and those whose height does not reduce.  Let $t$ be a point for which the height does not reduce.  First we consider an interval $I$ which contains $t$, $h_t \geq L(I)$,  and $L(I) \leq L(I_i)$, and show that $I$ does not become dirty during the execution of Algorithm $\Deletion$.  This is because, during the execution of the algorithm, only intervals with level value greater than $L(I_i)$ are considered for  a reduction in level value.  Therefore, the height of $t$ will remain at least $L(I)$ throughout the execution of the algorithm.  Therefore, $I$ does not become dirty during the execution of Algorithm $\Deletion$.  

Next, let us consider an interval $I$ which contains $t$, $h_t \geq L(I)$, and $L(I) > L(I_i)$.  
%such that $h_t$ does not reduce and $L(I) \leq L(I_i)$, it follows that $I$ does not become dirty during Algorithm $\Deletion$.  The supporting line segments of an interval whose height is at most $L(I_i)$ do not reduce by definition
%By Lemma \ref{lem:3-competitive-fully-dynamic}, we know that prior to the update  {\sf Delete$(I_i)$} invariant $\C$ is satisfied by each interval.  
%Let $I$ be an interval such that $L(I) \leq L(I_i)$.   
%As  the  level number of $I'$ can only reduce upto the maximum height of  the SLS at a point $t'$ in $I'$.  
The algorithm considers the intervals $I' \in \D$ in increasing order of level number. 
Thus, it follows that the level value of an interval $I'$ which contains $t$ and for which $L(I') \leq L(I)$ will not reduce during the execution of Algorithm $\Deletion$.  Therefore, the height of the SLS at $t$ does not change throughout the execution of Algorithm $\Deletion$, and thus $I$ does not become dirty during the execution of Algorithm $\Deletion$.  
Therefore, an interval $I$ which becomes dirty during the execution of Algorithm $\Deletion$ must contain a $t$ whose height reduces on the deletion of $I_i$.  Thus, $I$ intersects with $I_i$ at $t$, and as we have proved above, $I$ must have level value more than $L(I_i)$. In other words it must be an element of $\D$.  Hence the Lemma.
\end{proof}
%%%%%%%%%%%%%%%%%%%%%%%%%%%%%%%%%%%%%%%%%%%%%%%%%%%%%%%%%%%%%%%%%%%%%%%%%%%%%%%%%%%
\subsection{Worst-case analysis of runtime of $\Deletion$ and  $\Insertion$}
\label{subsec:analysis-fully-dynamic-algorithm}
%In this section we present a worst case analysis of $\Deletion$ (Algorithm~\ref{alg:delete}) and $\Insertion$ (Algorithm~\ref{alg:insert}).
%We analyze worst case performance of every step in  Algorithm~\ref{alg:delete} and present a worst case update time for deletion. Similarly, we analyze worst case performance of every step in $\Insertion$ and present a worst case update time for insertion. 
%%%%%%%%%%%%%%%%%%%%%%%%%%%%%%%%%%%%%%%%%%%%%%%%%%%%%%%%%%%%%%%%%%%%%%%%%%%%%%%%%%%%%
%\subsubsection{Analysis of $\Deletion$ in the fully dynamic setting}
\label{subsubsec: analysis-fully-dynamic-deletion}
%Let $I_i = [l_i,r_i]$ with color $(L(I_i), o(I_i))$  
%be the interval presented to $\Deletion$ for deletion in the current update step. Let $n$ be the total number of intervals inserted during the course of execution of the algorithm. 
%%%%%%%%%%%%%%%%%%%%%%%%%%%%%%%%%%%%%%%%%
\begin{lemma}
%	\label{lem: deletion-step4}	
\label{thm : fully-dynamic-update-time-deletion}
	Algorithm \ref{alg:delete} implements  $\Deletion$ in $O(\Delta^2 \log n)$ time.
\end{lemma}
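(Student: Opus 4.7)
The plan is to walk down Algorithm~\ref{alg:delete} line by line, charging every operation to the data-structure costs given in Table~\ref{table:1} and to Lemma~\ref{lem:runtimeComputemaxheightfullydynamic} and Lemma~\ref{lem:time-greeyd-coloring} for the auxiliary procedures. The structure of the argument is: constant overhead (Lines~1--3), an $O(\Delta \log n)$ cleanup loop over $\mathcal{E}_i$ (Lines~4--8), an $O(\Delta\log n + \Delta\log\Delta)$ construction and sort of $\D$ (Line~10), and then the main loop (Lines~11--25), which is iterated at most $|\D|$ times and on which the entire $\Omega(\Delta^2\log n)$ bound concentrates.

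First I would bound the preprocessing. Lines~1--2 are a single delete on $T[L(I_i)]$ and $\mathcal{I}$, each $O(\log n)$. Line~3 is an intersection query on the endpoint interval tree $\mathcal{E}$, returning $\mathcal{E}_i$ of size $O(\Delta)$ in $O(\log n + \Delta)$ time. The loop in Lines~4--8 iterates $|\mathcal{E}_i| = O(\Delta)$ times; each iteration performs one intersection query on $T[L(I_i)]$ (at most $n$ intervals, so $O(\log n)$), plus constant-many updates on the red-black trees $Z_t$ and $NZ_t$, each of which holds at most $\omega$ entries, giving $O(\log \omega)$. So Lines~4--8 cost $O(\Delta \log n)$ in the worst case.

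Next I would handle the construction of $\D$ in Line~10. Since every interval in $\D$ intersects $I_i$, the intersection query on $\mathcal{I}$ returns $O(\Delta)$ intervals in $O(\log n + \Delta)$ time; scanning and filtering by level value is linear, and sorting on $(L(I), \text{insertion time})$ costs $O(\Delta \log \Delta)$. So constructing $\D$ costs $O(\log n + \Delta \log \Delta)$, well within the final budget.

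The main work is in the loop over $I \in \D$ (Lines~11--25), executed at most $|\D| \le \Delta$ times. For each such $I$, I would account as follows: $\computemaxheightSLS$ in Line~11 costs $O(\log n + \Delta \log \omega)$ by the fully-dynamic version of Lemma~\ref{lem:runtimeComputemaxheightfullydynamic}; the check in Line~12 is constant; the delete/insert of $I$ in $T[L(I)]$ and $T[h]$ (Lines~17--18) costs $O(\log n)$; the inner loop in Lines~19--24 iterates $|S| = O(\Delta)$ times and each iteration does one intersection query on $T[L(I)]$ of cost $O(\log n)$ together with $O(1)$ red-black tree updates of cost $O(\log \omega)$, for a total of $O(\Delta \log n)$; finally $\OFFSET$ in Line~25 is $O(\log n)$ by Lemma~\ref{lem:time-greeyd-coloring}. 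Summing these, each iteration is $O(\Delta \log n)$, and thus the whole loop is $O(\Delta^2 \log n)$, which dominates every other term in the analysis. The main obstacle to keep clean is the bookkeeping that the inner loop over $S$ is the term driving the $\Delta^2$ factor (not the $\computemaxheightSLS$ call, which alone would give only $\Delta^2 \log \omega$), and to verify that no iteration hides an extra $\log n$ factor in the red-black tree operations (it does not, because $|Z_t|, |NZ_t| \le \omega \le \Delta+1$). Adding the preprocessing $O(\Delta \log n)$ to the main loop $O(\Delta^2 \log n)$ yields the claimed worst-case bound.
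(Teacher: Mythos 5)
Your proposal is correct and follows essentially the same line-by-line decomposition as the paper: preprocessing and the construction/sort of $\D$ are shown to be $O(\Delta\log n + \Delta\log\Delta)$, and the bound concentrates on the doubly nested loop over $\D$ (size $O(\Delta)$) and $S$ (size $O(\Delta)$) with $O(\log n)$ work per inner iteration, yielding $O(\Delta^2\log n)$. You are in fact slightly more careful than the paper's proof in pinpointing that the $\log n$ factor in the inner loop comes from the intersection query on $T[L(I)]$ rather than the Red-Black tree updates (which are only $O(\log\omega)$), but the overall argument and bound are the same.
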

\begin{proof}
It is clear from the description that Algorithm \ref{alg:delete} implements each of the steps of $\Deletion$. The most expensive steps in Algorihtm \ref{alg:delete} are the computation and sorting of $\D$, and updating the level values of the intervals in $\D$, if necessary, in lines 11-28.  $\D$ is computed by an intersection query to the interval tree $\mathcal{E}$ and the worst-case running time  is $O(\log n + \Delta)$, where $\Delta$ is the number of endpoints in $\mathcal{E}$ which is the  number of intervals intersecting with $I_i$. Subsequently, sorting $\D$ takes time $O(\Delta \log \Delta)$ time.  Each iteration in lines 11-28 is 
for an interval $I \in \D$, and the running time of an iteration is dominated by the iteration in lines 18-25. The number of times lines 18-25 is executed is  $O(\Delta)$ which  the number of endpoints in $S$, where $S$ is $\mathcal{E} \cap I$.  In each of these iterations, the Red-Black trees are updated to reflect the height of the corresponding supporting line segment, and this takes $O(\log n)$ time, using the fact that the number of values in each Red-Black tree is at most $\omega$.  Thus the running time of Algorithm \ref{alg:delete} is $O(\Delta^{2}  \log n)$.
Hence the Lemma.
\end{proof}
\label{subsubsec: analysis-fully-dynamic-insertion}
\noindent
%Let $I_i$ = $[l_i,r_i]$ be the interval inserted in the $i$-th update step. Running time of $\Insertion$(Algorithm~\ref{alg:insert}) in the fully dynamic setting differs from incremental setting because of the use of Red-Black trees to maintain supporting line segments instead of dynamic arrays and linked list.
We next analyze
$\Insertion$ which is implemented by Algorithm ~\ref{alg:insert}  in the fully-dynamic setting  by representing
  supporting line segments as Red-Black trees. Recall that  in the incremental case they were represented by dynamic arrays and a doubly linked list.
%%%%%%%%%%%%%%%%%%%%%%%%%%%%%%%%%%%%%%%%%%
%The update time of $\Insertion$(Algorithm~\ref{alg:insert}) is given by the following theorem.
\begin{lemma}
	\label{thm :fully-dynamic-update-time-insertion}
	Algorithm \ref{alg:insert} with supporting line segments represented as Red-Black trees implement $\Insertion$ in the fully-dynamic setting.  
	Algorithm \ref{alg:insert} inserts an interval in worst case $O(\log n + \Delta \log \omega)$ time.
%	In the fully dynamic setting, insertion of an interval takes  $O(\log n + \Delta \log \omega)$ time in the worst case.
\end{lemma}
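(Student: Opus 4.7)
The plan is a straightforward line-by-line accounting of Algorithm~\ref{alg:insert}, observing that its control flow is identical in the incremental and fully-dynamic cases: only the underlying representation of the supporting line segments differs. In the incremental case SLS was stored as a dynamic array plus a doubly linked list; in the fully-dynamic case we use the pair of Red-Black trees $Z_t$ and $NZ_t$ described in Section~\ref{subsec:DSpreliminaries}. All the work done by the insertion routine is thus encapsulated in four subroutines whose fully-dynamic running times are listed in Table~\ref{table:2}, namely $\getSLS$, $\computemaxheightSLS$, $\updateEndPoints$, and $\OFFSET$.

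The verification I would carry out is the following. Lines 1--11 perform $O(1)$ calls to $\mathcal{I}.\text{insert}$, $\mathcal{E}.\text{insert}$, and intersection queries on $\mathcal{E}$, each costing $O(\log n)$ worst case by Table~\ref{table:1}, plus at most two calls to $\getSLS$, each costing $O(\log n + \omega \log \omega)$ worst case by Lemma~\ref{lem:runtimeComputeSLSfullyDynamic}. Line 12 invokes $\computemaxheightSLS$, which by Lemma~\ref{lem:runtimeComputemaxheightfullydynamic} costs $O(\log n + \Delta \log \omega)$ in the worst case, since the returned set $S$ has size $O(\Delta)$ and each height lookup on the Red-Black trees costs $O(\log \omega)$. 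Line 14 invokes $\updateEndPoints$, whose fully-dynamic cost is $O(\Delta \log \omega)$ worst case by Lemma~\ref{lem:runtimeUpdaeSLSfullydynamic}; here each of the $|S| = O(\Delta)$ endpoints triggers an insert/delete in the Red-Black trees $Z_t$, $NZ_t$ of size at most $\omega$. Finally, line 15 calls $\OFFSET$, which is $O(\log n)$ worst case by Lemma~\ref{lem:time-greeyd-coloring}.

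Summing these contributions gives a worst-case running time of
\[
O(\log n) + O(\log n + \omega \log \omega) + O(\log n + \Delta \log \omega) + O(\Delta \log \omega) + O(\log n).
\]
Using the standard bound $\omega \leq \Delta + 1$ for interval graphs, the term $\omega \log \omega$ is absorbed into $\Delta \log \omega$, and the whole expression collapses to $O(\log n + \Delta \log \omega)$, which is the stated bound.

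The one point that needs emphasis (and which is really the only place where the fully-dynamic analysis genuinely differs from the incremental one) is that $\updateEndPoints$ is now $O(\Delta \log \omega)$ in the \emph{worst case}, rather than amortized. In the incremental case the amortized bound stemmed from the doubling of the dynamic arrays; here the Red-Black trees have size bounded by $\omega$ throughout, so every single insert or delete is paid for on the spot at cost $O(\log \omega)$. With this observation in hand, and Lemma~\ref{lem:runtimeUpdaeSLSfullydynamic} doing the heavy lifting, the remainder of the proof is just the arithmetic above.
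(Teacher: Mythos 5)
Your proof is correct and follows essentially the same approach as the paper: decompose Algorithm~\ref{alg:insert} into its subroutine calls, invoke Lemmas~\ref{lem:runtimeComputeSLSfullyDynamic}, \ref{lem:runtimeComputemaxheightfullydynamic}, and \ref{lem:runtimeUpdaeSLSfullydynamic} for the fully-dynamic running times of $\getSLS$, $\computemaxheightSLS$, and $\updateEndPoints$, and sum. Your version is a bit more explicit than the paper's (you also account for lines~1--11 and the $\OFFSET$ call, and spell out the $\omega \le \Delta + 1$ absorption and the amortized-versus-worst-case distinction), but the argument is the same.
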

\begin{proof}
In Lemma \ref{lem:runtimeComputeSLSfullyDynamic}, Lemma \ref{lem:runtimeComputemaxheightfullydynamic}, and Lemma \ref{lem:runtimeUpdaeSLSfullydynamic},   respectively, we prove that in the fully-dynamic setting, with SLS represented as Red-Black trees,  $\getSLS$, $\computemaxheightSLS$, and  $\updateEndPoints$ correctly implement Step 1, Step 2, and Step 3, of $\Insertion$ correctly.  Therefore, it follows that $\Insertion$ is implemented by Algorithm ~\ref{alg:insert} correctly.  Further, these Lemmas also show that the worst-case running times of these functions
is $O(\log n + \omega \log \omega)$, $O(\log n + \Delta \log \omega)$, $O(\Delta \log \omega)$, respectively.  Therefore, the worst-case running time, of
$\Insertion$ implemented by Algorithm \ref{alg:insert}, with SLS represented as Red-Black trees, is  $O(\log n + \Delta \log \omega)$.  Hence the Lemma.
\end{proof}
\noindent
%%%%%%%%%%%%%%%%%%%%%%%%%%%%%%%%%%%%%%%%%%%%%%%%%%%%%%%%%%%%%%%%%%%%%%%%%%%%%%%%%%%%%
%%%%%%%%%%%%%%%%%%%%%%%%%%%%%%%%%%%%%%%%%%
Our fully-dynamic algorithm for interval coloring follows by combining Lemma~\ref{thm : fully-dynamic-update-time-deletion} and Lemma~\ref{thm :fully-dynamic-update-time-insertion}.
\begin{theorem}
	\label{thm : update-time-fully-dynamic}
	There exists a fully dynamic algorithm which supports insertion of an interval in  $O(\log n + \Delta \log \omega)$ and deletion of an interval in  $O(\Delta^{2} \log n)$ worst case time. 
\end{theorem}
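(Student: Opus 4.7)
The plan is essentially a direct combination of the two preceding lemmas, so the proof proposal is short and structural. The fully dynamic algorithm I would present handles the two kinds of updates with two separate procedures: insertions via Algorithm~\ref{alg:insert} (with the supporting line segments represented by Red-Black trees $Z_t$ and ${NZ}_t$, as prescribed in Section~\ref{subsec:DSpreliminaries} for the fully dynamic setting), and deletions via Algorithm~\ref{alg:delete}. The data structures $\mathcal{I}$, $\mathcal{E}$, $T$, and the family $\{(Z_t, {NZ}_t)\}_{t \in \mathcal{E}}$ are shared across both procedures, so it suffices to verify that each of them preserves Invariant $\C$ and Property $\mathbf{P}$ and that the update times match the claim.

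First I would invoke Lemma~\ref{thm :fully-dynamic-update-time-insertion}, which states that Algorithm~\ref{alg:insert}, instantiated with the Red-Black tree representation of the SLS, correctly implements $\Insertion$ and runs in worst-case time $O(\log n + \Delta \log \omega)$. This gives the desired insertion bound directly. Next I would invoke Lemma~\ref{thm : fully-dynamic-update-time-deletion}, which states that Algorithm~\ref{alg:delete} implements $\Deletion$ in worst-case $O(\Delta^2 \log n)$ time. This gives the desired deletion bound.

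Correctness of the coloring at the end of every update sequence follows from Lemma~\ref{lem:3-competitive-fully-dynamic}, which shows inductively that both Invariant~$\C$ and Property~$\mathbf{P}$ are preserved after each insertion and each deletion, and therefore that at most $3\omega - 2$ colors are used after every update. Combining these observations yields the theorem. The only subtle point worth flagging, rather than a genuine obstacle, is that the two procedures must use consistent representations of the SLS so that the invariants carry over seamlessly from insertions to deletions and back; this has already been ensured by choosing the Red-Black tree pair $(Z_t, {NZ}_t)$ uniformly throughout the fully dynamic setting, so no additional argument is required.
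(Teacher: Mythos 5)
Your proposal is correct and matches the paper's own argument: the theorem is obtained by combining Lemma~\ref{thm :fully-dynamic-update-time-insertion} for the insertion bound with Lemma~\ref{thm : fully-dynamic-update-time-deletion} for the deletion bound, exactly as you do. Your additional appeal to Lemma~\ref{lem:3-competitive-fully-dynamic} to confirm that Invariant~$\C$ and Property~$\mathbf{P}$ are preserved (so the coloring stays within $3\omega - 2$ colors) is a reasonable and slightly more explicit account of correctness, but it does not constitute a different proof strategy.
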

%%%%%%%%%%%%%%%%%%%%%%%%%%%%%%%%%%%
%%%%%%%%%%%%%%%%%%%%%%%%%%%%%%%
\subsection{Procedures used in $\Deletion$ and $\Insertion$}
\label{subsec:procedureHandleDeleteInsert}
\noindent
The procedures $\getSLS$, $\computemaxheightSLS$, and  $\updateEndPoints$ which are defined in Section~\ref{subssec:procedures-handle-insert} are
defined in this section with Red-Black trees used to represent supporting line segments.  The worst-case running time of these procedures differ from their running times in Section~\ref{subssec:procedures-handle-insert}.
The data structures used in designing these procedures are listed in Table~\ref{table:1}.

\noindent
%For the purpose of completeness we describe the procedures again along with running time analysis. 
%A summarized description of the procedures is given in Table~\ref{table:3}. \\
%%%%%%%%%%%%%%%%%%%%%%%%%%%%%%%%%%%%%%%%%%%%%
\begin{minipage}{0.45\textwidth}
\begin{lemma}
\label{lem:runtimeComputeSLSfullyDynamic}	
Procedure $\getSLS$ takes as input the set of intervals $\mathcal{I}$ and endpoint $t$, maintains SLS at $t$ as Red-Black trees $Z_t$ and ${NZ}_t$, and takes $O(\log n + \omega \log \omega)$ time in the worst case.
\end{lemma}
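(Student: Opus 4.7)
The plan is to mirror the structure of $\getSLS$ from the incremental setting (Algorithm~\ref{alg:GET-SLS-dynamicArray}), but replace the dynamic array plus doubly linked list representation of the SLS with the Red-Black tree pair $(Z_t, NZ_t)$ described in Section~\ref{subsec:DSpreliminaries}. First, I would perform the intersection query $\mathcal{I}.\text{intersection}([t,t])$ to retrieve the set $\mathcal{I}_t$ of all intervals currently containing $t$. Since every pair of intervals in $\mathcal{I}_t$ intersects at $t$, they form a clique in $G(\mathcal{I})$ and hence $|\mathcal{I}_t| \leq \omega$. Using the interval tree bound from Table~\ref{table:1}, this query costs $O(\log n + |\mathcal{I}_t|) = O(\log n + \omega)$ in the worst case.

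Next, I would initialize $NZ_t$ as an empty Red-Black tree and iterate over $\mathcal{I}_t$, inserting $L(I)$ into $NZ_t$ for each $I \in \mathcal{I}_t$. Because $|NZ_t| \leq \omega$ at all times, each insertion costs $O(\log \omega)$, so this step runs in $O(\omega \log \omega)$ time. During this scan I would also track $h^{\max}_t = \max\{L(I) : I \in \mathcal{I}_t\}$. To populate $Z_t$ correctly according to its definition, I would then iterate $i$ from $0$ to $h^{\max}_t$, query whether $i \in NZ_t$ in $O(\log \omega)$ time, and insert $i$ into $Z_t$ otherwise. Since $h^{\max}_t \leq \omega - 1$, this loop makes at most $\omega$ iterations, each costing $O(\log \omega)$, again giving $O(\omega \log \omega)$.

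After these steps, the invariants required by Section~\ref{subsec:DSpreliminaries} hold: a level value $l \leq h^{\max}_t$ is in $NZ_t$ exactly when some interval in $\mathcal{I}_t$ has level $l$, and $Z_t$ contains every other value in $\{0,\dots,h^{\max}_t\}$; in particular, if $Z_t$ is empty then $h_t = h^{\max}_t + 1$ and otherwise $h_t = \min Z_t$. Summing the three costs yields the claimed $O(\log n + \omega \log \omega)$ worst-case running time. The only technical point requiring care is the bound $|\mathcal{I}_t| \leq \omega$, which I would use explicitly to ensure that both Red-Black trees never exceed size $\omega$ and hence that every tree operation indeed costs $O(\log \omega)$; this is the analogue, in the fully-dynamic analysis, of the argument used in the proof of Lemma~\ref{lem:runningtime-computeSLS} for the incremental case.
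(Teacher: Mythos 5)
Your proof is correct and follows essentially the same approach as the paper's: an intersection query on $\mathcal{I}$ bounded by $O(\log n + \omega)$, followed by $O(\omega)$ Red-Black tree operations each costing $O(\log \omega)$. The only cosmetic differences are that the paper's Algorithm~\ref{alg:GET-SLS} first fills $Z_t$ with all of $\{0,\dots,h_t\}$ and then moves the levels in $U$ into $NZ_t$ (rather than building $NZ_t$ first and filtering $Z_t$ by membership queries), and the paper bounds $|\mathcal{I}_t|$ via Property~\textbf{P} rather than your cleaner observation that $\mathcal{I}_t$ is a clique; both yield the same $O(\log n + \omega \log \omega)$ bound.
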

\begin{proof}
$\getSLS$ performs an intersection query on $\mathcal{I}$ with $[t,t]$ (Line 2 in Algorithm~\ref{alg:GET-SLS}). The query returns all the intervals in $\mathcal{I}$ which contain endpoint $t$. Let $\mathcal{I}_{t}$ denote the set returned by the intersection query. Set $U$ = $levels(\mathcal{I}_{t})$ and height $h_t$ = $\max(levels(\mathcal{I}_{t}))$ are computed (Line 5-8 in Algorithm~\ref{alg:GET-SLS}). For every $i$ in the range $[0,h_t]$, 
$i$ is inserted to $Z_{t}$ (Line 11-13 in Algorithm~\ref{alg:GET-SLS}). For every $i$ in the set $U$, $i$ is deleted from $Z_t$ and inserted to ${NZ}_t$ (Line 14-17 in Algorithm~\ref{alg:GET-SLS}). $\getSLS$ returns $Z_t$ and ${NZ}_t$.\par   	
Running time of $\getSLS$ is dominated by the intersection query in Line 2, and loops in Line 11-13 and Line 14-17. 	
At any level, SLS ${e_t}$ intersects with at most $2$ intervals and we have $\omega$ many levels. Hence, $|\mathcal{I}_{t}|$ = $O(\omega)$. Again, $|\mathcal{I}| \leq n$. Therefore, intersection query takes $O(\log n + \omega)$. Further, a single insertion in $Z_{t}$ and ${NZ}_{t}$ takes $O(\log \omega)$ time. Therefore, total time taken by the loops is $O(\omega \log \omega)$.
This implies that the worst case time taken by $\getSLS$ is $O(\log n + \omega \log \omega)$. Hence the Lemma.	
\end{proof}
\noindent
\begin{lemma}
\label{lem:runtimeComputemaxheightfullydynamic}	
Procedure $\computemaxheightSLS$ takes as input set of endpoints $\mathcal{E}$ and interval $I$, computes the maximum height of SLS contained in interval $I$, and takes $O(\log n + \Delta \log \omega)$ time in the worst case.
\end{lemma}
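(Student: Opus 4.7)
The plan is to mirror the structure of the analysis of $\computemaxheightSLS$ in the incremental case (Lemma~\ref{lem:runningTimeMaxHeightofSLS}), but to account for the fact that in the fully dynamic setting the SLS at each endpoint $t$ is represented by the Red-Black trees $Z_t$ and ${NZ}_t$ rather than by a dynamic array with a doubly linked list. So I would keep the same pseudocode skeleton (perform one intersection query of $I$ against $\mathcal{E}$ to obtain the set $S$ of endpoints contained in $I$, then iterate over $t \in S$ computing $h_t$ and taking a running maximum), and only replace the constant-time height lookup with the Red-Black tree lookup.

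First I would bound the intersection query cost. Since $\mathcal{E}$ is an interval tree over at most $2n$ degenerate intervals and the output size is $|S|$, the query $\mathcal{E}.\text{intersection}(I)$ takes $O(\log n + |S|)$ worst-case time. Because every endpoint in $S$ belongs to an interval intersecting $I$ and at most $\Delta$ intervals intersect $I$, we have $|S| \leq 2\Delta$, giving $O(\log n + \Delta)$ for this step.

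Next I would bound the cost of computing $h_t$ for a single $t \in S$ under the Red-Black tree representation: by the description in Section~\ref{subsec:DSpreliminaries}, $h_t$ equals $Z_t.\text{min}()$ if $Z_t$ is non-empty, and otherwise equals $1 + {NZ}_t.\text{max}()$. Using that each of $Z_t$ and ${NZ}_t$ has at most $\omega$ nodes, together with the entries of Table~\ref{table:1} for Red-Black trees, the emptiness check is $O(1)$ and the min/max query is $O(\log \omega)$ worst-case. Summing over the $O(\Delta)$ endpoints in $S$ yields $O(\Delta \log \omega)$ for the loop, and adding the intersection query cost gives the claimed bound $O(\log n + \Delta \log \omega)$.

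The main obstacle is not algorithmic but bookkeeping: I need to argue cleanly that $|S| = O(\Delta)$ despite $\mathcal{E}$ containing two endpoints per interval and despite distinct intervals possibly sharing an endpoint (so the relevant bound is that the number of \emph{distinct} endpoints inside $I$ is at most twice the number of intervals intersecting $I$, hence $O(\Delta)$), and likewise to argue that $|Z_t| + |{NZ}_t| \leq \omega + 1$ at query time so that the $\log \omega$ factor is correct rather than $\log n$. Once these two size bounds are established, the running-time estimate is just the sum of the interval-tree query cost and $|S|$ copies of the Red-Black tree min/max cost, and the Lemma follows.
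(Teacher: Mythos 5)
Your proof is correct and follows essentially the same route as the paper's: one intersection query on the interval tree $\mathcal{E}$ costing $O(\log n + |S|)$ with $|S| = O(\Delta)$, followed by an $O(\log\omega)$ Red-Black-tree min/max lookup per endpoint in $S$, giving $O(\log n + \Delta\log\omega)$ overall. The bookkeeping points you flag (that $|S|$ is bounded by twice the number of intersecting intervals, and that each of $Z_t$, ${NZ}_t$ holds at most $\omega$ keys) are exactly the facts the paper relies on, with the tree-size bound already stated in the data-structure preliminaries.
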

\begin{proof}
$\computemaxheightSLS$ works as follows (Algorithm~\ref{alg:MAX-HEIGHT-OF-SLS-IN-INTRVAL}): an intersection query is performed on $\mathcal{E}$ with $I$ (Line 2). Let $S$ be the set returned by the intersection query.
For every endpoint $t \in S$, to compute the height $h_t$ of SLS $e_t$, the following steps are used: If $Z_t$ is non empty then the minimum value in $Z_t$ is assigned to $h_t$ (Line 10). Otherwise, $h_t$ is assigned a value which is one more than the maximum value in ${NZ}_t$ (Line 8). The maximum value of the height of an SLS at any endpoint in $S$ is computed as $h$ = $\max \{h_t | t \in S\}$ (Line 12). The procedure returns the set $S$ and value $h$.\par  	
Running time of $\computemaxheightSLS$ is dominated by the intersection query in Line 2 and the loop in Line 4-13.  	
We know that $|S| \leq \Delta$ and $|\mathcal{E}| \leq 2n$.	
Therefore, worst case time taken by $\computemaxheightSLS$ is $O(\log n + \Delta \log \omega)$. Hence the Lemma.
\end{proof}

\end{minipage}
\hspace{0.6cm}
\begin{minipage}{0.45\textwidth}
%%%%%%%%%%%%%%%%%%%%%%%%%%%%%%%%%%%%%%%%%%%%%%%%%%%%%%
\begin{algorithm}[H]
	\caption{$\getSLS$($\mathcal{I}$,$t$) is used to compute the supporting line segment at endpoint $t$.}
	% In this procedure SLS is maintained using Red-Black Trees.}
	\begin{algorithmic}[1]
		\Procedure{$\getSLS$}{$\mathcal{I}$,$t$}
		\State $\mathcal{I}_t \leftarrow$  $\mathcal{I}$.intersection($[t,t]$)
		\State $h_t \leftarrow 0$
		\State{$U \leftarrow$ Initialize to empty set}
		\For{$I$ \textbf{in} $\mathcal{I}_t$}
		\State ${U}.$insert($L(I)$)
		\State $h_t \leftarrow max(h_t,L(I))$
		\EndFor
		\State $Z_t \leftarrow $ Empty Red-Black tree
		\State $NZ_t \leftarrow $ Empty Red-Black tree
		\For{$i$ \textbf{in} $\{0,1,2,...h_t\}$} 
		\State $Z_t.$insert($i$)
		\EndFor
		\For{$i$ \textbf{in} $U$.begin() \textbf{to} $U$.end()} 
		\State ${NZ}_t.$insert($i$)
		\State $Z_t.$delete($i$)
		\EndFor
		\State \Return $Z_t, {NZ}_t$
		\EndProcedure
	\end{algorithmic}
	\label{alg:GET-SLS}
\end{algorithm}

\begin{algorithm}[H]
	\caption{$\computemaxheightSLS$($\mathcal{E}$,$I$) is used to compute the set $S$ of endpoints contained in interval $I$ and compute the maximum height $h$ of all the supporting line segments at these endpoints.}
	% In this procedure SLS is maintained using Red-Black Trees.}
	\begin{algorithmic}[1]
		\Procedure{$\computemaxheightSLS$}{$\mathcal{E}$,$I$} 
		\State $S \leftarrow$  $\mathcal{E}$.intersection($I$)
		\State $h \leftarrow 0$
		\For{$t$ \textbf{in} $S$}
		\If{$Z_t$.empty() $\And$ ${NZ}_t$.empty()}
		\State $h_t \leftarrow $ 0
		\ElsIf{$Z_t$.empty()}
		\State $h_t \leftarrow {{NZ}_t}.\max()+1$
		\Else 
		\State $h_t \leftarrow {Z_t}.\min()$
		\EndIf
		\State $h \leftarrow \max(h,h_t)$
		\EndFor
		\State return $S, h$
		\EndProcedure
	\end{algorithmic}
	\label{alg:MAX-HEIGHT-OF-SLS-IN-INTRVAL}
\end{algorithm}	
\end{minipage}

%\begin{minipage}{0.5\textwidth}
%\end{minipage}
%\hspace{0.6cm}
%\begin{minipage}{0.5\textwidth}
%%%%%%%%%%%%%%%%%%%%%%%%%%%%%%%%%%%%%%%%%%%
%\end{minipage}

\noindent
%%%%%%%%%%%%%%%%%%%%%%%%%%%%%%%%%%%%%%%%%%%%%%%%%
%%%%%%%%%%%%%%%%%%%%%%%%%%%%%%%%%%%%%%%%%%%%%%%%% 
%%%%%%%%%%%%%%%%%%%%%%%%%%%%%%%%%%%%%%%%%%%%%%%%%
\begin{minipage}{.45\textwidth}
\begin{lemma}
\label{lem:runtimeUpdaeSLSfullydynamic}	
Procedure $\updateEndPoints$ takes set of endpoints $S$ and $L(I)$ as input, update the SLS at the endpoints contained in set $S$ and takes $O(\Delta \log \omega)$ time in the worst case.
\end{lemma}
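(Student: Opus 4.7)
The plan is to reduce the analysis to two bounds: (i) $|S|$ is $O(\Delta)$, and (ii) the work done at each endpoint in $S$ is $O(\log \omega)$. Multiplying the two gives the claimed worst-case bound.

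For (ii), I would first describe what the procedure does at a single endpoint $t \in S$. In the fully-dynamic representation, each endpoint $t$ stores two Red-Black trees $Z_t$ and $NZ_t$, where $NZ_t$ holds the levels currently occupied by intervals containing $t$ and $Z_t$ holds the remaining levels up to the first unoccupied one. The effect of inserting an interval $I$ with level $l = L(I)$ is that at each $t \in S$ we must ensure $l \in NZ_t$: concretely this requires at most one call to $Z_t.\text{delete}(l)$ and one call to $NZ_t.\text{insert}(l)$, following the structure already used in lines 22--23 of Algorithm~\ref{alg:delete}. The key observation is that at any endpoint $t$, the set of levels stored in $Z_t \cup NZ_t$ is a subset of $\{0,1,\ldots,\omega-1\}$, because at most $\omega$ intervals can contain $t$ simultaneously. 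Hence $|Z_t|, |NZ_t| \leq \omega$, and from Table~\ref{table:1} each Red-Black tree insertion or deletion costs $O(\log \omega)$ in the worst case. So the per-endpoint cost is $O(\log \omega)$.

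For (i), I would appeal to the same argument used in Lemma~\ref{lem:runningTimeMaxHeightofSLS}: since $\Delta$ is the maximum degree in the associated interval graph, $I$ intersects at most $\Delta$ other intervals, each of which contributes at most two endpoints to $S$, giving $|S| \leq 2\Delta = O(\Delta)$. Combining with (ii), the total time spent in the for-loop over $S$ is $O(\Delta \log \omega)$ in the worst case, which dominates the constant-time auxiliary work and completes the proof.

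The main contrast with the incremental version (Lemma~\ref{lem:runningTimeUpdateEndpoints}) is that no amortized reasoning is needed here: the dynamic-array doubling that forced an amortized analysis in the incremental setting is replaced by Red-Black trees whose operations are worst-case logarithmic. The only subtle step is confirming the $\omega$-bound on the Red-Black tree sizes, but this is immediate from the definition of $\omega$ as the maximum clique size and the fact that intervals containing a common point $t$ form a clique.
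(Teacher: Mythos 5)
Your argument is essentially identical to the paper's: bound $|S|$ by $O(\Delta)$ via the maximum degree, bound each Red-Black tree by $\omega$ nodes since levels live in $\{0,\ldots,\omega-1\}$, charge $O(\log\omega)$ per endpoint for the tree updates, and multiply to get $O(\Delta\log\omega)$. The paper's proof is in fact terser than yours, saying only that for each $t \in S$ the level $L(I)$ is deleted from $Z_t$ and inserted into $NZ_t$.

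One discrepancy is worth flagging. You model the per-endpoint work as a single $Z_t.\mathrm{delete}(L(I))$ plus $NZ_t.\mathrm{insert}(L(I))$, extrapolating from the inline updates in Algorithm~\ref{alg:delete}. However the procedure this lemma actually analyzes is Algorithm~\ref{algo:UPDATE-EDGE-POINTS}, and it first runs a while-loop that inserts every $q$ with $h_t \le q < L(I)$ into $Z_t$ before the final delete/insert pair. If $I$ is a long interval whose endpoints have widely varying SLS heights, this padding loop can iterate $\Theta(\omega)$ times at a single endpoint, pushing the per-endpoint cost to $O(\omega\log\omega)$ rather than $O(\log\omega)$. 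The paper's own proof makes exactly the same simplification and never mentions the while-loop, so your proof faithfully reproduces the paper's reasoning; but both treatments implicitly assume the padding step is not the bottleneck, and that assumption would need an additional argument (or a change to the algorithm) to justify the stated worst-case bound.
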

\begin{proof}
The procedure works as follows (Algorithm~\ref{algo:UPDATE-EDGE-POINTS}): 
for every $ t \in S$, $L(I)$ is deleted from $Z_t$ and $L(I)$ is inserted to ${NZ}_t$. For one SLS $e_t$ it takes $O(\log \omega)$ time and $|S| \leq \Delta$. Therefore, worst case time taken by $\updateEndPoints$ is $O(\Delta \log \omega)$. Hence the Lemma.
\end{proof}
\textbf{Procedure $\OFFSET$$(I)$:} This procedure is same as the one described in Section~\ref{subssec:procedures-handle-insert}.
\end{minipage}
\hspace{0.6cm}
%%%%%%%%%%%%%%%%%%%%%%%%%%%%%%%%%%%%%%%%%%%%%%%%%%%
\begin{minipage}{0.45\textwidth}
\begin{algorithm}[H]
	\caption{$\updateEndPoints$($S,L(I)$) is used to update the supporting line segments at the endpoints contained in set $S$ for level value $L(I)$. }
	%In this procedure SLS is maintained using Red-Black Trees.}
	\begin{algorithmic}[1]
		\Procedure{$\updateEndPoints$}{$S$,$L(I)$} 
		\For{$t$ \textbf{in} $S$}
		\If{$ \neg Z_t$.empty()}
		\State{$h_t \leftarrow {Z_t}.min()$}
		\EndIf
		\If{$Z_t$.empty()}
		\State{$h_t \leftarrow {{NZ}_t}.max() + 1$}
		\EndIf
		\State{$q$ = $h_t$}    
		\While{$q < L(I)$} 
		\State ${Z_t}.$insert$(q)$
		\State{ $q \leftarrow q + 1$} 
		\EndWhile
		\If{$L(I)$ \textbf{in} $Z_t$} 
		\State ${Z_t}.$delete$(L(I))$ 
		\EndIf
		\State ${{NZ}_t}.$insert$(L(I))$
		\EndFor
		\EndProcedure
	\end{algorithmic}
	\label{algo:UPDATE-EDGE-POINTS}
\end{algorithm}
\end{minipage}
%%%%%%%%%%%%%%%%%%%%%%%%%%%%%%%%%%%%%%%%%%%%%%%%%%%
%%%%%%%%%%%%%%%%%%%%%%%%%%%%%%%%%%%%%%%%%%%%%%%%%%%%%%%
%%%%%%%%%%%%%%%%%%%%%%%%%%%%%%%%%%%%%%%%%%%%%%%%%%%%%%%%%%%%%
%%%%%%%%%%%%%%%%%%%%%%%%%%%%%%%%%%%%%%%%%%%%%%%%%%%% 
%%%%%%%%%%%%%%%%%%%%%%%%%%%%%%%%%%%%%%%%%%%%%%%% 
%%%%%%%%%%%%%%%%%%%%%%%%%%%%%%%%%%%%%

% !TEX root = ../IntervalColoringFullVersion.tex
\section{Quadratic lower bound for induced neighborhood subgraph computation}
\label{sec:OMv}
\noindent
%As discussed in Section~\ref{KTO}, in the online setting for the interval coloring problem, when the $i$-th interval in the online sequence, $I_i$, is presented to the KT-algorithm for coloring, KT-algorithm does the following: it considers the interval graph $G$ formed by the intervals $\{I_1, I_2, \dots, I_{i-1}\}$ and computes the induced subgraph among the neighbors of $v_i$, where $v_i$ is the vertex corresponding to interval $I_i$ in $G$. This is a very crucial step in the KT-algorithm to compute the level value for an interval $I_i$. Our result in Section~\ref{sec:KTalgorithm-hardness} shows that this step stands as a barrier in obtaining a sub-quadratic time implementation of the KT-algorithm. This motivates us to consider the problem of computing induced subgraph among the neighbors of a given vertex in a general graph and investigate its complexity.\par 
%%%%%%%%%%%%%%%%%%%%%%%%%%%%%%%%%%%%%%%%%%%%%%%%%
In Section~\ref{sec:KTalgorithm-hardness} we showed that a direct implementation of the KT-algorithm will not run in  sub-quadratic time.  
From Section \ref{KTO} the crucial step is to compute maximum clique in an induced subgraph of the neighborhood of the interval inserted during an update.   In this section we explore an interesting connection between computing the induced subgraph of the neighborhood of a vertex in a graph and the well-known OMv conjecture due to Henzinger et al., \cite{DBLP:conf/stoc/HenzingerKNS15}.
%The subsequent results in Section \ref{} avoided this step by maintaining candidate cliques and querying them efficiently.
%In this section we show that in a general the problem of computing the induced subgraph of the closed neighborhood of a set of vertices is unlikely to have a sub-quadratic time algorithm. 
Formally, we define the following problem:\\
{\bf Induced Neighborhood Subgraph Computation: } The input to the {\em Induced Neighborhood Subgraph Computation} problem consists of the adjacency matrix $M$ of a directed graph and a set $S$ of vertices.  The goal is to compute the graph induced by $N_{out}(S) \cup S$ and output the subgraph as adjacency lists. Here $N_{out}(S)$ is the set of those vertices which have  a directed edge from some vertex in $S$.  In other words, there is a directed edge from $v_j$ to $v_k$ iff the entry $M[k][j]$ is $1$.\par 
\noindent
We show that Induced Neighborhood Subgraph Computation problem is at least as hard as the following problem.\\
\textbf{Online Boolean Matrix-Vector Multiplication (OMv)}\cite{DBLP:conf/stoc/HenzingerKNS15}: The input for this online problem consists of an $n \times n$ matrix $M$, and a sequence of $n$ boolean column vectors $v_1, \ldots, v_n$, presented one after another to the algorithm.  For each $1 \leq i \leq n-1$, the online algorithm should output $M \cdot v_i$ before $v_{i+1}$ is presented to the algorithm.  Note that in this product, a multiplication is an AND operation and the addition is an OR operation.\par
\noindent
%%%%%%%%%%%%%%%%%%%%%%%%%%%%%%%%%%%%%%%%%%%%%%%%
The current best algorithm for the OMv problem has an expected running time of $O(\frac{n^3}{2^{\sqrt{\log n}}})$ \cite{DBLP:conf/soda/LarsenW17}. The following conjecture, due to Henzinger et al., \cite{DBLP:conf/stoc/HenzingerKNS15}, is well known about the OMv problem.\\
{\bf OMv conjecture: } The Online Boolean Matrix-Vector Multiplication (OMv) problem does not have a $O(n^{3 - \epsilon})$ algorithm  for any $\epsilon > 0$.\par  
%%%%%%%%%%%%%%%%%%%%%%%%%%%%%%%%%%%%%%%%%%%%%%%%%%%%%%%
\noindent
In Theorem \ref{thm:InducedSubGraphviaOMv} we reduce OMv problem to Induced Neighborhood Subgraph Computation problem. As a conseuqence of our reduction an efficient algorithm for Induced Neighborhood Subgraph Computation problem implies an efficient algorithm for the OMv problem. 
%%%%%%%%%%%%%%%%%%%%%%%%%%%%%%%%%%%%%%%%%%%%%%%%%
\begin{theorem}
\label{thm:InducedSubGraphviaOMv}	
Any algorithm for Induced Neighborhood Subgraph Computation problem needs at least quadratic time unless OMv conjecture is false.
\end{theorem}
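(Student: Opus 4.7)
The plan is to reduce the OMv problem to the Induced Neighborhood Subgraph Computation problem, so that a subquadratic algorithm for the latter would yield a subcubic algorithm for the former, violating the OMv conjecture. Given an OMv instance consisting of an $n \times n$ boolean matrix $M$ and an online sequence of column vectors $v_1, \ldots, v_n$, I interpret $M$ as the adjacency matrix of a directed graph $G$ on vertex set $\{v_1, \ldots, v_n\}$, using the convention from the problem statement that $M[k][j] = 1$ corresponds to the directed edge $v_j \to v_k$. For each incoming vector $v_i$, I form the set $S_i = \{v_j : v_i[j] = 1\}$, i.e.\ the support of $v_i$, and invoke the assumed algorithm for Induced Neighborhood Subgraph Computation on the pair $(G, S_i)$.

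The key observation is that the product $M \cdot v_i$ is precisely the indicator vector of $N_{\out}(S_i)$: the $k$-th entry is $1$ iff there exists $v_j \in S_i$ with $M[k][j] = 1$, i.e.\ iff $v_k \in N_{\out}(S_i)$. Since the algorithm returns the subgraph induced on $N_{\out}(S_i) \cup S_i$ as adjacency lists, I can recover $N_{\out}(S_i)$ by iterating over the adjacency list of each $v_j \in S_i$ in the returned induced subgraph and taking the union of the endpoints; every out-neighbor of a vertex in $S_i$ lies in $N_{\out}(S_i)$ by definition and hence is present in the induced subgraph, so this union is exactly $N_{\out}(S_i)$. The extraction cost is bounded by the output size of the algorithm's response, which in turn is bounded by its running time.

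For the quantitative part, suppose for contradiction that the Induced Neighborhood Subgraph Computation problem admits an algorithm with running time $O(n^{2-\epsilon})$ for some $\epsilon > 0$ (after allowing the same polynomial preprocessing on $M$ that the OMv model permits). Running this algorithm once per vector $v_i$ and extracting $M \cdot v_i$ as above costs $O(n^{2-\epsilon})$ per round, hence $O(n^{3-\epsilon})$ overall for the whole OMv sequence. This contradicts the OMv conjecture, so any algorithm for Induced Neighborhood Subgraph Computation must use $\Omega(n^{2-\epsilon})$ time for every $\epsilon > 0$, i.e.\ essentially quadratic time, unless the OMv conjecture is false.

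The main obstacle I expect is making the reduction \emph{tight} in the bookkeeping: I need to verify that the conversion of $v_i$ to $S_i$ and the extraction of $M \cdot v_i$ from the adjacency-list output are each dominated by $O(n^{2-\epsilon})$, so that they do not inflate the overall time above $O(n^{3-\epsilon})$. The conversion step is linear in $n$, and the extraction is linear in the output size of the Induced Neighborhood algorithm, so both are subsumed by the assumed subquadratic query time; once this is noted carefully, the reduction goes through cleanly.
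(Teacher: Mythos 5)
Your proposal is correct and follows essentially the same reduction as the paper: interpret $M$ as a directed adjacency matrix, map each query vector $v_i$ to its support set $S_i$, call the Induced Neighborhood Subgraph algorithm on $(M, S_i)$, and read off $M\cdot v_i$ from the returned adjacency lists in time bounded by the algorithm's output size. The only cosmetic difference is how you extract the answer (you union out-neighbor lists of $S_i$; the paper checks each $v_j$ for an in-neighbor in $S_i$), which amounts to the same edge scan.
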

\begin{proof}
We show that an algorithm to solve the Induced Neighborhood Subgraph Computation problem can be used to solve the Online Boolean Matrix-Vector Multiplication problem.  
Let $\mathcal{A}$ be an algorithm for the Induced Neighborhood Subgraph Computation problem  with the running time of $\mathcal{A}$ being $O(n^{2-\epsilon})$, for some $\epsilon > 0$. We use algorithm $\mathcal{A}$ to solve the Online Boolean Matrix-Vector Multiplication problem in $O(n^{3-\epsilon})$ time as follows : Let $M$ be the input matrix for the Online Boolean Matrix-Vector Multiplication problem and let $V_1, \ldots V_n$ be the column vectors presented to the algorithm one after the other.  For the column vector $V_i$, let set $S_i = \{v_j | V_i[j] = 1, 0\leq j\leq n-1\}$.  To compute $M \cdot V_i$, we invoke $\mathcal{A}$ on input $\{M, S_i\}$. Let $G_{S_i}$ denote the induced subgraph  on $N_{out}(S_i) \cup S_i \subseteq V$ computed by the algorithm $\mathcal{A}$.  
Note that $G_{S_i}$ is an induced subgraph of the directed graph whose adjacency matrix is $M$. To output the column vector $M \cdot V_i$, we observe that the $j$-th row in the output column vector is 1 if and only if $v_j \in G_{S_i}$ and there is an edge $(u, v_j)$ in $G_{S_i}$ such that $u \in S_i$. %Therefore, 
Given that $G_{S_i}$ has been computed in $O(n^{2-\epsilon})$ time, it follows that the number of edges in $G_{S_i}$ is $O(n^{2-\epsilon})$ and consequently the column vector $M \cdot V_i$ can be computed in $O(n^{2 - \epsilon})$ time.   
Therefore, using the $O(n^{2-\epsilon})$ algorithm $\mathcal{A}$ we can solve Boolean Matrix-Vector Multiplication problem in $O(n^{3 - \epsilon})$ time.  %Therefore, 
If we believe that the OMv conjecture is indeed true, then it follows that the Induced Neighborhood Subgraph Computation problem cannot have an $O(n^{2-\epsilon})$ algorithm for any $\epsilon > 0$. Hence the Theorem.
\end{proof}  
%%%%%%%%%%%%%%%%%%%%%%%%%%%%%%%%%%%%%%%%%%%%%%%%
%Given an interval $I = [l,r]$, $I$ can be interpreted as a vector $v$ of length $n$, where $n \geq r$, and the entries of vector $v$ as $v[i][1] = 1$ 
%for $l \leq i \leq r$ and $0$ in all other positions. In other words, any interval $I = [l,r]$ can be visualized as a vector of length $n \geq r$ with consecutive ones from position $l$ to $r$. Using this interpretation we present our final result in Section~\ref{subsec:OMvWithConsecutiveOnes} for a special case of the OMv problem where the input matrix and the vectors in the online sequence have consecutive ones property.
%%%%%%%%%%%%%%%%%%%%%%%%%%%%%%%%%%%%%%%%%%%%%%%%
\subsection{OMv conjecture is false for instances with the consecutive ones property}
\label{subsec:OMvWithConsecutiveOnes}
\noindent
A 0-1 matrix is said to have the consecutive ones property if in each row, the column indices which have a 1 form an interval.  
A  0-1 column vector satisfies the consecutive ones property if the row indices which have a 1 in the column form an interval.
We consider a special case of the OMv problem where the input matrix $M$ and the sequence of online vectors $\{v_1, v_2, \dots, v_n\}$ satisfy consecutive ones property.   Each row in the matrix $M$ corresponds to an  interval and every column index is a point on the number line.
%For simplicity of presentation, we assume that $M$ is a square matrix of size $n \times n$ and hence every vector in the sequence is of size $n \times 1$. We use the following interpretation.\par
%%%%%%%%%%%%%%%%%%%%%%%%%%%%%%%%%%%%%%%%%%%% 
In particular, in the $i$-th row if  $l$ and $r$ are the least and largest column index, respectively, such that $M[i][l] = M[i][r] = 1$,  then the $i$-th row corresponds to the  interval $I_i = [l,r]$.
 %Further, we interpret every vector $v_j$, where $j$ is in the range $1$ to $n$, in the online vector sequence as an interval.
  For each $1 \leq j \leq n$, if $p$ and $q$ are the least and largest indices in $v_j$ such that $v_j[p] = v_j[q] = 1$ then the vector $v_j$ is interpreted as the interval $I_{v_j} = [p, q]$.\par
  \noindent
  Now, using the data structures described in Table \ref{table:1} in Section~\ref{subsec:DSpreliminaries} we design an algorithm to solve OMv problem in quadratic time for this special case. 
%%%%%%%%%%%%%%%%%%%%%%%%%%%%%%%%%%%%%%%%%%%%%%%%
\begin{theorem}
\label{thm:OMvConjConsOnes}
OMv conjecture is false if the input matrix and the vectors in the online vector sequence have the consecutive ones property.
\end{theorem}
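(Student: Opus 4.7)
The plan is to reinterpret the OMv instance with consecutive ones as an interval intersection problem and then solve it using the interval tree data structure described in Table~\ref{table:1}.

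First, I would make the following observation. Since each row of $M$ has consecutive ones, row $i$ corresponds to the interval $I_i = [l_i, r_i]$ where $l_i$ and $r_i$ are the smallest and largest column indices with a $1$ in row $i$ (treat an all-zero row as the empty interval). Similarly, each online vector $v_j$ corresponds to an interval $I_{v_j} = [p_j, q_j]$. The key identity is that
\[
(M \cdot v_j)[i] \;=\; \bigvee_{k=1}^{n} \bigl(M[i][k] \wedge v_j[k]\bigr) \;=\; 1 \iff I_i \cap I_{v_j} \neq \emptyset,
\]
because both $I_i$ and $I_{v_j}$ are intervals, so they share a common index $k$ iff they intersect geometrically. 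Thus computing $M \cdot v_j$ reduces to reporting all row intervals $I_i$ that intersect the query interval $I_{v_j}$.

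Next I would give the algorithm. In the preprocessing phase, scan the matrix $M$ once to read off the $n$ row intervals and insert them one by one into an interval tree $\mathcal{T}$; by Table~\ref{table:1} this costs $O(n \log n)$ time. When the $j$-th online vector $v_j$ arrives, compute $I_{v_j}$ by a single scan (or, if the vectors are given in an appropriate representation, in $O(1)$ time), then perform $\mathcal{T}.\mathrm{intersection}(I_{v_j})$. This returns the set $S_j$ of row intervals meeting $I_{v_j}$ in time $O(\log n + |S_j|)$. Finally, initialize an output vector of $n$ zeros and set entry $i$ to $1$ for each $I_i \in S_j$; this costs $O(n)$ time, which dominates since $|S_j| \le n$.

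For the total running time, preprocessing takes $O(n \log n)$ and each of the $n$ online queries takes $O(n)$, so the overall time is $O(n^2)$. Since $n^2 = O(n^{3 - \epsilon})$ for $\epsilon = 1 > 0$, this contradicts the OMv conjecture in this restricted setting, establishing the theorem. The ``hard part'' is not really a computational one: the entire content of the argument is the reduction of consecutive-ones matrix-vector multiplication to interval intersection reporting. Once that identification is made, the interval tree does all the work, and one only needs to check that the per-query $O(n)$ cost of writing out the answer vector cannot be avoided and does not exceed the target $O(n^{3-\epsilon})$ budget.
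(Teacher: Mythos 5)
Your proposal is correct and follows essentially the same route as the paper: reinterpret rows of $M$ and each query vector $v_j$ as intervals via the consecutive ones property, maintain the row intervals in an interval tree, answer each query by an intersection query plus an $O(n)$ write-out of the answer vector, giving $O(n^2)$ total and refuting the conjecture for this restricted class. One small bookkeeping slip: you credit preprocessing with cost $O(n\log n)$, but merely reading the $n \times n$ matrix to extract the row intervals already costs $O(n^2)$ (as the paper notes); this does not change the final $O(n^2)$ bound since the query phase alone is $\Theta(n^2)$.
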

%%%%%%%%%%%%%%%%%%%%%%%%%%%%%%%%%%%%%%%%%%%%%%%%
\begin{proof}
%
%We design a quadratic time algorithm for the special case of the OMv problem where the input matrix $M$  and the vectors in the online vector sequence $\{v_1, v_2, \dots, v_n\}$ satisfy consecutive ones property. Thus we prove that OMv conjecture is false for this special case of the OMv problem.\par
The proof is by presenting an algorithm to solve the OMv problem.  The algorithm has a preprocessing step in which the intervals corresponding to the rows of the matrix are maintained in an interval tree.   Subsequently, the matrix vector product is computed using queries to the interval tree. \\
{\bf Preprocessing step: } The interval corresponding to the rows in $M$ are computed.
 Let $\R$ = $\{I_1, I_2, \dots, I_n\}$ denote the set of intervals corresponding to the rows in $M$. Computing the  set $\R$ takes $O(n^2)$ time. An interval tree $T_{\R}$ is constructed using the set $\R$. The construction of $T_{\R}$ takes $O(n \log n)$ time. Therefore, total time required in the preprocessing step is  $O(n^2)$. \\
%%%%%%%%%%%%%%%%%%%%%%%%%%%%%%%%%%%%%
%After the preprocessing step, vectors in the sequence are presented one by one to the algorithm.\\
{\bf Computing $M \cdot v_j$: } For $1 \leq j \leq n$, when vector $v_j$ is presented, interval $I^j$ corresponding to $v_j$ is computed in $O(n)$ time. 
Let $H$ be the set returned by the intersection query $T_{\R}.intersection(I^j)$. Since, $|H| \leq n$, from Table \ref{table:1} the time required by the query is $O(\log n + n)$.
% Let $v_{\out}$ denote the output vector of the multiplication $M \cdot v_j$. 
$v_{\out}
=  M \cdot v_j$ is now computed as follows: for each $1 \leq i \leq n$, if interval $I_i$ is present in $H$ then the $i$-th position in the vector $v_{\out}$ is set to $1$, otherwise $0$. Thus $v_{\out}$ can be computed in $O(n)$ time. Therefore, total time required for computing $M \cdot v_j$ is $O(n)$. Thus the OMv problem on such instances can be solved in time $O(n^2)$. Hence the Theorem. 
%
%%%%%%%%%%%%%%%%%%%%%%%%%%%%%%%%%%%%%%%
%Therefore, total time required by the algorithm to solve the special case of the OMv problem is the sum total of the time required for the preprocessing step and the time required for computing $M \cdot v_j$ for every $j$ in the range $1$ to $n$. That is, $O(n^2)$ + $n \times O(n)$ = $O(n^2)$.\par
%%%%%%%%%%%%%%%%%%%%%%%%%%%%%%%%%%%%%%%%%%%
%Therefore, if the input matrix and the vectors in the online sequence have consecutive  ones property then OMv problem can be solved in quadratic time falsifying the OMv conjecture. 
\end{proof}
\vspace{1cm}

%--------------------------------------------------------------------

\bibliography{References}
\bibliographystyle{elsarticle-num}

\end{document}